\newenvironment{definition}{\begin{defi}}{\end{defi}}
\newenvironment{lemma}{\begin{lem}}{\end{lem}}
\newenvironment{theorem}{\begin{thm}}{\end{thm}}
\newenvironment{proposition}{\begin{prop}}{\end{prop}}
\definecolor{shadecolor}{rgb}{0.9,0.9,0.9}
\def\doi{7 (1:11) 2011}
\begin{document}

\title[Psi-calculi: a framework for mobile processes with nominal data and logic]{Psi-calculi:\\ a framework for mobile processes\\ with nominal data and logic}
\author[J.~Bengtson]{Jesper Bengtson}
\address{Department of Information Technology, Uppsala University, Sweden}
\email{\{jesper.bengtson,magnus.johansson,joachim.parrow,bjorn.victor\}@it.uu.se}

\author[M.~Johansson]{Magnus Johansson}
\address{\vskip-6 pt}

\author[J.~Parrow]{Joachim Parrow}
\address{\vskip-6 pt}

\author[B.~Victor]{Bj{\"o}rn Victor}
\address{\vskip-6 pt}

\date{\today}

\keywords{pi-calculus, nominal sets, bisimulation, operational semantics, theorem prover}
\subjclass{F.1.2, F.3.1, F.3.2}

\newcommand{\ALERT}[1]{\textcolor{red}{#1}}
\begin{abstract}
The framework of psi-calculi extends the pi-calculus with nominal datatypes for
data structures and for logical assertions and conditions. These can be
transmitted between processes and their names can be statically scoped as in the
standard pi-calculus.
Psi-calculi can capture the same phenomena as other proposed extensions of the pi-calculus such as the applied pi-calculus, the spi-calculus, the fusion calculus, the concurrent constraint pi-calculus, and calculi with polyadic communication channels or pattern matching. Psi-calculi can be even more general, for example by allowing structured channels, higher-order formalisms such as the lambda calculus for data structures, and predicate logic for assertions.

We provide ample comparisons to related calculi and discuss a few significant applications. Our labelled operational semantics and definition of bisimulation is straightforward, without a  structural congruence. We establish minimal requirements on the nominal data and logic in order to prove general algebraic properties of psi-calculi, all of which have been checked in the interactive theorem prover Isabelle.
Expressiveness of psi-calculi significantly exceeds that of other formalisms, while the purity of
the semantics is on par with the original pi-calculus.
\end{abstract}
\maketitle

\section{Introduction}
\label{sec:introduction}

The pi-calculus~\cite{milner.parrow.ea:calculus-mobile} has a multitude of extensions where  
higher-level data structures and operations on them are given as
primitive. To mention only two there are the
spi-calculus by Abadi and Gordon~\cite{abadi.gordon:calculus-cryptographic} focusing on cryptographic primitives, and the applied pi-calculus of Abadi and Fournet~\cite{abadi.fournet:mobile-values} where agents can introduce statically scoped  aliases of names for data, used e.g. to express how knowledge of an encryption is restricted.
It is also parametrised by an arbitrary signature for expressing data and an
equation system for expressing data equalities. The impact of these enriched
calculi is considerable with hundreds of papers applying or developing the
formalisms. As Abadi and Fournet rightly observe there is a trade-off between
``purity", meaning the simplicity and elegance of the original pi-calculus, and
modelling convenience; expressing complicated schemes in the original
pi-calculus can simply become too gruesome and error prone.

But the modelling convenience of many high-level primitives comes at a price. The theory of the formalism may instead become gruesome and error prone, and it can be difficult to assess the effects of modifications to it. 

Our contribution in this paper is to define psi-calculi: a framework where a range of calculi
can be formulated with a lean and symmetric semantics, and where proofs can be conducted using straightforward induction 
without the complications of stratified process definitions, 
structural congruence or explicit quantification of contexts.
We claim to be the first to formulate such truly compositional labelled
operational semantics for calculi of this calibre. Psi-calculi accommodate pi-calculus extensions such as
the spi-calculus, the applied pi-calculus, 
fusion~\cite{gardner.wischik:explicit-fusions},
concurrent constraints~\cite{buscemi.montanari:cc-pi},
and pi-calculus with polyadic synchronisation~\cite{carbone.maffeis:expressive-power}.

The main idea is that a psi-calculus is obtained by extending the basic untyped
pi-calculus with three parameters. The first is 
a set of data terms which can function as both
communication channels and communicated objects.
The second is a set of conditions, for use in conditional constructs such as {\bf if} statements. The third is
a set of assertions, used to express e.g. constraints or aliases, which can resolve the conditions. These sets need not be disjoint, and one of our main results is to identify minimal requirements on them.  They 
turn out to be quite general and natural.

Psi-calculi go beyond previous work on extending
pi-calculus since we allow
arbitrary assertions (and not only declarations of aliases), and
arbitrary conditions (and not only equality tests). Also, we base our exposition
on nominal datatypes and these accommodate e.g. alpha-equivalence classes of
terms with binders. For example, we can use a higher-order logic for assertions
and conditions, and higher-order formalisms such as the lambda calculus for data
terms and channels.
Thus we get the best of two worlds: expressiveness significantly exceeds that of the applied pi-calculus, while the ``purity'' of the semantics is on par with the original pi-calculus.

The straightforward definitions make our 
proofs
suitable for checking in a theorem prover. We have implemented our
framework in Isabelle~\cite{nipkow:isabelle} using its nominal data\-type
package~\cite{U07:NominalTechniquesInIsabelleHOL}, also known as Nominal Isabelle, and proved the algebraic properties of bisimilarity~\cite{BengtsonParrowTPHOLs2009}. This gives us absolute certainty of general results for a large class of calculi --- at least to the point of the current state of the art for machine checked proofs.

In the next section we give the basic definitions of the syntax and
semantics of psi-calculi. In Section~\ref{sec:expressiveness} we relate to other work
and demonstrate the expressiveness by showing how a variety of calculi
can be formulated. Section~\ref{sec:applications} contains more substantial examples on frequency hopping spread spectrum, multiple local services with a common global name, and cryptographic mechanisms including the Diffie-Hellman key agreement protocol. In Section~\ref{sec:bisimilarity} we introduce a
notion of bisimilarity, establish the expected algebraic results
about it, and demonstrate the proof of the most difficult parts. In Section~\ref{sect:formalisation} we discuss the full formalisation and implementation in Isabelle.
Finally Section~\ref{sec:further} concludes with ideas for further work.

This article extends \cite{DBLP:conf/lics/BengtsonJPV09}
by additional explanations, examples, and proofs, and a more strict
formalisation of some comparisons to related calculi. We are very grateful to
the three anonymous referees for many suggestions of improvements.

\section{Definitions}\label{sec:definitions}

\subsection{Nominal datatypes}
\label{sec:nominal}
We base psi-calculi on nominal datatypes. A reader unfamiliar with these need
not fear: we shall provide what little background is needed and be generous with
examples.  A traditional datatype can be built from a signature of constant
symbols, functions symbols, etc. A nominal datatype is more general, for
example it can also contain binders and identify alpha-variants of terms.
Formally a nominal datatype is not required to be built in any particular way;
the only requirements are related to the treatment of the atomic symbols called
names as explained below.

As usual we assume a countably infinite set of atomic {\em names} $\nameset$ ranged over by $a,\ldots,z$. Intuitively, names will represent the symbols that can be statically scoped, and also represent symbols acting as variables in the sense that they can be subjected to substitution. A typed calculus would distinguish names of different kinds but our account will be untyped. A typing may contribute to clarity of expressions but it is not necessary for our results.

A {\em nominal set}~\cite{PittsAM:nomlfo-jv,Gabbay01anew} is a set equipped with {\em name swapping} functions written $(a\;b)$, for any names $a,b$. An intuition is that for any member $X$ it holds that $(a\;b)\cdot X$ is $X$ with  $a$ replaced by $b$ and  $b$ replaced by $a$. Formally, a name swapping is any function  satisfying certain natural axioms such as \mbox{$(a\;b)\cdot ((a\;b)\cdot X) = X$}.
One main point of this is that  even though we have not defined any particular syntax we can define what it means for a name to ``occur" in an element: it is simply that it can be affected by swappings. 
The names occurring in this way  in an element $X$ constitute the  {\em support} of $X$, written $\n(X)$.
We write $a \freshin X$, pronounced ``$a$ is fresh for $X$", for $a \not\in
\n(X)$. In an inductively defined datatype without binders we will have
$a\freshin X$ if $a$ does not occur syntactically in $X$. In for example the
lambda calculus where alpha-equivalent terms are identified (i.e. the elements
are alpha-equivalence classes of terms) the support corresponds to the free
names. If $A$ is a set or a sequence of names we write $A\freshin X$ to mean
$\forall a \in A \;.\; a \freshin X$.

We require all elements to have finite support, i.e., $\n(X)$ is finite for all
$X$. It follows that for any $X$ there are infinitely many $a$ such that $a
\freshin X$. Some elements will have empty support, a prime example is the
identity function in the lambda calculus, or a term of a traditional datatype
not containing any names. 
A function $f$ is \emph{equivariant} if $(a\;b)\cdot f(X)
= f((a\;b)\cdot X)$ holds for all $X$, and similarly for functions and relations of any
arity. Intuitively, this means that all
names are treated equally.

A {\em nominal datatype} is a nominal set together with a set of equivariant
functions on it. In particular we shall consider substitution functions that
substitutes elements  for  names. If $X$ is an element of a datatype,
$\tilde{a}$ is a sequence of names without duplicates and $\tilde{Y}$ is an
equally long sequence of elements of possibly another datatype, the {\em
substitution}
$X\lsubst{\tilde{Y}}{\tilde{a}}$ is an element of the same datatype as $X$.
In a traditional datatype
substitution can be thought of as replacing all occurrences of names $\tilde{a}$
by $\tilde{Y}$. In a calculus with binders it can be thought of as replacing the
free names, alpha-converting any binders to avoid capture.

For the purpose of psi-calculi it turns out that we need not define exactly what a substitution does.
The only formal requirements are that substitution is an equivariant function that 
satisfies two substitution laws:

\bigskip
\begin{tabular}{@{}ll}
1: & if $\; \vec{a}\subseteq\n(X)$ and $b \in \n(\vec{T})$ then $b \in \n(X[\vec{a}:=\vec{T}])$

\\
2: &if $\vec{b} \freshin X, \vec{a}$ then $X[\vec{a}:=\vec{T}] = ((\vec{b}\;\vec{a})\cdot X)[\vec{b}:=\vec{T}]$
\end{tabular}
\bigskip

\noindent
Law 1 says that substitutions may not lose names: any name $b$ in the objects
$\vec{T}$ that substitute for names $\vec{a}$ occurring in $X$  must also appear
in the substitution $X[\vec{a}:=\vec{T}]$. Law~2 is a form of alpha-conversion
for substitutions; here it is implicit that $\vec{a}$ and $\vec{b}$ have the
same length, and $(\vec{a}\; \vec{b})$ swaps each element of $\vec{a}$ with the
corresponding  element of $\vec{b}$.  At the end of
Section~\ref{sec:transitions} we shall motivate why these laws are necessary.
 
 \paragraph{Example:}
Consider an inductively defined datatype without binders, where the support is
the set of names that occur syntactically, and substitution is the syntactic
replacement of names for terms, defined inductively in the usual way. The
arguments that this substitution function satisfies our requirements are
straightforward. Equivariance and Law 2 follow  immediately by induction. For
Law 1,
suppose $\vec{a} \subseteq \names{X}$. This means that all elements of $\vec{a}$ occur syntactically in $X$. Suppose $b \in \names{\vec{T}}$. This means that for some $i$, $b \in \names{T_i}$. This means that $b$ occurs syntactically in $T_i$. Consider the corresponding $a_i$. We know $a_i$ occurs syntactically in $X$. So then by definition $T_i$ occurs syntactically in $X[\vec{a}:=\vec{T}]$. Therefore $b$ occurs syntactically in that term, and by definition is in the support of it.

The main point of using nominal datatypes is that we obtain a general framework,
allowing many different instantiations.
Our only requirements are on the notions of support, name swapping, and substitution. This corresponds precisely to the essential ingredients for data transmitted between agents. Since names can be statically scoped and data sent into and out of scope boundaries, it must be possible to discern exactly what names are contained in what data items, and this is just the role of the support. In case a data element intrudes a scope, the scoped name needs to be alpha converted to avoid clashes, and name swapping can achieve precisely this. When a term is received in a communication between agents it must replace all occurrences of the placeholder in the input construct, in other words, the placeholder is substituted by the term. 

Since these are the only things we assume about data terms we can handle
datatypes that are not inductively defined, such as equivalences classes and
sets defined by comprehension or co-induction. Examples include higher-order
datatypes such as the lambda calculus. As long as it
satisfies the axioms of a nominal datatype
it can be used in our framework. Similarly, the notions of conditions, i.e., the
tests on data that agents can perform during their execution, and assertions,
i.e. the facts that can be used to resolve conditions, are formulated as nominal
datatypes. This means that logics with binders and even higher-order logics can
be used. Moreover, alpha-variants of terms can be formally equated by taking the
quotient of terms under alpha equality, thereby facilitating the formalism and
proofs.

\subsection{Terms, conditions, and assertions}

\newcommand{\terms}{{\rm\bf T}}
\newcommand{\conditions}{{\rm\bf C}}
\newcommand{\assertions}{{\rm\bf A}}

Formally, a psi-calculus is defined by instantiating three nominal datatypes and
four operators:
\begin{definition}[Psi-calculus parameters]
\label{def:parameters}
A psi-calculus requires the three (not necessarily disjoint) nominal datatypes:
\[\begin{array}{ll}
\terms & \mbox{the (data) terms, ranged over by $M,N$} \\
\conditions  & \mbox{the conditions, ranged over by $\varphi$}\\
\assertions & \mbox{the assertions, ranged over by $\Psi$}
\end{array}\]
and the four equivariant operators:
\[\begin{array}{ll}
\sch:  \terms \times \terms \to \conditions & \mbox{Channel Equivalence} \\
\ftimes: \assertions \times \assertions \to \assertions& \mbox{Composition} \\
\emptyframe: \assertions& \mbox{Unit} \\
\vdash\,\subseteq \assertions \times \conditions & \mbox{Entailment}
\end{array}
\]
and substitution functions $\lsubst{\ve{M}}{\ve{a}}$, substituting terms for names, on all of {\bf T}, {\bf C} and {\bf A}.
\end{definition}

As an example, we can choose data terms inductively generated by some signature, assertions and conditions to be elements of a first-order logic with equality over these terms, entailment to be logical implication, $\ftimes$ to be conjunction and $\emptyframe$ to be {\sc true}. 

The binary functions above will be written in infix. Thus, if $M$ and $N$ are terms then $M \sch N$ is a condition, pronounced ``$M$ and $N$ are channel equivalent" and if $\Psi$ and $\Psi'$ are assertions then so is $\Psi \ftimes \Psi'$. Also we write $\Psi \vdash \varphi$, pronounced ``$\Psi$ entails $\varphi$", for $(\Psi, \varphi) \in \;\vdash$.

The data terms are used to represent all kinds of data, including communication channels.
Intuitively, two agents can communicate if one sends and the other receives along the same channel.
This is why we require a condition
 \mbox{$M\sch N$}  to say that $M$ and $N$ represent the same communication channel.  For example, in the pi-calculus $\sch$ is just identity of names.

The assertions will be used to declare information necessary to resolve the conditions. Assertions can be contained in agents and represent constraints; they can contain names and thereby be syntactically scoped and represent information known only to the agents within that scope. 
The operator $\ftimes$ on assertions will, intuitively, be used to represent conjunction of the information in the assertions.  The assertion $\emptyframe$ is the unit for $\ftimes$.
The intuition of entailment is that $\Psi \vdash \varphi$ means that given the information in $\Psi$, it is possible to infer $\varphi$.
We say that two assertions are equivalent if they entail the same conditions:

\begin{definition}[assertion equivalence]
Two assertions are {\em equivalent}, written $\Psi \sequivalent \Psi'$, if for all $\varphi$ we have that $\Psi \vdash \varphi \Leftrightarrow\Psi' \vdash \varphi$.
\end{definition}

We can now formulate our requisites on valid psi-calculus parameters:

\begin{definition}[Requisites on valid psi-calculus parameters]
\label{def:entailmentrelation}
\
\begin{mathpar}
\begin{array}{ll}
\mbox{Channel Symmetry:} & \Psi \vdash M \sch N \quad \Longrightarrow\quad \Psi \vdash N \sch M \\
\mbox{Channel Transitivity:} & \Psi \vdash M \sch N \; \land \; \Psi \vdash N \sch L
 \quad \;\Longrightarrow\quad \Psi \vdash M \sch L\\

\mbox{Compositionality:} & \Psi \sequivalent \Psi'  \quad\Longrightarrow\quad \Psi \ftimes \Psi'' \sequivalent \Psi' \ftimes \Psi''\\
\mbox{Identity:} & \Psi \ftimes \emptyframe \;\sequivalent\; \Psi \\
\mbox{Associativity:}& (\Psi \ftimes \Psi') \ftimes \Psi'' \;\sequivalent\; \Psi \ftimes (\Psi' \ftimes \Psi'')\\
\mbox{Commutativity:}&  \Psi \ftimes \Psi' \;\sequivalent\; \Psi' \ftimes \Psi \\

\end{array}
\end{mathpar}
\end{definition}
\noindent
Our requisites on a psi-calculus are that the channel equivalence is a partial equivalence relation, that $\ftimes$ is compositional, and that the equivalence classes of assertions form an abelian monoid. In Section~\ref{sec:illustrativeexamples} below we will demonstrate that all requisites in Definition~\ref{def:entailmentrelation} are essential.

Note that channel equivalence is not required to be reflexive. Thus it is possible to have data terms that are not channel equivalent to anything at all, meaning that they cannot be used as channels. Also, note that properties such as weakening \mbox{($\Psi \vdash \varphi \Rightarrow \Psi \ftimes \Psi' \vdash \varphi$)} and idempotence ($\Psi \ftimes \Psi \sequivalent \Psi$) are not required. This means that we can in principle represent non-monotonic logics as well as logics to represent resource use, although these avenues remain yet unexplored. A main point of our work is to identify minimal requisites for the formal results on bisimilarity to hold, and here neither weakening nor idempotence turns out to be necessary.

\subsection{Frames}
\label{sec:frames}
Assertions can contain information about names, and names can be scoped using the familiar pi-calculus operator $\nu$. For example, in a cryptography application an assertion $\Psi$ could be that the a datum represents the encoding of a message using a key $k$. This $\Psi$ can occur under the scope of $\nu k$, to signify that the key is known only locally. In order to admit this in a general way we use the notion of a frame, first introduced by Abadi and 
Fournet~\cite{abadi.fournet:mobile-values}. Basically, a frame is just an assertion with additional information about which names are scoped. The example above where $\Psi$ occurs under the scope of $k$ will be written
$\framepair{k}{\Psi}$, to signify a frame consisting of the assertion $\Psi$ where the name $k$ is local.

In the following $\tilde{a}$ means a finite (possibly empty) sequence of names, $a_1,\ldots,a_n$. The empty sequence is written $\epsilon$ and the concatenation of $\tilde{a}$ and $\tilde{b}$ is written $\tilde{a} \tilde{b}$.
When occurring as an operand of a set operator, $\tilde{a}$ means the corresponding set of names $\{a_1,\ldots, a_n\}$. We also use sequences of terms, conditions, assertions etc. in the same way.

\begin{definition}[Frame]
\label{def:frame}
A {\em frame} is of the form   $\framepair{\frnames{}}{\frass{}}$ where
$\frnames{}$ is a sequence of names that bind into the assertion 
$\frass{}$. We identify alpha variants of frames.\footnote{In some presentations frames have been written just as pairs $\langle \frnames{},\frass{} \rangle$. The notation in this paper better conveys the idea that the names bind into the assertion, at the slight risk of confusing frames with agents. Formally, we establish frames and agents as separate types, although a valid intuition is to regard a frame as a special kind of agent, containing only scoping and assertions. This is the view taken in~\cite{abadi.fournet:mobile-values}. }

\end{definition}

We use $F,G$ to range over frames. Since we identify alpha variants we can always choose the bound names freely.

Notational conventions: We write just $\Psi$ instead of $(\nu\epsilon)\Psi$ when there is no risk of confusing a frame with an assertion, and $\ftimes$ to mean composition on frames
defined by $\framepair{\frnames{1}}{\frass{1}} \ftimes
\framepair{\frnames{2}}{\frass{2}} = 
\framepair{\frnames{1} \frnames{2}}{\frass{1} \ftimes \frass{2}}$ where
$\frnames{1}$ $\freshin$ $\frnames{2},\frass{2}$ and vice versa. We
write 
$(\nu c)(\framepair{\frnames{}}{\frass{}})$ to mean $\framepair{c\frnames{}}{\frass{}}$.

Intuitively a condition is entailed by a frame if it is entailed by the assertion and does not contain any names bound by the frame. 
Two frames are equivalent if they entail the same conditions:
\begin{definition}[Equivalence of frames]\label{def:frame-equivalence}
We define $F \vdash \varphi$ to mean that there exists an alpha variant $\framepair{\frnames{}}{\frass{}}$ of $F$ such that  $\frnames{}
\freshin \varphi$ and $\frass{} \vdash \varphi$. We also define 
$F\sequivalent G$ to mean that for all $\varphi$ it holds that $ F \vdash
\varphi$ iff $ G \vdash \varphi$.
\end{definition}
For example $(\nu a b)\Psi \sequivalent (\nu b a)\Psi$, and  if $a \freshin \Psi$ then $(\nu a)\Psi \sequivalent \Psi$.

\label{sec:first-crypto-example}
To take an example of first-order logic with equality, assume that the term $\sym{enc}{M,k}$ represents the encoding of message $M$ with key $k$.
Let $\Psi$ be the assertion $C = \sym{enc}{M,k}$, stating that the ciphertext $C$ is the result of encoding $M$ by $k$. If an agent contains this assertion the environment of the agent will be able to use it to resolve tests on the data, in particular to infer that $C = \sym{enc}{M,k}$. In other words, if the environment receives $C$ it can test if this is the encryption of $M$. In order to restrict access to the key $k$ it can be enclosed in a scope $\nu k$. 
 The environment of the agent will then have access to the frame $(\nu k)\Psi$ rather than $\Psi$ itself. This frame is much less informative, for example it does {\em not} hold that 
$(\nu k)\Psi \vdash C = \sym{enc}{M,k}$. Here great care has to be made to formulate the class of allowed conditions. If these only contain equivalence tests of terms, $(\nu k)\Psi$ will entail nothing but tautologies and be equivalent to $\emptyframe$. But if quantifiers are allowed in the conditions, then by existential introduction $\Psi \vdash \exists k. (C = \sym{enc}{M,k})$, and since this condition has no free $k$ we get $(\nu k)\Psi \vdash \exists k. (C = \sym{enc}{M,k})$. In other words the environment will learn that $C$ is the encryption of $M$ for some key $k$. We shall return to examples related to cryptography in Section~\ref{sect:api}.

Most of the properties of assertions carry over to frames. Channel symmetry and channel transitivity, identity, associativity and commutativity all hold, but  compositionality in general does not. In other words, there are psi-calculi with frames $F,G,H$ where
$F \sequivalent G$ but not $F \ftimes H \sequivalent G \ftimes H$. An example is if there are assertions $\Psi$, $\Psi'$ and $\Psi_a$ for all names $a$, conditions $\varphi'$ and $\varphi_a$ for all names $a$, and where the entailment relation satisfies $\Psi_a \vdash \varphi_a$ and $\Psi' \vdash \varphi'$. Suppose composition is defined such that $\Psi \ftimes \Psi = \Psi$ and all other compositions yield $\Psi'$. By adding a unit element this satisfies all requirements on a psi-calculus. In particular $\ftimes$ is trivially compositional because no two different assertions are equivalent. Also $(\nu a)\Psi_a \sequivalent \Psi$, but $\Psi \ftimes (\nu a)\Psi_a \not\sequivalent \Psi \ftimes \Psi$ since $\Psi \ftimes \Psi_a = \Psi' \vdash \varphi'$.

\subsection{Agents}
\label{sec:agents}
\begin{definition}[psi-calculus agents]\label{def:agents}
Given valid psi-calculus parameters as in Definitions~\ref{def:parameters} and~\ref{def:entailmentrelation}, the psi-calculus {\em agents}, ranged over by $P,Q,\ldots$,  are of the following forms.
{\rm
\[
\begin{array}{ll}

\nil                          & \mbox{Nil} \\
\out{M}N \sdot P                   & \mbox{Output} \\
\lin{M}{\ve{x}}N \sdot P          & \mbox{Input}\\
\caseonly{\ci{\varphi_1}{P_1}\casesep\cdots\casesep\ci{\varphi_n}{P_n}}
&\mbox{Case} \\
(\nu a)P                      & \mbox{Restriction}\\
P \pll Q                      & \mbox{Parallel}\\
! P                           & \mbox{Replication} \\
\pass{\Psi}                        & \mbox{Assertion}\\
\end{array}\]
}
\noindent
In the Input $\lin{M}{\ve{x}}N.P$ we require that $\ve{x} \subseteq \names{N}$ is a sequence without duplicates, and the names $\ve{x}$ bind occurrences in both $N$ and $P$.  Restriction binds $a$ in $P$. We identify alpha equivalent agents. An assertion is {\em guarded} if it is a subterm of an Input or Output. In a replication $!P$ there may be no unguarded assertions in $P$, and in $\caseonly{\ci{\varphi_1}{P_1}\casesep\cdots\casesep\ci{\varphi_n}{P_n}}$ there may be no unguarded assertion in any $P_i$.
\end{definition}
In the Output and Input forms $M$ is called the subject and $N$ the object.
Output and Input  are similar to those in the pi-calculus, but arbitrary terms can function as both subjects and objects. In the input $\lin{M}{\ve{x}}N.P$ the intuition is that the pattern $(\lambda \ve{x}) N$ can match any term obtained by instantiating $\ve{x}$,
e.g., $\lin{M}{x,y}f(x,y).P$ can only communicate with an output
$\out{M}{f(N_1,N_2)}$ for some data terms $N_1, N_2$.
This can be thought of as a generalisation of the polyadic pi-calculus where the patterns are just tuples of names.
Another significant extension  is that we allow arbitrary data terms
also as communication channels. Thus it is possible to include functions that create channels.

The {\bf case} construct as expected works by behaving as one of the $P_i$ for which the corresponding $\varphi_i$ is true. 
$\caseonly{\ci{\varphi_1}{P_1}\casesep\cdots\casesep\ci{\varphi_n}{P_n}}$
 is sometimes abbreviated as
\mbox{\rm $\caseonly{\ci{\ve{\varphi}}{\ve{P}}}$}, or if $n=1$ as 
$\ifthen{\varphi_1}{P_1}$.
In psi-calculi where a condition $\top$ exists such that $\Psi \vdash \top$ for all $\Psi$
 we write $P+Q$ to mean $\caseonly{\ci{\top}{P}\casesep\ci{\top}{Q}}$.
 
Input subjects are underlined to facilitate parsing of complicated expressions; in simple cases we often omit the underline.
In the traditional pi-calculus terms are just names and its input construct $a(x) \sdot P$ can be represented as $\lin{a}{x}x.P$. In some of the examples to follow we shall use the simpler notation  $a(x)\sdot P$ for this input form,
and sometimes we omit a trailing $\nil$, writing just $\out{M}N$ for  $\out{M}N \sdot \nil$. If the object of an Output is a long term we enclose it in brackets $\langle \; \rangle$ to make it easier to parse. 

For a simple example, the pi-calculus~\cite{milner.parrow.ea:calculus-mobile}
can be represented as a psi-calculus where the only data terms are names,  the
only assertion is $1$, and the conditions are equality tests on
names. Substitution is the standard capture-avoiding syntactic replacement of names for names.
We call this instance \piinstance, and formally we have:
\[
\begin{array}{rcl}
\terms & \defn & \nameset\\
\conditions & \defn &\{a=b : a,b\in \terms\}\\
\assertions & \defn &\{1\}\\
\sch & \defn & =\\
\ftimes & \defn &\lambda \Psi_1, \Psi_2.\; 1 \\
\unit & \defn & 1\\
\vdash & \defn & \{(1, a=a) : a \in \nameset\}\}
\end{array}
\]

We can represent pi-calculus choice using the \textbf{case} statement:
the pi-calculus term $P + Q$ corresponds to
$(\nu a)(\mbox{\bf case}\; a=a \;:\; P \;[]\; a=a \,:\, Q)$, where $a \freshin
P, Q$, and pi-calculus match $[a=b]P$ to $\mbox{\bf if}\; a=b\;
\mbox{\bf then}\; P$. We will return to this instance in
Section~\ref{sec:expressiveness}.

\label{sec:polyadic-pi-instance}
We obtain the polyadic pi-calculus  by adding the tupling symbols
${\mathrm{t}}_n$ for tuples of arity $n$ to $\terms$., i.e.
$\terms = \mathcal{N} \cup \{\mathrm{t}_n(M_1,\ldots,M_n):\;M_1,\ldots, M_n \in \terms\}$.  The polyadic output is to simply output the corresponding tuple of object names, and the polyadic input $a(b_1,\ldots,b_n)\sdot P$ is represented by a pattern matching
$\underline{a}(\lambda{b_1,\ldots,b_n})\sym{t_\mathit{n}}{b_1,\ldots,b_n}\sdot P$. Strictly speaking this allows nested tuples and tuples also in subject position in agents, but as we shall see such prefixes will not give rise to any transition, since in this psi-calculus $M \sch M$ is only entailed when $M$ is a name, i.e., only names are channels.

 In a psi-calculus the channels can be arbitrary terms. This means that it is    
 possible to
introduce functions on channels (e.g., if $M$ is a channel then so is $f(M)$). 
It also means that a channel can contain more than one name. An extension of
this kind is explored by Carbone and
Maffeis~\cite{carbone.maffeis:expressive-power} in the so called pi-calculus
with polyadic synchronisation, ${}^e\pi$. Here action subjects are tuples of names, and it is
demonstrated that this allows a gradual
enabling of communication by opening the scope of names in a subject,
results in simple representations of localities and cryptography, and  gives a
strictly greater expressiveness than standard \pic{}. 
We can represent ${}^e\pi$ by using tuples of names in subject position. The only modification to the representation of the polyadic pi-calculus is to extend $\vdash$ to  \mbox{$\vdash = \{(\emptyframe, M\sch M) : \; M \in \terms\}$}, and to remove the conditions of type $M=N$ (since they can be encoded in ${}^e\pi$).

The data terms can also be drawn from a higher-order formalisms. It is thus possible to transmit functions between agents. For example, let $\terms$ be the lambda calculus, containing abstractions 
$\mbox{\boldmath$\lambda$} x.M$ and applications $MN$. In the parallel composition
\(\overline{a}\,\langle \mbox{\boldmath$\lambda$} x.M\rangle \sdot P \parop a (z) \sdot \overline{b}\,\langle zN\rangle \sdot Q\)
the left hand component transmits the function $\mbox{\boldmath$\lambda$} x.M$ to the right, where the application of it to $N$ is transmitted along $b$. Reduction would be represented as a binary predicate over lambda terms and could be tested in psi-calculus conditions (the reduction rules would be part of the definition of entailment).
In this sense psi can resemble a higher-order calculus. It is even possible to let the terms be the psi-calculus agents themselves. An agent transmitted as a term cannot directly communicate with the agent that sent or received it, but there is a possibility of indirect interaction through the entailment relation.  This area we leave for further study.

\subsection{Operational semantics}
\label{sec:transitions}
In this section we define an inductive transition relation on agents.
In particular it establishes what transitions are possible from a parallel composition $P\pll Q$. In the standard pi-calculus the transitions from a parallel composition can be uniquely determined by the transitions from its components, but in psi-calculi the situation is more complex. Here the assertions contained in $P$ can affect the conditions tested in $Q$ and vice versa. For this reason we introduce the notion of the {\em frame of an agent} as the combination of its top level assertions,
retaining all the binders.
It is precisely this that can affect a parallel agent.

\begin{definition}[Frame of an agent]
The {\em frame $\fr{P}$ of an agent} P is defined inductively as follows:
\[\begin{array}{l}
\fr{\nil} = \fr{\lin{M}{\ve{x}}N.P} = \fr{\out{M}N.P} = 
\fr{\caseonly{\ci{\ve{\varphi}}{\ve{P}}}} =
\fr{!P} = \emptyframe \\
\fr{\pass{\Psi}} = \Psi \\
\fr{P \pll Q} = \fr{P}\ \ftimes\ \fr{Q}\\
\fr{\res{b}P} = (\nu b)\fr{P}  \\
 \end{array}\]
\end{definition}
\noindent
For a simple example, if $a \freshin \Psi_1$:
\[\fr{\pass{\Psi_1} \,|\,(\nu a)(\pass{\Psi_2} \,|\, \out{M}N.\pass{\Psi_3}}
\quad= \quad\framepair{a}{(\Psi_1 \ftimes \Psi_2)}\]
Here $\Psi_3$ occurs under a prefix and is therefore not included in the frame.
An agent where
all assertions are guarded thus has a frame equivalent to $\emptyframe$. In the following  we often write   $\framepair{\frnames{P}}{\frass{P}}$ for  $\fr{P}$, but note that this is not a unique representation since frames are identified up to alpha equivalence.

The actions $\alpha$ that agents can perform are of three kinds: output actions, input actions of the early kind, meaning that the input action contains the received object, and the silent action $\tau$. The operational semantics consists of transitions of the form
$\framedtransempty{\Psi}{P}{\alpha}{P'}$.
This transition intuitively means that $P$ can perform an action $\alpha$ leading to $P'$, in an environment that asserts $\Psi$.

\begin{definition}[Actions]

The {\em actions} ranged over by $\alpha, \beta$ are of the following three kinds:

\[\begin{array}{lllll}

\out{M}{(\nu \tilde{a})N} & \mbox{Output, where $\tilde{a} \subseteq \n(N)$} \\
\inn{M}N                  & \mbox{Input}  \\
\tau                      & \mbox{Silent}
\end{array}\]

\end{definition}
For actions we refer to $M$ as the {\em subject} and $N$ as the {\em object}. We define 
$\bn{\out{M}{(\nu \tilde{a})N}} = \tilde{a}$, and $\bn{\alpha}=\emptyset$ if $\alpha$ is an input or $\tau$. We also define $\n(\tau)=\emptyset$ and $\n(\alpha) = \n(N) \cup \n(M)$ if $\alpha$ is an output or input.
As in the pi-calculus, the output $\out{M}{(\nu \tilde{a})N}$ represents an action sending $N$ along $M$ and opening the scopes of the names $\tilde{a}$.  Note in particular that the support of this action includes  $\tilde{a}$. Thus  $\out{M}{(\nu a)a}$ and $\out{M}{(\nu b)b}$ are different actions.

\begin{table*}[tb]

\begin{mathpar}

\inferrule*[Left=\textsc{In}]
    {\Psi \vdash M \sch K }
    {\framedtransempty{\Psi}{\lin{M}{\ve{y}}{N}.P}{\inn{K}{N}\lsubst{\ve{L}}{\ve{y}}}{P\lsubst{\ve{L}}{\ve{y}}}}

\inferrule*[left=\textsc{Out}]
    {\Psi \vdash M \sch K }
    {\framedtransempty{\Psi}{\out{M}{N}.P}{\out{K}{N}}{P}}

\inferrule*[left={\textsc{Case}}]
    {\framedtransempty{\Psi}{P_i}{\alpha}{P'} \\ \Psi \vdash \varphi_i}
    {\framedtransempty{\Psi}{\caseonly{\ci{\ve{\varphi}}{\ve{P}}}}{\alpha}{P'}}

\inferrule*[left=\textsc{Com}, Right={$\inferrule{}{\ve{a} \freshin Q }$}]
 {\framedtransempty{\frass{Q} \ftimes \Psi}{P}{\bout{\ve{a}}{M}{N}}{P'} \\
  \framedtransempty{\frass{P} \ftimes \Psi}{Q}{\inn{K}{N}}{Q'} \\
  \Psi \ftimes \frass{P} \ftimes \frass{Q} \vdash M \sch K
  }
       {\framedtransempty{\Psi}{P \pll Q}{\tau}{(\nu \ve{a})(P' \pll Q')}}

\inferrule*[left=\textsc{Par},  right={$\bn{\alpha} \freshin Q$
}]
{\framedtransempty{\frass{Q} \ftimes \Psi}{P} {\alpha}{P'}}
{\framedtransempty{\Psi}{P \pll Q}{\alpha}{P' \pll Q}}

\inferrule*[left=\textsc{Scope}, right={$b \freshin \alpha,\Psi$}]
    {\framedtransempty{\Psi}{P}{\alpha}{P'}}
    {\framedtransempty{\Psi}{(\nu b)P}{\alpha}{(\nu b)P'}}

\inferrule*[left=\textsc{Open}, right={$\inferrule{}{b \freshin \ve{a},\Psi,M\\\\
b \in \n(N)}$}]
    {\framedtransempty{\Psi}{P}{\bout{\ve{a}}{M}{N}}{P'}}
    {\framedtransempty{\Psi}{(\nu b)P}{\bout{\ve{a} \cup \{b\}}{M}{N}}{P'}}

\inferrule*[left=\textsc{Rep}]
   {\framedtransempty{\Psi}{P \pll !P}{\alpha}{P'}}
   {\framedtransempty{\Psi}{!P}{\alpha}{P'}}

\end{mathpar}
\caption{Operational semantics.
Symmetric versions of \textsc{Com}
and \textsc{Par} are elided. In the rule $\textsc{Com}$ we assume that $\fr{P} =
\framepair{\frnames{P}}{\frass{P}}$ and   $\fr{Q} =
\framepair{\frnames{Q}}{\frass{Q}}$ where $\frnames{P}$ is fresh for all of 
$\Psi, \frnames{Q}, Q, M$ and $P$, and that $\frnames{Q}$ is correspondingly
fresh. In the rule
\textsc{Par} we assume that $\fr{Q} = \framepair{\frnames{Q}}{\frass{Q}}$
where $\frnames{Q}$ is fresh for
$\Psi, P$ and $\alpha$. 
In $\textsc{Open}$ the expression $\tilde{a} \cup \{b\}$ means the sequence
$\tilde{a}$ with $b$ inserted anywhere.
}
\label{table:struct-free-labeled-operational-semantics}
\end{table*}

\begin{definition}[Transitions]

A {\em transition} is of the kind \mbox{$\framedtransempty{\Psi}{P}{\alpha}{P'}$}, meaning that when the environment contains the assertion $\Psi$ the agent $P$
can do an $\alpha$ to become $P'$.  The transitions are defined inductively in 
Table~\ref{table:struct-free-labeled-operational-semantics}. We write $\trans{P}{\alpha}{P'}$ to mean
$\framedtransempty{\emptyframe}{P}{\alpha}{P'}$. In \textsc{In} the substitution
is defined by induction on agents, using  substitution on terms, assertions and
conditions for the base cases and avoiding captures through alpha-conversion in
the standard way.
\end{definition}

Both agents and frames are identified by alpha equivalence. This means that we can choose the bound names fresh in the premise of a rule. In a transition the names in $\bn{\alpha}$ count as binding into both the action object and the derivative, and transitions are identified up to alpha equivalence.
This means that the bound names can be chosen fresh, substituting each
occurrence in both the object and the derivative. This is the reason why
$\bn{\alpha}$ is in the support of the output action: otherwise it could be
alpha-converted in the action alone. Also, for the side conditions in
\textsc{Scope} and \textsc{Open} it is important that $\bn{\alpha} \subseteq
\n(\alpha)$. In rules \textsc{Par} and \textsc{Com}, the freshness conditions
on the involved frames will ensure that if
a name is bound in one agent its representative in a frame is distinct from
names in parallel agents, and also (in \textsc{Par}) that it does not occur on
the transition label. We defer a more precise account of this to
Section~\ref{sect:formalisation}.

The environmental assertions $\Psi \frames \cdots$ in
Table~\ref{table:struct-free-labeled-operational-semantics} express the effect
that  the environment has on the agent: enabling conditions in \textsc{Case},
giving rise to action subjects in \textsc{In} and \textsc{Out} and enabling
interactions in \textsc{Com}. Thus $\Psi$ never changes between hypothesis and
conclusion except for the parallel operator, where an agent is part of the
environment for another agent. In a derivation tree for a transition, the
assertion will therefore increase towards the leafs by application of
\textsc{Par} and \textsc{Com}. If all environmental assertions are erased and
channel equivalence replaced by identity we get the standard laws of the
pi-calculus enriched with data structures.

In comparison to the applied pi-calculus and the concurrent constraint pi calculus one main novelty is the inclusion of environmental assertions in the rules. They are necessary to make our semantics compositional, i.e., 
 the effect of the environment on an agent is wholly captured by the semantics. In contrast, the labelled transitions of 
the applied and the concurrent constraint pi-calculi must rely on an auxiliary structural congruence, containing axioms such as scope extension $(\nu a)(P\pll Q) \equiv (\nu a)P\pll Q$ if $a\freshin Q$. With our semantics such laws are derived rather than postulated. The advantage of our approach is that  proofs of meta-theoretical results such as compositionality are much simpler since there is only the one inductive definition of transitions.

Substitution enters the semantics at one point only: the law {\sc In} which defines the effect of an input. Returning to the substitution laws in Section~\ref{sec:nominal} it is easy to motivate Law~2: it is needed to make sure that alpha equivalent agents have the same transitions. Law~1 has a more involved motivation related to the fact that the objects of transition labels must record all received names, otherwise
we lose the principle of scope extension. To see this, let $\unit \vdash M \sch M$, $b \freshin M,N$, and
\[R = \lin{M}{x}N \sdot x(y) \sdot \nil \;|\; (\nu b)\out{b}c \sdot \nil\]
The only transitions from $R$ are 
\[\trans{R}
{\inn{M}{N\lsubst{L}x}}
{(x(y) \sdot \nil)\lsubst{L}x \;|\;(\nu b)\out{b}c \sdot \nil}\]
for all $L$. Here there is no communication possible between the two components,
even if $L=b$. In contrast, consider
\[T = (\nu b)(\lin{M}{x}N \sdot x(y) \sdot \nil \;|\; \out{b}c \sdot \nil)\]
$T$ is obtained from $R$ through scope extension. Without Law~1 we can have $b \freshin N\lsubst{b}x$ which means that through {\sc Scope} there is a transition
\[\trans{T}
{\inn{M}{N\lsubst{b}x}}
{(\nu b)(b(y) \sdot \nil) \;|\;\out{b}c \sdot \nil)}\]
which can continue with an interaction between the components. $R$ and $T$ therefore do not behave the same. The culprit is the transition from $T$ which corresponds to a scope intrusion, i.e. the reception of a name which is already bound in the receiving agent. To prevent such transitions the law {\sc Scope} has a side condition that the bound name may not occur in the transition label. For this side condition to be effective, Law~1 guarantees that a received name actually appears in the label.

\subsection{Illustrative examples}
\label{sec:illustrativeexamples}
For a simple example of a transition, suppose for an assertion $\Psi$ and condition $\varphi$ that $\Psi \vdash \varphi$.
Assume that
\[\forall \Psi' . \framedtransempty{\Psi'}{Q}{\alpha}{Q'}\]
 i.e., $Q$ has an action $\alpha$ regardless of the environment. Then by the \textsc{Case} rule we get
\[\framedtransempty{\Psi}{\ifthen{\varphi}{Q}}{\alpha}{Q'}\]
 i.e., $\ifthen{\varphi}{Q}$ has the same transition if the environment is $\Psi$. 
 Since $\fr{\pass{\Psi}} = \Psi$ and $\Psi \ftimes \emptyframe = \Psi$, if $\bn{\alpha} \freshin \Psi$  
 we get by \textsc{Par} that
\[\framedtransempty{\emptyframe}{\pass{\Psi}\pll\ifthen{\varphi}{Q}}{\alpha}{\pass{\Psi}\pll Q'}\] 

Data terms may also represent communication channels and here the channel equivalence comes into play. For example, in a polyadic pi-calculus the terms include tuples and projection functions
with the usual equalities, e.g.~$\pi_1(\mathrm{t}_2(a,b))=a$. If these
terms can represent channels then they must represent the {\em same} channel,
consequently we must have
$\Psi \vdash \pi_1(\mathrm{t}_2(a,b)) \sch a$ for all $\Psi$. As an example,  
\[\trans{\out{a}{N}\sdot P \;|\; \inprefixempty{\pi_1(\mathrm{t}_2(a,b))}\,(y)\sdot Q}{\tau}{P\pll Q\lsubst{N}y}\]
 Agents such as $ \inprefixempty{\pi_1(\mathrm{t}_2(a,b))}\,(y)\sdot Q$ can arise naturally if tuples of channels are transmitted as objects. For example, an agent that receives a pair of channels along $c$ and then inputs along the first of them is written $\inprefixempty{c}(x) \sdot \inprefixempty{\pi_1(x)}(y) \sdot Q$. When put in parallel with an agent that sends $\mathrm{t}_2(a,b)$ along $c$ it will have a transition leading to the agent where $x$ is substituted by $\mathrm{t}_2(a,b)$, i.e. 
 $\inprefixempty{\pi_1(\mathrm{t}_2(a,b))}\,(y)\sdot Q$.

The semantics makes no particular provision for an equality of terms in object position. Thus, the agents $\out{c}{a}\sdot P$ and $\out{c}{\pi_1(\mathrm{t}_2(a,b))}\sdot P$
 have different transitions, and correspond to sending out the unevaluated ``texts'' $a$ and $\pi_1(\mathrm{t}_2(a,b))$ respectively. To represent agents which send evaluated ``values'' we can do as in the applied pi-calculus where assertions declare equivalence of terms and agents send freshly generated aliases, e.g. 
\[(\nu z) (\out{c}{z}\sdot P \;|\; \pass{z = \pi_1(\mathrm{t}_2(a,b))})\]
 This agent has the same transition as  $(\nu z) (\out{c}{z}\sdot P \;|\; \pass{z = a})$. Any agent receiving the $z$ will not be able to distinguish if $z$ is $a$ or $\pi_1(\mathrm{t}_2(a,b))$ since these terms are equated by all assertions. Also, if $a$ and $b$ are scoped as in
 \[(\nu a,b,z) (\out{c}{z}\sdot P \;|\; \pass{z = \pi_1(\mathrm{t}_2(a,b))})\]
 then their scopes will {\em not} open as a consequence of the output.  
 In the applied pi-calculus this is the only form of communication and it is not possible to directly transmit data structures containing channel names, like the name tuples of the polyadic pi-calculus  above. In psi-calculi these communication possibilities can coexist.

The main technical issue in the semantics is the treatment of scoping, as illustrated by the following example where the terms are just names.
The intuition is that there is a communication channel available to all agents, and agents can declare any name to represent it through an assertion. The assertions are thus sets of names, and any name occurring in the assertion can be used as the subject of an action. Any two names in the assertion are deemed channel equivalent. Formally,

\begin{mathpar}
\begin{array}{rcl}
\terms & \defn & \nameset\\
\conditions & \defn & \{a\sch b : a,b\in \terms\}\\
\assertions & \defn & \mathcal{P}_{\rm fin}(\nameset)\\
\ftimes & \defn & \cup \\
\emptyframe & \defn & \emptyset \\
 \vdash & \defn & \{(\Psi, a \sch b) : a,b \in \Psi\} 
 \end{array}
\end{mathpar}
Omitting the action and prefix objects we get
\[\framedtransempty{\{a,b\}}{\overline{a}\sdot \nil}{\overline{a}}{\nil}\]
and also
 \[\framedtransempty{\{a,b\}}{\overline{a}\sdot \nil}{\overline{b}}{\nil}\]
 By the
\textsc{Par} rule 
 we have
 \[\framedtransempty{\emptyset}{\overline{a}\sdot \nil\pll\pass{\{a,b\}}}{\overline{a}}{
\nil\pll\pass{\{a,b\}}}\] and
 \[\framedtransempty{\emptyset}{\overline{a}\sdot \nil\pll\pass{\{a,b\}}}{\overline{b}}{
\nil\pll\pass{\{a,b\}}}\]
 Applying a restriction we get
 \[\framedtransempty{\emptyset}{(\nu
a)(\overline{a}\sdot \nil\pll\pass{\{a,b\}})}{\overline{b}}(\nu
a)({\nil\pll\pass{\{a,b\}}})\]
 but no corresponding action with subject $a$ because of the side condition on
$\textsc{Scope}$. 
 Thus, a communication through $\textsc{Com}$ can be inferred from 
 \[(\nu
a)(\overline{a}\sdot \nil\pll\pass{\{a,b\}})\parop b\sdot \nil\]
 but {\em not} from \[(\nu
a)(\overline{a}\sdot \nil\pll\pass{\{a,b\}}) \parop a\sdot \nil\]

This instance of a psi-calculus also illustrates two features of the semantics: firstly that  channel equivalence is used in all three rules $\textsc{In}$, $\textsc{Out}$ and $\textsc{Com}$, and secondly that assertions rather than frames represent the environment. Both issues are related to the law of 
scope extension.
Elaborating the example above and noting that $\{a\} \cup
\{b\} \vdash a \sch b$, we get that
\[(\nu a,b)(\pass{\{a\}}\parop\pass{\{b\}}\parop\overline{a}\sdot \nil\parop
b\sdot \nil)\]
 has an internal communication. By scope extension
this agent should have the same transitions as $P\parop Q$ where
\[P=(\nu a)(\pass{\{a\}}\parop\overline{a}\sdot \nil)\qquad Q=(\nu b)(\pass{\{b\}}
\parop b\sdot \nil)\]
Here $\fr{P} =  \framepair{a}{\{a\}}$ and 
$\fr{Q} =  \framepair{b}{\{b\}}$ are alpha equivalent. Since they will be composed below we choose different representatives for the bound names.
A communication from $P\parop Q$ is inferred by \textsc{Com} and the premises
\[\begin{array}{l}
1. \;\framedtransempty{\{b\}}{P}{\overline{b}}{(\nu
a)(\pass{\{a\}}\parop\nil)} \\
\qquad
\mbox{(derived using $\{a\}\rel{\ftimes}\{b\}=\{a, b\}\vdash a\sch b$ in {\sc Out)}} \\
2. \; \framedtransempty{\{a\}}{Q}{a}{(\nu
b)(\pass{\{b\}}\parop\nil)}\\
\qquad
\mbox{(derived using $\{b\}\rel{\ftimes}\{a\}=\{a, b\}\vdash a\sch b$ in {\sc In)}} \\
3. \;\{a\}\rel{\ftimes}\{b\}= \{a, b\}\vdash a\sch b
\end{array}\]
Note how the action subjects are derived by the assertions in both cases to not clash with the binders, and that channel equivalence is necessary in all three rules. \label{sec:channel-equivalence-everywhere}

The same example demonstrates why transitions in
Table~\ref{table:struct-free-labeled-operational-semantics} are defined with
assertions and not frames, for whereas $\{a, b\}\vdash a\sch b$ the corresponding result cannot be obtained from the frames of the agents. We have that
$\fr{Q}\rel{\ftimes}{\{a\}}= \framepair{b}{\{a,b\}} \nvdash a \sch b$, so that
frame is not useful for deriving a transition from $P$.
Our earlier attempt~\cite{johansson.parrow.ea:extended-pi} erroneously used frames rather than assertions, and this means that scope extension does not hold unless a further condition is imposed on the entailment relation to eliminate this kind of example.

We close this section by demonstrating why the requisites in Definition~\ref{def:entailmentrelation} are necessary: omitting any of them would lead to a calculus that does not satisfy fundamental properties of the parallel operator.
Compositionality and the abelian monoid laws in Definition~\ref{def:entailmentrelation} are straightforward in this respect since without them the corresponding properties of parallel composition on agents do not hold. For example, we will want parallel composition to be commutative in that the agent $P \pll Q$ behaves the same as  $Q\pll P$ in all respects. At the very least this implies that their frames entail the same conditions (it may also imply other things not important for this argument), which means that $\ftimes$ must be commutative for assertion equivalence. In a similar way the other requisites on $\ftimes$ are necessary for parallel operator to be compositional, associative, and have $\nil$ as identity.

To demonstrate that channel equivalence must be symmetric, consider any psi-calculus where $\Psi_1$ and $\Psi_2$ are such that $\Psi_1 \ftimes \Psi_2
\vdash a \sch b$ and $\Psi_1 \ftimes \Psi_2 \vdash b \sch b$. We shall argue that also $\Psi_1 \ftimes \Psi_2 \vdash b \sch a$ must hold, otherwise scope extension does not hold. 
Consider the agent
\[(\nu a,b)(\pass{\Psi_1}\parop\pass{\Psi_2}\parop\overline{a}\sdot \nil\parop
b\sdot \nil)\]
which has an internal communication $\tau$ using $b$ as subjects in the premises of the {\sc Com} rule. 
If $b \freshin \Psi_1$ and $a \freshin \Psi_2$, by scope extension the agent should behave as
\[(\nu a)(\pass{\Psi_1}\parop\overline{a}\sdot \nil) \;\parop\; (\nu b)(\pass{\Psi_2}\parop
b\sdot \nil)\]
and therefore this agent must also have a $\tau$ action. The left hand component cannot do an $\overline{a}$ action, but in the environment of $\Psi_2$ it can do a $\overline{b}$ action.
Similarly, the right hand component cannot do a $b$ action. The only possibility is for it to do an $a$ action, as in
\[\framedtransempty{\Psi_1}{(\nu
b)(\pass{\Psi_2} \parop b \sdot \nil)}{a}{\cdots}\]
and this requires $\Psi_1 \ftimes \Psi_2 \vdash b \sch a$.

Finally, we motivate the requisite that $\sch$
must be transitive. Let $\emptyframe$ entail $a \sch a$ for all names $a$, and
let $\Psi$ be an assertion with support $\{a,b,c\}$ that additionally entails
the two conditions $a \sch b$ and $b \sch c$, but not $a \sch c$, and thus does
not satisfy transitivity of channel equivalence. If $\Psi$ entails no other
conditions then $(\nu b)\Psi \sequivalent \emptyframe$, and we expect $(\nu b)
\pass\Psi$ to be interchangeable with $\pass\emptyframe$ in all contexts.
Consider the agent
\[\overline{a}\sdot \nil \parop c \sdot \nil \parop (\nu b)\pass{\Psi}\]
By scope extension it should behave precisely as
\[(\nu b)(\overline{a}\sdot \nil \parop c \sdot \nil \parop \pass\Psi)\]
This agent has a $\tau$-transition since  $\Psi$ enables an interaction between the components $\overline{a}\sdot \nil$ and $c \sdot \nil$.
But the agent
\[\overline{a}\sdot \nil \parop c \sdot \nil \parop \pass{\emptyframe}\]
has no such transition. The conclusion is that $(\nu b) \Psi$ must entail that the components can communicate, ie. that $a \sch c$, in other words $\Psi \vdash a \sch c$.

\section{Expressiveness and related calculi}

\label{sec:expressiveness}
In this section we explore the expressiveness of psi-calculi, mainly in comparison to other process calculi.

\subsection{The pi-calculus}
In Section \ref{sec:agents} we saw the instance \piinstance{} which
corresponds to the pi-calculus. We will now make the relationship formal. The
pi-calculus under consideration is the standard pi-calculus with replication instead of
recursion, without mismatch, and without a rule for structural congruence in
the semantics. %
The encoding of a pi-calculus agent $P$ into \piinstance{}, $\encodepi{P}$, is
defined as:
\[
\begin{array}{rll}
\encodepi{\nil} &=& \nil\\
\encodepi{\outprefix{a}{b} \sdot P} &=& \outprefix{a}{b} \sdot \encodepi{P}\\
\encodepi{\inprefixpi{a}{x}\sdot P} &=& \inprefix{a}{x}{x} \sdot \encodepi{P}\\
\encodepi{P \pll Q} &=& \encodepi{P} \pll \encodepi{Q}\\
\encodepi{!P} &=& !\encodepi{P}\\
\encodepi{(\nu a)P} &=& (\nu a)\encodepi{P}\\
\encodepi{[a=b]P} &=& \case{\ci{a=b}{\encodepi{P}}}\\
\encodepi{P + Q} &=& (\nu a)(\case{\ci{a=a}{\encodepi{P}} \casesep
\ci{a=a}{\encodepi{Q}}}) \text{ where } a \freshin P,Q
\end{array}\]
To prove that $P$ and $\encodepi{P}$ 
have the same transitions the following two
lemmas about substitutions and support are needed. We use the standard
definition of substitution in the pi-calculus, replacing free names for new
ones, $\alpha$-converting as necessary to avoid capture.
\begin{lemma} If $P$ is a \pic{} agent, then
 $\encodepi{P}\lsubst{b}{x} = \encodepi{P\lsubst{b}{x}}$.
\end{lemma}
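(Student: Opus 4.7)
The plan is to proceed by structural induction on the pi-calculus agent $P$, leveraging alpha-equivalence to avoid capture issues whenever $P$ contains a binder. The atomic cases and the non-binding structural cases (Nil, Output, Parallel, Replication, Match) are immediate: we just unfold the definitions of substitution and of $\encodepi{\cdot}$ on both sides and apply the induction hypothesis to each subterm. For example, for $\outprefix{a}{c} \sdot P'$, both sides reduce to $\outprefix{a\lsubst{b}{x}}{c\lsubst{b}{x}}\sdot \encodepi{P'\lsubst{b}{x}}$ after one IH step.

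For the binding cases $\inprefixpi{a}{y}\sdot P'$ and $(\nu y)P'$, I would exploit that we identify alpha-equivalent agents on both the pi-calculus and psi-calculus sides to choose the bound name $y$ fresh for $b$, $x$, and any free names under consideration. Once $y \freshin b, x$, pi-calculus substitution does not touch $y$ and the psi-substitution on the encoding does not touch $y$ either, so definitional unfolding plus the IH on $P'$ settles the case. The Match case $[a=c]P'$ is similar since $\case{}$ has no binder interfering with $b,x$.

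The main obstacle, and the only case requiring genuine care, is the sum $P_1 + P_2$. The encoding on the left selects some fresh $a \freshin P_1, P_2$, so $\encodepi{P_1 + P_2}\lsubst{b}{x}$ expands (after pushing the substitution through $\nu a$ and the $\case$) to $(\nu a)(\case{\ci{a=a}{\encodepi{P_1}\lsubst{b}{x}}\casesep\ci{a=a}{\encodepi{P_2}\lsubst{b}{x}}})$, provided we first alpha-convert so that $a \freshin b, x$. The encoding on the right picks some fresh $a' \freshin P_1\lsubst{b}{x}, P_2\lsubst{b}{x}$. To align the two, I would choose a single $a$ that is simultaneously fresh for $P_1, P_2, b, x$, which is possible because all supports are finite; then $a$ is automatically fresh for $P_i\lsubst{b}{x}$ too (this is exactly the kind of reasoning that will also be needed in the companion lemma, via Law~1 on substitution). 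With matching binders on both sides, the IH applied to $P_1$ and $P_2$ completes the equality.

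One small subtlety worth flagging: since the encoding is defined by structural recursion up to alpha-equivalence, I should check that it is well-defined on alpha-equivalence classes, i.e., that the choice of bound name in the sum case does not matter. This is an easy equivariance check, but it underwrites the freedom to rename $a$ that the $P_1 + P_2$ case relies on.
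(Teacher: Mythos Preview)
Your proposal is correct and takes essentially the same approach as the paper, which simply states ``Straightforward induction over the structure of $P$.'' Your elaboration of the binding and sum cases is sound and fills in exactly the details the paper leaves implicit.
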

\begin{proof}
Straightforward induction over the structure of $P$.
\end{proof}
\begin{lemma}
\label{lemma:encoding-has-same-support}
If $P$ is a \pic{} agent, then
 $\names{P} = \names{\encodepi{P}}$. 
\end{lemma}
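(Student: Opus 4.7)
The plan is a straightforward structural induction on the pi-calculus agent $P$, using the clauses defining $\encodepi{\cdot}$ and the standard computation of support (free names) on both sides. Since the encoding is defined by cases that mirror the syntactic structure of $P$, each case reduces to a small calculation using the induction hypothesis.

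For the base case $P = \nil$, both $\names{P}$ and $\names{\encodepi{P}}$ are empty. For the homomorphic cases (output $\outprefix{a}{b}\sdot P$, parallel $P\pll Q$, replication $!P$, restriction $(\nu a)P$, and match $[a=b]P$), the top-level name set on each side is obtained by combining a fixed finite set of free names with the support of the immediate subterms, so the induction hypothesis closes the case immediately. In particular, for restriction, the bound $a$ is removed from the support on both sides because restriction and $\nu$ in \piinstance{} bind in the same way.

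The only cases that require a little attention are those involving binders in subterms. For the input $\inprefixpi{a}{x}\sdot P$, the pi-calculus support is $\{a\}\cup(\names{P}\setminus\{x\})$; the encoding $\inprefix{a}{x}{x}\sdot\encodepi{P}$ is a psi-calculus Input in which $x$ binds into the pattern $x$ and into $\encodepi{P}$, so its support is $\{a\}\cup((\{x\}\cup\names{\encodepi{P}})\setminus\{x\}) = \{a\}\cup(\names{\encodepi{P}}\setminus\{x\})$, which equals the pi side by the induction hypothesis. For the sum $P+Q$, the encoding is $(\nu a)(\case{\ci{a=a}{\encodepi{P}}\casesep\ci{a=a}{\encodepi{Q}}})$ with $a\freshin P,Q$; by the induction hypothesis $a\freshin\encodepi{P},\encodepi{Q}$ as well, so the two $a=a$ conditions contribute only the name $a$ which is then removed by $(\nu a)$, leaving $\names{\encodepi{P}}\cup\names{\encodepi{Q}} = \names{P}\cup\names{Q} = \names{P+Q}$.

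The mildest obstacle is simply keeping the support calculations in the binding cases honest, in particular remembering that in the Input encoding $x$ also appears in the pattern position (so the pattern itself does not contribute $x$ to the free names of the agent) and that the freshness side condition $a\freshin P,Q$ in the sum clause is precisely what makes the restriction eliminate the auxiliary name. No delicate argument is needed, and the lemma follows.
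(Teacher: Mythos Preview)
Your proposal is correct and takes exactly the same approach as the paper, which simply states ``Straightforward induction over the structure of $P$.'' You have merely spelled out the details of that induction, including the two binding cases (input and sum) where a moment's care is needed.
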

\begin{proof}
Straightforward induction over the structure of $P$.
\end{proof}

Let $\alpha$ be a \pic{} action. We define the encoding of $\alpha$
into \psicalculi{} actions as:
\[\begin{array}{rcl}
 \encodepi{\outlabel{a}{b}} &=& \outlabel{a}{b}\\
 \encodepi{\bout{b}{a}{b}} &=& \bout{b}{a}{b}\\
 \encodepi{\inlabelpi{a}{b}} &=& \inlabel{a}{b}\\
 \encodepi{\tau} &=& \tau
\end{array}
\]

We denote a pi-calculus transition as $\transpi{P}{\alpha}{P'}$.
We then have the following relation between the \pic{} agent $P$ and the
\piinstance{} agent $\encodepi{P}$:
\begin{lemma}[Transitions in \piinstance{} and the \pic{} correspond]
\label{lemma:correspondence-pi-encoding}
If $P$ is a \pic{} agent, then
\[
 \text{ if }\transpi{P}{\alpha}{P'} \text{ then }
\trans{\encodepi{P}}{\encodepi{\alpha}}{\encodepi{P'}}
\] and
\[
 \text{ if }\trans{\encodepi{P}}{\alpha'}{P''} \text{ then
}\transpi{P}{\alpha}{P'}\text{ where }\encodepi{\alpha} = \alpha' \text{ and }
\encodepi{P'}=P''.
\]
\end{lemma}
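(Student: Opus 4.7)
The plan is to prove both implications by rule induction on the respective transition derivations, using the substitution lemma and Lemma~\ref{lemma:encoding-has-same-support} as the only non-structural facts needed. Throughout, note that every subterm of an encoded agent has frame $\emptyframe$, so the environmental assertion in every premise stays equal to $\emptyframe$, and the frame-freshness side conditions in \textsc{Par}, \textsc{Com} are vacuous.

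For the forward direction, I would induct on the derivation of $\transpi{P}{\alpha}{P'}$. The \textsc{Out} and \textsc{In} pi-rules map directly to their psi-counterparts using $\emptyframe \vdash a \sch a$, with the input case invoking the substitution lemma to rewrite $\encodepi{P\lsubst{b}{x}}$ as $\encodepi{P}\lsubst{b}{x}$. The \textsc{Par}, \textsc{Res}, \textsc{Open}, \textsc{Com}, and \textsc{Rep} cases are immediate from the induction hypothesis. The match $[a=b]P$ translates to a one-branch \textsc{Case} whose guard is discharged by $\emptyframe \vdash a=a$ once $a=b$ is known from the pi-rule. For sum $P+Q$, the encoding adds a fresh $(\nu a)$ in front of a two-branch case; here I would apply \textsc{Case} to select the appropriate branch and then \textsc{Scope}, whose side condition $a \freshin \alpha,\emptyframe$ holds because $a \freshin P,Q$ (hence $a \freshin \encodepi{P},\encodepi{Q}$ by Lemma~\ref{lemma:encoding-has-same-support}) and any label of a subderivation only uses names of the source agent.

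For the backward direction, I would induct on the derivation of $\trans{\encodepi{P}}{\alpha'}{P''}$ and analyse which syntactic form of $P$ could have yielded the applicable rule. For each pi-constructor except match and sum, $\encodepi{\cdot}$ commutes with the constructor, so the inductive hypothesis supplies the pi-subtransition and the matching pi-rule reassembles the conclusion; in \textsc{In}, since the pattern has the form $(\lambda x)x$, the label $\inlabel{a}{x\lsubst{L}{x}} = \inlabel{a}{L}$ matches the shape of $\encodepi{\inlabelpi{a}{L}}$, and the substitution lemma lets us identify $P''$ with $\encodepi{P'\lsubst{L}{x}}$. The match case arises from \textsc{Case} on a single branch with guard $a=b$: from $\emptyframe \vdash a=b$ we infer $a=b$, fire the pi-match rule, and conclude.

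The main obstacle is the sum case, where a derivation from $\encodepi{P+Q} = (\nu a)(\caseonly{\ci{a=a}{\encodepi{P}} \casesep \ci{a=a}{\encodepi{Q}}})$ must begin with \textsc{Scope} (choosing $a$ fresh by alpha-conversion) and then \textsc{Case}. I would invert this two-step pattern, noting that because $a \freshin P,Q$ and hence $a \freshin \encodepi{P},\encodepi{Q}$ by Lemma~\ref{lemma:encoding-has-same-support}, neither the selected branch nor the label $\alpha'$ contains $a$; therefore the outer restriction is vacuous and the resulting derivative is alpha-equivalent to $\encodepi{P'}$ (resp.\ $\encodepi{Q'}$) for the summand that moved. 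The corresponding pi-sum rule then produces the required $\transpi{P+Q}{\alpha}{P'}$ (resp.\ $Q'$), and the whole lemma follows.
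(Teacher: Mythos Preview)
Your overall strategy---rule induction on the transition derivation in each direction, using the substitution lemma and Lemma~\ref{lemma:encoding-has-same-support}---is exactly the paper's approach, and your handling of prefixes, match, parallel, restriction, open, communication, and replication is sound.

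There is, however, a genuine gap in your treatment of sum. After applying \textsc{Case} and then \textsc{Scope} to $(\nu a)(\caseonly{\ci{a=a}{\encodepi{P}} \casesep \ci{a=a}{\encodepi{Q}}})$, the derivative is $(\nu a)\encodepi{P'}$, not $\encodepi{P'}$. Your claim that ``the outer restriction is vacuous and the resulting derivative is alpha-equivalent to $\encodepi{P'}$'' is incorrect: alpha-equivalence only renames binders, it does not remove a $(\nu a)$ constructor even when $a\freshin\encodepi{P'}$. That would be structural congruence, which psi-calculi deliberately do not build into the identification of agents. The same issue bites in the forward direction: from $\transpi{P+Q}{\alpha}{P'}$ you obtain $\trans{\encodepi{P+Q}}{\encodepi{\alpha}}{(\nu a)\encodepi{P'}}$, which is $\encodepi{(\nu a)P'}$ rather than $\encodepi{P'}$.

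The paper's proof sketch only exhibits the \textsc{$\pi$-Close} case and does not spell out sum, so you cannot read off a fix there. To repair the argument you would need either to weaken the correspondence to hold modulo vacuous restrictions, or to observe that for the intended application (Theorem~\ref{thm:picoincide}) this extra $(\nu a)$ is harmless because $(\nu a)R\bisim R$ when $a\freshin R$. Either way, the step as you have written it does not go through.
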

\begin{proof}
 The proof is by induction over the length of the derivation of
$\transpi{P}{\alpha}{P'}$ and $\trans{\encodepi{P}}{\alpha}{P''}$,
respectively. As an illustration, one induction case is shown:
the case when the
\pic{} transition is derived with \textsc{$\pi$-Close}:
\[
 \inferrule*[left=\textsc{$\pi$-Close}, right=$b \notin \fn{Q}$]
   {\transpi{P}{\bout{b}{a}{b}}{P'} \\
     \transpi{Q}{\inlabelpi{a}{b}}{Q'}}
   {\transpi{P \pll Q}{\tau}{(\nu b)(P' \pll Q')}}
\]
By induction it follows that
$\trans{\encodepi{P}}{\bout{b}{a}{b}}{\encodepi{P'}}$ and that
$\trans{\encodepi{Q}}{\inlabel{a}{b}}{\encodepi{Q'}}$. Since there is only one
assertion in \piinstance{}, the frames of $\encodepi{P}$ and $\encodepi{Q}$
will be equivalent to $1$.
We choose the frames so that their bound names are %
sufficiently fresh according to rule \textsc{Com}. It trivially holds that
$1 \vdash a=a$, and by definition in \piinstance{} we have that $1 \ftimes 1 =
1$, so also $1 \ftimes 1 \ftimes 1 \vdash a=a$. Since $b \notin \fn{Q}$
(i.e. $b \freshin Q$) it follows from
Lemma~\ref{lemma:encoding-has-same-support} that $b \freshin \encodepi{Q}$. We
now derive the following:
\[
 \inferrule*[left=\textsc{Com}, right=$b \freshin \encodepi{Q}$]
   {1 \ftimes 1 \frames \trans{\encodepi{P}}{\bout{b}{a}{b}}{\encodepi{P'}} \\
    1 \ftimes 1 \frames \trans{\encodepi{Q}}{\inlabel{a}{b}}{\encodepi{Q'}} \\
    1 \ftimes 1 \ftimes 1 \vdash a=a}
   {1 \frames \trans{\encodepi{P} \pll \encodepi{Q}}{\tau}
                    {(\nu b)(\encodepi{P'} \pll \encodepi{Q'})}
}
\]
By definition we have that $\encodepi{P \pll Q} = \encodepi{P} \pll
\encodepi{Q}$, and that $\encodepi{(\nu b)(P' \pll Q')} = (\nu b)(\encodepi{P'}
\pll \encodepi{Q'})$, and that $\encodepi{\bout{b}{a}{b}} =
\bout{b}{a}{b}$, so in other words we have that $\trans{\encodepi{P \pll
Q}}{\encodepi{\tau}}{\encodepi{(\nu b)(P' \pll Q')}}$.
\end{proof}

In Section~\ref{sec:bisimilarity} we shall see that strong bisimulation in the
pi-calculus and in \piinstance{} coincide.

\subsection{Calculi for cryptography}
\label{sect:api}

Psi-calculi can express  a variety of 
cryptographic operations on data.
The main idea was illustrated in Section~\ref{sec:first-crypto-example}, using 
  assertions to define relations between ciphertext and plaintext.  Here we make the description more precise.
Let the assertion
``$C=\sym{enc}{M,k}$" mean that encrypting the message $M$ with the key $k$ results in the ciphertext $C$, and let
``$M = \sym{dec}{C,k}$" mean that decrypting $C$ with key $k$ yields $M$. Entailment contains equations relating encryption and decryption such as $\forall M,k .\,\sym{dec}{\sym{enc}{M,k},k}=M$. The point is that a secure key can be represented by a bound name: it is unguessable outside its scope. An example agent
$\op{a}C\sdot\res{k}(\pass{C=\sym{enc}{M,k}}\parop P)$ outputs a term
$C$ and asserts that it is the encryption of $M$ using the bound $k$ as
key, without opening the scope of $k$.
Therefore an agent receiving $C$ can resolve the condition
$\sym{dec}{C,k}=M$ only  after
receiving this $k$ in a communication. Technically this is because of the
freshness conditions in the \textsc{Par} rule in
Table~\ref{table:struct-free-labeled-operational-semantics} where $\frnames{Q}$
is assumed fresh for $P$: this means that to apply the rule, $P$ cannot use any
name bound in the frame of $Q$.

This closely resembles the situation in the applied pi-calculus~\cite{abadi.fournet:mobile-values}. By contrast, in the spi-calculus~\cite{abadi.gordon:calculus-cryptographic} encrypted messages such as
$\sym{enc}{M,k}$ are transmitted directly.
Consider an example spi-calculus process
\begin{equation}
P=\res{k,m}\opa{a}{\sym{enc}{m,k}}\sdot P' \qquad\text{where }P'=b(x)\sdot\ifthen{x=m}{\op{c}}\label{eqn:p}
\end{equation}
Here $P$ sends a fresh name $m$ encrypted with a
fresh key $k$ to the environment, and then receives a value
$x$. Assuming perfect encryption, the environment cannot know $m$ or
$k$, so $P'$ cannot receive $m$ along $b$, and the output on $c$ will never
be possible. However, in the spi-calculus the transition
\mbox{$P\gt{\res{k,m}\opa{a}{\sym{enc}{m,k}}}P'$}  opens the scopes of $k$ and $m$,
so here scoping does not correspond to restriction of knowledge.
A reasonable
equivalence must explicitly keep track of which names are known,
leading to several complex bisimulation definitions
(see~\cite{borgstrom.nestmann:bisimulations-spi} for an overview).

The applied pi-calculus is 
data terms and an equational theory $\vdash_\Sigma$ over $\Sigma$,
and, more importantly, introduces \emph{active substitutions}
$\subst{M}{x}$ of data terms for variables. These can be introduced by
the inferred structural rule $\res{x}(\subst{M}{x}\parop A)\equiv
A[x:=M]$.
There are names $a,b,c$ distinct from variables $x,y,z$ where only
variables can be substituted, and a simple type system to
distinguish names and variables of channel type from other terms of base
type. Only names of channel type can be used as communication
channels.
Structured data terms cannot be sent directly, instead an \emph{alias}
variable such as $x$ must be used, and the term itself does not occur
on the transition label. We have
$P\equiv Q$ for $P$ above in (\ref{eqn:p}), where
\begin{equation}
Q=\res{x,k,m}(\subst{\sym{enc}{m,k}}{x}\parop\op{a}x\sdot P')\label{eqn:q}
\end{equation}
Here $Q\gt{\out{a}{\res{x}x}}\res{k,m}(\subst{\sym{enc}{m,k}}{x}\parop P')$
and only the alias of the encryption (its ``value'') appears on the label; the
scope of $k$ and $m$ is not opened and 
in this sense they are still confidential to the environment.
However,
the labelled semantics does not allow sending structured data terms where the
scope \emph{should} be opened, such as a tuple of names in the
polyadic pi-calculus. 
\label{sect:apianomaly}
The labelled semantics for applied pi
turns out to be non-compositional. 
Consider the closed (extended) applied pi-calculus agents
\begin{equation}
A=(\nu a)(\subst{a}x\parop x.b.\nil) \qquad B=(\nu a)(\subst{a}x\parop\nil)\label{eq:api-comp}
\end{equation}
 where we omit the objects of the prefixes. They have the same frame and no transitions, and are thus semantically equivalent. But a context can contain $x$ and can therefore use the active substitution to communicate with $A$. Formally, let $R = \overline{x}.\nil$ and $\Downarrow b$ the usual weak observation or barb.  We have by scope extension that 
$A\pll R \equiv (\nu a)(\subst{a}x\parop x\sdot
b\sdot\nil\parop\overline{x}\sdot\nil) \Downarrow b$, but it is not the case
that $B\pll R \Downarrow b$.
Therefore, {\em no} observational equivalence that is preserved by
all contexts and satisfies scope extension can be captured by the
labelled semantics. In this,
Theorem 1 of \cite{abadi.fournet:mobile-values} is incorrect; the
labelled and observational equivalences do in fact not coincide, nor
is labelled equivalence a congruence. This is relevant for
other papers that use or
develop the labelled semantics,
e.g.~\cite{goubault-larrecq.palamidessi.ea:probabilistic-applied,KremerRyan2005,DKR-fsttcs07,cortier.rusinowitch.ea:relating-two,godskesen:observables-wireless}.

Possible fixes are to disallow aliases for channel names, to be satisfied with compositionality for closed contexts, or to allow variables in action subjects. The consequences are difficult to assess, and our proposed solution is to instead define a psi-calculus.

A complication when defining a psi-calculus to correspond to the applied pi-calculus is that bisimulation there is only defined on closed agents, and removing this restriction yields a non-compositional theory. The source of this non-compositionality is the requirement that active substitutions must be acyclic. Assume that the equational system includes the identity $\mathsf{f}(y) = \mathsf{f}(z)$. We then get that $\subst{\mathsf{f}(y)}{x}$ is bisimilar to $\subst{\mathsf{f}(z)}{x}$, but only one becomes circular when composed with $\subst{x}{y}$. In psi-calculi, no notion of closedness exists, and compositionality is required. For these reasons we cannot exactly capture the applied pi-calculus. %

We define the instance \apiinstance{} as follows (this presentation
corrects a mistake in~\cite{DBLP:conf/lics/BengtsonJPV09}). Since our
names and terms are untyped we add constructs for channels,
$\chanCon{M}$, for variables, $\varCon{x}$, and for names which are
neither channels nor variables, $\nonceCon{k}$.  We extend
$\vdash_\Sigma$ so that $\vdash_\Sigma \chanCon{M}=\chanCon{M}$ for
all $M\in\terms$, $\vdash_\Sigma \nonceCon{a} = \nonceCon{a}$ for all
$a \in \nameset$, and $\vdash_\Sigma \varCon{x} = \varCon{x}$ for all
$x \in \nameset$.  Furthermore we define
$\eqs{\subst{M_1}{N_1},\dots,\subst{M_n}{N_n}}$ to be the set of
equations $\{M_1 = N_1, \dots, M_n = N_n\}$. Substitution on terms is
defined in the expected way except for terms of kind $\varCon{x}$ and
$\nonceCon{a}$. For terms of these kinds we have that
$\varCon{x}\lsubst{M}{x} = M$ and $\nonceCon{a}\lsubst{M}{a} = M$.
A term $M$ is \emph{ground} if it has no subterms of kind $\varCon{x}$. We write $\vdash_{\Sigma \cup \Sigma'}$ for the equational theory $\vdash_\Sigma$ extended with the equations from $\vdash_{\Sigma'}$.
\begin{mathpar}
\begin{array}{rcl}
\terms & \defn & \nameset \cup \{\nonceCon{k} : k \in \nameset\} \cup
\{\varCon{x} : x \in \nameset\} \cup \{\chanCon{M} : M \in \terms \} \cup\\
&&\{\mathsf{f}(M_1,\dots, M_n) : \mathsf{f} \in \Sigma \wedge M_i \in \terms\}\\
\conditions & \defn & \{M=N, \neg(M=N),M\sch N: M,N\in\terms\}\\
\assertions & \defn & \mathcal{P}_{\rm fin}(\{\subst{M}{N}: M,N\in\terms\})\} \\
\ftimes & \defn & \cup
\\
\emptyframe & \defn &\emptyset\\
\multicolumn{3}{l}{\Psi\vdash M=N\text{ if }\vdash_{\Sigma\cup\eqs{\Psi}} M =
N}\\
\multicolumn{3}{l}{\Psi\vdash \neg(M=N)\text{ if }\begin{array}[t]{@{}l@{}}
	\text{there exists ground }M',N'\text{ such that }\\
	\vdash_{\Sigma \cup \eqs{\Psi}} M = M' \text{, }\\
	\vdash_{\Sigma \cup \eqs{\Psi}} N=N'\text{, and }\\
	\neg(\Psi\vdash M'=N') \\
						 \end{array}}\\
\multicolumn{3}{l}{\Psi\vdash M\sch N\text{ if }\Psi\vdash M=N\wedge\exists c : \Psi\vdash M=\chanCon{c}}\\
\end{array}
\end{mathpar}

Assertions are finite sets of active substitutions of the more general form $\subst{M}{N}$, $\ftimes$ is union, and entailment deduces equality under the equational theory with equations added to represent the active substitutions.
The conditions are as for the applied pi-calculus except for 
$\neg(M=N)$ which is needed to represent the
$\ifthel{M=N}{P}{Q}$ construct of applied pi as
$\caseonly{\ci{M=N}{P}\casesep\ci{\neg{(M=N)}}{Q}}$ in \apiinstance{}. As in applied pi, the terms compared
for inequality need to be ground.
Channel equivalence $M\sch N$ requires that there is a channel name equal to
both $M$ and $N$.

To see that this is a proper instance we must check that the substitution function is equivariant and respects the freshness and $\alpha$-equivalence properties, as described in Section~\ref{sec:nominal}. Furthermore it must satisfy the requirements from Definition~\ref{def:entailmentrelation}. That the substitution function has the required properties is shown in Section~\ref{sec:nominal}, and the special cases for $\varCon{x}$ and $\nonceCon{a}$ pose no additional problem.
Channel symmetry and transitivity hold since the underlying equational theory is symmetric and transitive. Identity, associativity, and commutativity hold since union has these properties. Compositionality holds assuming that the equational system is compositional, i.e if $\forall M,N :\;\vdash_{\Sigma_1} M=N \Leftrightarrow\; \vdash_{\Sigma_2} M=N$ implies $\forall M,N :\;\vdash_{\Sigma_1 \cup \Sigma'} M=N \Leftrightarrow\; \vdash_{\Sigma_2 \cup \Sigma'} M=N$.

The encoding $\encodeapi{A}$ of an applied pi agent $A$ into \apiinstance{} is homomorphic with the following exceptions:
\[
\begin{array}{rcl}
\encodeapi{a} &=& \chanCon{a}\text{ if the name }a\text{ is of channel type and not a binding occurrence}\\
\encodeapi{x} &=& \varCon{x}\text{ if the variable }x\text{ is not a binding occurrence}\\
\encodeapi{k} &=& \nonceCon{k}\text{ if the name }k\text{ is not of channel type or a binding occurrence}\\
\encodeapi{\subst{M}{x}} &=& \subst{\encodeapi{M}}{\varCon{x}}
\end{array}
\]
Note that in translations of applied pi-calculus agents and their derivatives, the only form of active substitutions will be on the form $\subst{M}{\varCon{x}}$.
Also the only substitutions will be of variables.
We allow for the general form of active substitutions $\subst{M}{N}$ and substitution of channels and nonces simply to make the substitution function total as required. We adhere to the applied pi convention that channel names are ranged over by $a,b,c,\dots$, nonces are ranged over by $k,l,m,\dots$, and variables are ranged over by $x,y,z,\dots$. For readability, in the following we omit the constructs $\chanCon{a}$, $\nonceCon{k}$, and $\varCon{x}$, and just write $a$, $k$, and $x$, also in \apiinstance{}-agents. 

\apiinstance{} differs
from the applied pi-calculus in some ways.
Requirements on the active substitutions in applied pi are that they can only
contain one active substitution per variable,  and that the active
substitutions are non-circular. Furthermore they do not occur under prefixes, conditionals, or replication.
The instance \apiinstance{} does not have these limitations,
but the most important difference is that in \apiinstance{} (and in psi-calculi in general), $\op{a}M\sdot P\gt{\op{a}M}P$ 
corresponds to sending the cleartext of $M$ directly. This is not possible in the applied pi-calculus. 
In order to transmit $M$ in the applied pi-calculus the structural rule 
$\res{x}(\subst{M}{x}\parop A)\equiv A[x:=M]$
must be used and
an alias $x$ for
$M$ be sent. To send an alias in this way in \apiinstance{} it must be introduced explicitly, as in $\pass{\subst{M}{x}}\parop\op{a}{x}\sdot P$, and this agent is {\em not} the same as  $\op{a}M\sdot P$.
Therefore, although the agents \[P=\res{k,m}\opa{a}{\sym{enc}{m,k}}\sdot P'\] and \[Q=\res{x,k,m}(\subst{\sym{enc}{m,k}}{x}\parop\op{a}x\sdot P')\] (from equations (\ref{eqn:p}) and (\ref{eqn:q})) are the same in the applied pi-calculus,
the \apiinstance{} counterparts of the
agents are different. In \apiinstance{}, $P$ in (\ref{eqn:p}) represents an agent that     
emits the cleartext ``$\sym{enc}{m,k}$''. Any agent that  receives this will immediately learn both $m$
and $k$, and any scope of $k$ will be opened in the process. This kind of agent
can only indirectly be represented in the applied pi-calculus, by sending the
restricted names separately one at a time.
In contrast, the \apiinstance{} counterpart of (\ref{eqn:q}) is 
$Q=\res{x,k,m}(\pass{\subst{\sym{enc}{m,k}}{x}}\parop\op{a}x\sdot P')$
and defines $Q$ to emit an alias for $\sym{enc}{m,k}$.
As in the applied pi-calculus since $k$ is scoped a recipient will not learn $m$. 
If the same recipient later receives $k$, an alias $u$ for the message $m$ can be constructed as $\pass{\subst{\sym{dec}{x,k}}{u}}$.
Similarly, the agents $R_1$ and $R_2$ below are equivalent in applied pi, but the corresponding agents in \apiinstance{} are different.
\[
\begin{array}{rcl}
R_1&=&\res{x,k,m}(\subst{\sym{enc}{m,k}}{x}\parop\op{a}x\sdot \op{a}x\sdot P')\\
R_2&=&\res{x,k,m}(\subst{\sym{enc}{m,k}}{x}\parop \res{y}(\subst{x}{y}\parop \op{a}x\sdot \op{a}y\sdot P'))\\
\end{array}
\]
In the applied pi-calculus, a new alias for a term can always be introduced ``on-the-fly'', and it is impossible to tell $R_1$ and $R_2$ apart -- they are structurally equivalent. 
The psi-calculus approach gives the possibility to discern the two agents, similarly to how the same ciphertext bitstring sent twice can be identified even if the plaintext cannot be recovered. 
To avoid this, a new alias needs to be explicitly introduced for each transmission, mimicking a probabilistic 
crypto where different ciphertext bitstrings correspond to the same plaintext and key.

Thus in psi-calculi,
communication objects can range from literal data terms to
indirect references, giving the user of the
calculus the possibility to choose the appropriate form.

Another difference between the calculi is illustrated by the agent $A$ of the
compositionality counterexample (\ref{eq:api-comp}): Its counterpart $P_A$ in \apiinstance{} is
$(\nu a)(\pass{\subst{a}x}\parop x.b.\nil)\gt{x}\res{a}(\pass{\subst{a}x}\parop b.\nil)$ and
is not equivalent to $(\nu a)(\pass{\subst{a}x}\parop\nil)$; indeed also $P_A\parop\overline{x}.\nil\gtt\gt{b}$ in our labelled
semantics. 

In Section~\ref{sec:crypto-example} we present a simpler psi-calculus for expressing cryptographic examples.

\pagebreak[4]
\subsection{Fusion and concurrent constraints}

\subsubsection{Fusion}\label{sec:fusion}

\label{sec:ccpi}
The concept of \emph{fusion} means that communication can result in
pairs of names being ``fused together'' in the sense that they can
thereafter be considered the same.
Fusion was independently developed by
Fu~\cite{fu:proof-theoretical-approach} (the $\chi$-calculus), Parrow and
Victor~\cite{parrow.victor:fusion-calculus} (the fusion calculus), and by Wischik and
Gardner~\cite{gardner.wischik:explicit-fusions-mfcs,gardner.wischik:explicit-fusions} (the pi-F calculus). 
The fusion primitive was also encoded in the asynchronous pi-calculus by Merro~\cite{merro:expressiveness-chi}, using equators.
In psi-calculi, fusion can be formulated in a way reminiscent of the
equator encoding: the assertions are equivalence statements between
names (cf. explicit fusions or equators). A simple psi-calculus with fusion, call it \textbf{Fi}, would be the following:
\begin{mathpar}
\begin{array}{rcl}
\terms & \defn & \nameset\\
\conditions & \defn & \{a=b : a,b\in \terms\} \\
\assertions & \defn & \{\,\{a_1=b_1, \ldots, a_n=b_n\} : a_i \in \nameset, b_i
\in \nameset\}\\
\sch &\defn & =\\
\ftimes & \defn & \cup \\
\emptyframe & \defn & \emptyset \\
\multicolumn{3}{l}{\Psi\vdash a=b\text{ if }(a,b)\in\textsc{eq}(\Psi)}\\
\end{array}
\end{mathpar}
where $\textsc{eq}(\Psi)$ is the equivalence closure of $\Psi$ (i.e. transitive, symmetric and reflexive closure).
Thus terms are names,
assertions are name fusions, and the entailment relation deduces equality
between names based on fusion assertions treated as equivalence
relations.
We can verify that this is indeed a valid psi-calculus: the substitution properties are proved in Section~\ref{sec:nominal}, and we just need to investigate the requisites of Definition~\ref{def:entailmentrelation}. 
Transitivity and reflexivity of the channel equivalence follows from the same properties of $=$; commutativity, associativity and identity follow from the same properties of $\cup$.
For compositionality, let $\Psi_1$ and $\Psi_2$ be two equivalent assertions. This means $\textsc{eq}(\Psi_1)=\textsc{eq}(\Psi_2)$; we must show that for any $\Psi_3$ we have $\textsc{eq}(\Psi_1\cup\Psi_3)=\textsc{eq}(\Psi_2\cup\Psi_3)$. 
Using the fact that $\textsc{eq}(A\cup B)=\textsc{eq}(\textsc{eq}(A)\cup B)$, we
have $\textsc{eq}(\Psi_1\cup\Psi_3)=\textsc{eq}(\textsc{eq}
(\Psi_1)\cup\Psi_3)=\textsc{eq}(\textsc{eq}(\Psi_2)\cup\Psi_3)=\textsc{eq}
(\Psi_2\cup\Psi_3)$.

In the $\chi$-calculus, fusion calculus, and pi-F calculus, input and output prefixes are completely symmetric and in particular the input is not binding. An example transition in the pi-F calculus (using the syntax of~\cite{wischik:thesis}) is
\(
a\vec{b}\sdot P\parop \op{a}\vec{d}\sdot Q\gtt \fusion{\vec{b}}{\vec{d}}\parop P\parop Q
\)
where $\fusion{\vec{b}}{\vec{d}}$ (for $\vec{b}$ and $\vec{d}$ of equal
length) is a \emph{fusion} which allows us to treat
each $b_i\in\vec{b}$ as equivalent to $d_i\in\vec{d}$. 
Inputs in \textbf{Fi} can still be binding, and
  we can represent the non-binding pi-F input
$ a\vec{b}\sdot P$ as
$a(\vec{c})\sdot(\pass{\{\vec{b}=\vec{c}\}}\parop P)$
 where
$\vec{c}\freshin a\vec{b}\sdot P$. 
For example, the pi-F communications 
\[
a b\sdot \op{c}c\sdot P\parop \op{a}{c}\sdot b d\sdot Q\gtt \fusion{b}{c}\parop \op{c} c\sdot P\parop b d\sdot Q\gtt \fusion{b}{c}\parop \fusion{c}{d}\parop P\parop Q
\]
are expressed as:
\begin{mathpar}
\begin{array}{cl}
\multicolumn{2}{l}{a({e})\sdot\big(\pass{\{{b}={e}\}}\parop \op{c}c\sdot P\big)\;\parop\; \op{a}{c}\sdot b(x)\sdot\big(\pass{\{x=d\}}\parop Q\big)}\\
\gtt & \big(\pass{\{{b}={e}\}}\parop \op{c}c\sdot P\big)\lsubst{{c}}{{e}}\;\parop\; b(x)\sdot\big(\pass{\{x=d\}}\parop Q\big)\\
= & \pass{\{{b}={c}\}}\parop \op{c}c\sdot P \;\parop\; b(x)\sdot\big(\pass{\{x=d\}}\parop Q\big)\\
\gtt & \pass{\{{b}={c}\}}\parop P \;\parop\; \big(\pass{\{x=d\}}\parop Q\big)\lsubst{c}{x}\\
= & \pass{\{{b}={c}\}}\parop P \;\parop\; \pass{\{c=d\}}\parop Q
\end{array}
\end{mathpar}

Below, we establish an operational correspondence between the pi-F calculus and \textbf{Fi}.  Our presentation does not include the full details of the pi-F semantics, instead we refer to~\cite{gardner.wischik:explicit-fusions}.
The syntax used there differs a little from that used in the examples
above: most notably, a prefix written $\op{a}\vec{x}\sdot P$ above is instead
written
$\op{a}\sdot(\langle\vec{x}\rangle\parop P)$ 
 (and symmetrically for inputs); here
$\langle\vec{x}\rangle$ is a vector of datums and the
parallel composition operator is not symmetric for datums. 
Input and output transitions in are on the form $P\gt{a}P'$ where $P'$ is on the form $(\nu\vec{c})(\langle\vec{x}\rangle\parop P)$ and $\vec{c}\subseteq\vec{x}$.
For ease of reading, we write $\langle\vec{x}\rangle P$ for $(\langle\vec{x}\rangle\parop P)$ below.

The encoding of pi-F processes into \textbf{Fi} is as follows:
\[
\begin{array}{rcl}
\encodefi{a\sdot\langle\vec{b}\rangle P} &= & a(\vec{c})\sdot(\pass{\{\vec{b}=\vec{c}\}}\parop \encodefi{P})\quad\text{where }\vec{c}\freshin a\vec{b}\sdot P\\
\encodefi{\op{a}\sdot\langle\vec{c}\rangle P} &= & \op{a}\vec{c}\sdot \encodefi{P}\\
\encodefi{\fusion{\vec{x}}{\vec{y}}} &=&\pass{\{\vec{x}=\vec{y}\}}\\
\end{array}
\]
and is homomorphic for the other operators.  To encode e.g. $\op{a}\sdot(\nu c)\langle c\rangle P$ we first rewrite it to the structurally congruent process $(\nu c)\op{a}\sdot\langle c\rangle P$ (where $c\neq a$).

In~\cite{gardner.wischik:explicit-fusions}, two labelled transition semantics are defined for pi-F and proved to coincide: the \emph{quotiented} and the \emph{structured} semantics.
The first has a traditional rule for using structural congruence ($\equiv$) to derive transitions: if $Q\equiv P\gta P'\equiv Q'$ then $Q\gta Q'$.  
The second semantics has a similar rule but which only allows $\equiv$ to be used after the transition: if $P\gta P'\equiv P_1'$ then $P\gta P_1'$. %
In psi-calculi there is no such structural rule.
For the operational correspondence, however, by the lemma below we can
select a suitable structural representative of the pi-F process.

\begin{lemma}\label{lmm:st-deduction}
In the quotiented semantics of pi-F, if $P\gt{\alpha}P'$ with a deduction tree of depth $n$, there is a deduction tree for the transition of depth no larger than $n$ which uses structural congruence only in its last deduction, or not at all.
\begin{proof}
By induction over $n$.
\end{proof}
\end{lemma}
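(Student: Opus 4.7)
The plan is to proceed by induction on the depth $n$ of the quotiented derivation, case-splitting on the rule used at the root. The goal at each step will be to \emph{collect} the structural-congruence steps appearing inside the tree and push them upward into a single application at the root.

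In the base case $n=1$ there are no sub-derivations, so if the derivation is a single axiom we are done, and if it is a one-step application of the structural rule (after, say, an axiom), the conclusion is already in the required form. For the inductive step, if the root rule is itself the structural rule $Q\equiv P\gta P'\equiv Q'$, I would apply the IH to the subderivation $P\gta P'$ to obtain $P\equiv P_1\gta P_1'\equiv P'$ where the core transition $P_1\gta P_1'$ has no use of $\equiv$; by transitivity of $\equiv$ we get $Q\equiv P_1\gta P_1'\equiv Q'$, which is again a single final application of $\equiv$ and whose depth is $\le n$.

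If instead the root is one of the ``real'' rules of pi-F (e.g.\ \textsc{Par}, \textsc{Com}, \textsc{Res}, \textsc{Open}, input/output axioms closed under contexts), I would apply the IH to each immediate sub-derivation. Each one then has depth at most $n-1$ and uses $\equiv$ only at its root. I strip off that final $\equiv$-step from each sub-derivation, obtaining congruence-free cores of depth at most $n-2$; re-apply the original rule to these cores (this is sound because $\equiv$ is a congruence w.r.t.\ parallel, restriction, and prefixing, so the ``stripped'' processes are indeed structurally congruent to the originals and admit the same rule with matching side conditions on free/bound names); this yields a transition of depth at most $n-1$. Finally, combine the several outer congruences into a single one at the root using congruence of $\equiv$ with respect to the surrounding operator, giving a derivation of depth at most $n$ with $\equiv$ used only in the last deduction.

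The main obstacle will be the \textsc{Com} case, where two sub-derivations each contribute a final structural congruence, possibly renaming the bound names appearing on the complementary labels and in the datums $\langle\vec{x}\rangle$. I would have to check that one can always choose representatives so that the two stripped sub-derivations really do combine through \textsc{Com} with matching subjects and compatible bound-name conventions, and that the two surrounding congruences can be merged into one congruence between $P\parop Q$ and the combined process. A subsidiary point to verify is that the depth bookkeeping actually closes: stripping the final $\equiv$ from a sub-derivation of depth $\le n-1$ leaves a core of depth $\le n-2$, so re-applying the root rule yields depth $\le n-1$ and the single outer $\equiv$-step brings it back to $\le n$, which is exactly what the statement requires.
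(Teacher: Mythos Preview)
Your proposal takes exactly the same approach as the paper, whose entire proof is the single sentence ``By induction over $n$.'' Your plan---apply the IH to each sub-derivation, strip its trailing $\equiv$, re-apply the root rule to the $\equiv$-free cores, and reassemble one $\equiv$ at the root using that $\equiv$ is a congruence---is the standard normalisation argument and is the intended one.

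The depth-bookkeeping concern you flag for the binary \textsc{Com} case is a genuine subtlety that the paper simply does not address. In the mixed situation where one premise (after IH) ends in $\equiv$ and the other, \emph{deeper} one does not, a naive count gives $n+1$: stripping yields cores of depths $\le d_1-1$ and $\le d_2$, \textsc{Com} gives $\le d_2+1=n$, and the final $\equiv$ adds one more. Closing this requires a pi-F-specific fact, namely that a pre-transition $\equiv$ can be absorbed into the $\equiv$-free derivation (the paper later invokes Lemma~11 of the pi-F reference for exactly this, and the coincidence of the quotiented and structured semantics is stated as known). With that in hand the source-side congruence $P_1\equiv R_1$ can be eliminated rather than deferred, and the count closes. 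So your scepticism is well placed; the resolution lies in the cited pi-F results rather than in the bare inductive scheme.
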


\newif\iflongpif\longpiftrue %

In the proof below, we make use of the fact that weakening holds in \textbf{Fi}: if $\Psi\vdash\varphi$ then $\Psi\ftimes\Psi'\vdash\varphi$, and thus in particular $\unit\frames P\gt{\alpha}P'$ implies $\Psi\frames P\gt{\alpha}P'$.

\begin{proposition}
In the quotiented semantics of pi-F,
\begin{enumerate}[\em(1)]
\item 
If $P\gt{\op{a}}(\nu\vec{c})\langle\vec{x}\rangle P'$ with $\vec{c}\subseteq\vec{x}$ and $a\freshin\vec{c}$, then there exists a $Q$ s.t. $Q\equiv P$ and
$\unit\frames\encodefi{Q}\gt{\op{a}(\nu\vec{c})\vec{x}} Q'$ and
$\exists P'': P'\equiv P''$ and $Q'=\encodefi{P''}$.
\item 
If $P\gt{a}(\nu\vec{c})\langle\vec{x}\rangle P'$ with $\vec{c}\subseteq\vec{x}$ and $a\freshin\vec{c}$, then there exists a $Q$ s.t. $Q\equiv P$ and
$\unit\frames\encodefi{Q}\gt{a\vec{y}}(\nu\vec{c})(\pass{\{\vec{x}=\vec{y}\}}\parop Q')$ and
$\exists P'': P'\equiv P''$ and $Q'=\encodefi{P''}$.
\item 
If $P\gt{\tau}P'$  then there exists a $Q$ s.t. $Q\equiv P$ and $\unit\frames\encodefi{Q}\gt{\tau}Q'$ and
$\exists P'': P'\equiv P''$ and $Q'=\encodefi{P''}$.
\end{enumerate}

\proof
By Lemma~\ref{lmm:st-deduction}, without loss of generality %
we can assume that the transition of $P$ in the premise can be deduced also for $Q$ without using the transition rule for structural congruence. The proof is then by induction on the depth of the deduction, matching each operational rule of pi-F with a rule in psi.
\iflongpif
\begin{enumerate}[(1)]
\item 
Base case: $P=\op{a}\sdot P_1$ and $P\gt{\op{a}}P_1$ where $P_1=(\nu\vec{c})\langle\vec{x}\rangle P'$ with $\vec{c}\subseteq\vec{x}$. 
We proceed by induction over the length of $\vec{c}$. 
The base case is when $P_1=\langle\vec{x}\rangle P'$, and $\encodefi{P}=\op{a}\vec{x}\sdot\encodefi{P'}$. 
Then $\unit\frames\encodefi{P}\gt{\op{a}\vec{x}}\encodefi{P'}$. 
In the induction case, $P\equiv(\nu\vec{c})\op{a}\sdot\langle\vec{x}\rangle P'=Q$ with $a\freshin\vec{c}$. 
Then $\encodefi{Q}=(\nu\vec{c})\op{a}\vec{x}\sdot\encodefi{P'}$ and by a sufficient number of uses of \textsc{Open}, $\unit\frames\encodefi{Q}\gt{\op{a}(\nu\vec{c})\vec{x}}\encodefi{P'}$.

Induction: we show the case for the parallel rule.  
Here $P=P_1\parop P_2$ and $P_1\gt{\op{a}}(\nu\vec{c})\langle\vec{x}\rangle P_1'$, so $P_1\parop P_2\gt{\op{a}}(\nu\vec{c})\langle\vec{x}\rangle (P_1'\parop P_2)$ with $\vec{c}\freshin P_2$.
By induction, $\unit\frames\encodefi{P_1}\gt{\op{a}(\nu\vec{c})\vec{x}}Q_1'$, and by \textsc{Par} (and weakening) also $\unit\frames\encodefi{P_1}\parop\encodefi{P_2}\gt{\op{a}(\nu\vec{c})\vec{x}}Q_1'\parop\encodefi{P_2}$, since $\vec{c}\freshin\encodefi{P_2}$.

\item 
Similar to the output case, using \textsc{Scope} instead of \textsc{Open} for the induction over $\vec{c}$. 

\item 
Base case: $P=\op{a}\sdot P_1\parop a\sdot P_2$, where $P_i\equiv(\nu\vec{c_i})\langle\vec{x_i}\rangle P_i'$ with $\vec{c_i}\subseteq\vec{x_i}$, for $i\in\{1,2\}$, and $\vec{c_1}\freshin\langle\vec{x_2}\rangle P_2'$ and vice versa. 
Then $P\gtt %
(\nu\vec{c_1}\vec{c_2})(\fusion{\vec{x_1}}{\vec{x_2}}\parop P_1'\parop P_2')$.
By induction, $\unit\frames\encodefi{\op{a}\sdot P_1}\gt{\op{a}(\nu\vec{c_1})\vec{x_1}}Q_1'$ and $\unit\frames\encodefi{a\sdot P_2}\gt{a\vec{x_1}}(\nu\vec{c_2})(\pass{\{\vec{x_1}=\vec{x_2}\}}\parop Q_2')$, 
and $\unit\frames\encodefi{P_1}\parop\encodefi{P_2}\gtt(\nu\vec{c_1}\vec{c_2})(\pass{\{\vec{x_1}=\vec{x_2}\}}\parop Q_1'\parop Q_2')$.\medskip

Induction case: straight-forward, using corresponding operational rules.\qed
\end{enumerate}

\end{proposition}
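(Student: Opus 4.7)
My plan is to follow the structure suggested by the statement itself: first apply Lemma~\ref{lmm:st-deduction} to replace $P$ by a structurally congruent $Q$ whose transition is derived without any intermediate use of $\equiv$ (the final $\equiv$-step is absorbed into the choice of $P''$). Then I proceed by induction on the depth of the deduction of $Q\gt{\alpha}Q_{\mathrm{pif}}'$, matching each pi-F rule with a corresponding rule in the psi operational semantics of Table~\ref{table:struct-free-labeled-operational-semantics}. Throughout, I rely on the weakening property noted before the proposition, so that environmental assertions in psi-side applications of \textsc{Par} and \textsc{Com} can be freely enlarged.

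For the three clauses I would handle the base cases first. For an output base case, $P=\op{a}\sdot P_1$ with $P_1\equiv(\nu\vec{c})\langle\vec{x}\rangle P'$; I would first use structural congruence to bring the restrictions outermost, writing $Q=(\nu\vec{c})\op{a}\sdot\langle\vec{x}\rangle P'$, and then show by a secondary induction on the length of $\vec{c}$ that the psi-image $(\nu\vec{c})\op{a}\vec{x}\sdot\encodefi{P'}$ emits $\op{a}(\nu\vec{c})\vec{x}$ via \textsc{Out} followed by $|\vec{c}|$ uses of \textsc{Open}. The input base case is symmetric, using \textsc{Scope} in place of \textsc{Open} and using the encoding $\encodefi{a\sdot\langle\vec{b}\rangle P} = a(\vec{c})\sdot(\pass{\{\vec{b}=\vec{c}\}}\parop\encodefi{P})$ to produce the extra assertion $\pass{\{\vec{x}=\vec{y}\}}$ on the derivative. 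For the $\tau$ base case $\op{a}\sdot P_1\parop a\sdot P_2$, I combine the output and input clauses as inductive hypotheses, apply \textsc{Com} at subject $a$ (channel equivalence $a\sch a$ follows trivially), and verify that the residual $(\nu\vec{c_1}\vec{c_2})(\pass{\{\vec{x_1}=\vec{x_2}\}}\parop Q_1'\parop Q_2')$ is $\encodefi{P''}$ for the pi-F derivative $P''\equiv(\nu\vec{c_1}\vec{c_2})(\fusion{\vec{x_1}}{\vec{x_2}}\parop P_1'\parop P_2')$, using the encoding clause for $\fusion{\cdot}{\cdot}$.

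The inductive cases are mostly routine: each pi-F contextual rule (parallel, restriction, replication, and the fusion-propagation rules) has a direct psi counterpart. For the parallel case I invoke the induction hypothesis on the active subprocess and then \textsc{Par}, using freshness of $\vec{c}$ for the passive component (inherited from the pi-F side) and weakening to absorb any assertions coming from the frame of the passive component. Restriction is handled by \textsc{Scope}, and replication by \textsc{Rep}.

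The main obstacle I anticipate is bookkeeping around fusions: in pi-F the entailment-like power of $\fusion{\vec{x}}{\vec{y}}$ is exercised by structural congruence (equating names wherever they occur), whereas in \textbf{Fi} the same power is exercised by the entailment of the environmental assertions at the point where a communication or \textbf{case} fires. Lemma~\ref{lmm:st-deduction} lets me push all such ``rewriting'' to the outermost layer on the pi-F side, and the psi side accumulates the corresponding fusion assertions in the frame of the derivative; I must then exhibit a single $P''\equiv P'$ whose encoding matches the accumulated assertions in $Q'$. The compositionality argument for \textbf{Fi} already recorded in the excerpt (using $\textsc{eq}(A\cup B)=\textsc{eq}(\textsc{eq}(A)\cup B)$) is precisely what guarantees that this matching is well-defined and independent of the order in which fusions are assembled, so no further lemma is needed beyond careful alpha-renaming to satisfy the freshness side-conditions of \textsc{Com}, \textsc{Par}, \textsc{Scope}, and \textsc{Open}.
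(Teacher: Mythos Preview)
Your proposal is correct and follows essentially the same route as the paper: invoke Lemma~\ref{lmm:st-deduction} to strip structural congruence from the derivation, then induct on the depth of the deduction, matching pi-F rules to psi rules (\textsc{Out}+\textsc{Open} for output, \textsc{In}+\textsc{Scope} for input, \textsc{Com} for $\tau$, \textsc{Par} with weakening for parallel). Your final paragraph about fusion bookkeeping and the compositionality argument for \textbf{Fi} is overcautious for this direction---once Lemma~\ref{lmm:st-deduction} has pushed $\equiv$ to the outside, fusions in the pi-F derivation are inert parallel components and the encoding is simply homomorphic, so no entailment reasoning is needed; that machinery only comes into play in the converse proposition.
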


For the correspondence in the other direction, we use the structured semantics
of pi-F, which has a rule to rewrite labels: if $P\gt{\alpha} P'$ and $P\vdash
\alpha=\beta$, then $P\gt{\beta}P'$, where $P\vdash\varphi$ if the pi-F
correspondent to the frame of $P$ entails $\varphi$.  This is similar to the
rewriting done in psi-calculi, in the prefix and communication rules.

A transition in psi uses an assertion $\Psi$, which needs to be part of the process in pi-F; below we write $\unencodefi{\Psi}$ for the obvious mapping from \textbf{Fi} assertions to pi-F fusions.
In the proofs below, we use results from Section~\ref{sec:properties}, and write $P\equiv Q$ for two psi-calculus agents if they can be proved equal by only Theorems~\ref{thm:fullcong} and~\ref{thm:struct}, which correspond to the standard structural congruence.

\begin{proposition}\hfill
\begin{enumerate}[\em(1)]
\item \label{prop:fi-out}
If $\Psi\frames\encodefi{P}\gt{\op{a}(\nu\vec{c})\vec{x}} P'$ then %
$\unencodefi{\Psi}\parop P\gt{\op{a}}(\nu\vec{c})\langle\vec{x}\rangle Q$ and $\exists Q':Q\equiv\unencodefi{\Psi}\parop Q'$ and $P'=\encodefi{Q'}$
\item \label{prop:fi-in}
If $\Psi\frames\encodefi{P}\gt{a\vec{x}}\equiv(\nu\vec{c})(\pass{\{\vec{x}=\vec{y}\}}\parop P')$ where $\vec{c}\subseteq\vec{y}$ then %
$\unencodefi{\Psi}\parop P\gt{a}(\nu\vec{c})\langle\vec{y}\rangle Q$ and $\exists Q':Q\equiv\unencodefi{\Psi}\parop Q'$ and $P'=\encodefi{Q'}$
\item \label{prop:fi-tau}
If $\Psi\frames\encodefi{P}\gt{\tau}P'$ then %
$\unencodefi{\Psi}\parop P\gt{\tau} Q$ and $\exists Q': Q\equiv \unencodefi{\Psi}\parop Q'$ and $P'=\encodefi{Q'}$.
\end{enumerate}

\proof
By induction over the derivation of the psi-calculus transition.
\iflongpif %
We sometimes use \cite[Lemma 11]{gardner.wischik:explicit-fusions} to restructure a pi-F agent before the transition ($P\equiv P_1\gt{\alpha}P'$ implies $P\gt{\alpha}P'$), and idempotence of fusions.
\begin{enumerate}[(1)]
\item 
Base case: $\Psi\frames\encodefi{P}\gt{\op{a}\vec{x}}P'$ by the \textsc{Out} rule. Then
$P=\op{b}\vec{x}\sdot Q$ and $\Psi\frames\encodefi{P}\gt{\op{a}\vec{x}}\encodefi{Q}$ where $\Psi\vdash a\sch b$, thus in pi-F $\unencodefi{\Psi}\parop P\gt{b}\langle\vec{x}\rangle Q$ and %
 $\unencodefi{\Psi}\parop P\vdash a=b$ so $\unencodefi{\Psi}\parop P\gt{a}\langle\vec{x}\rangle Q$.

Induction: we show the case for \textsc{Open}. Here $\encodefi{P}=(\nu c)\encodefi{P_1}$ and $\Psi\frames\encodefi{P_1}\gt{\op{a}}(\nu\vec{e})P'$ s.t. $c\freshin\vec{e},\Psi,a$ and $c\in\n(\vec{x})$, and by \textsc{Open} $\Psi\frames(\nu c)\encodefi{P_1}\gt{\op{a}}(\nu c\vec{e})P'$.
By induction $\unencodefi{\Psi}\parop P_1\gt{\op{a}}(\nu\vec{e})Q$, and thus $(\nu c)(\unencodefi{\Psi}\parop P_1)\gt{\op{a}}(\nu c\vec{e})P'$, and also $\unencodefi{\Psi}\parop (\nu c)(P_1)\gt{\op{a}}(\nu c\vec{e})P'$.

\item 
Base case: $\Psi\frames\encodefi{P}\gt{a\vec{x}}P'$ by the \textsc{In} rule. Then
$P=b\vec{y}\sdot Q$ and $\Psi\frames\encodefi{P}\gt{a\vec{x}}\pass{\{\vec{x}=\vec{y}\}}\parop\encodefi{Q}$ where $\Psi\vdash a\sch b$, thus in pi-F $\unencodefi{\Psi}\parop P\gt{b}\langle\vec{y}\rangle(\unencodefi{\Psi}\parop Q)$ and as above equally $\unencodefi{\Psi}\parop P\gt{a}\langle\vec{y}\rangle (\unencodefi{\Psi}\parop Q)$.

Induction: we show the case for \textsc{Par}. 
Here $\encodefi{P}=\encodefi{P_1}\parop\encodefi{P_2}$ and 
$\Psi\ftimes\frass{\encodefi{P_2}}\frames\encodefi{P_1}\gt{a\vec{x}}\equiv(\nu\vec{c})(\pass{\{\vec{x}=\vec{y}\}}\parop P_1')$ and by induction
$\unencodefi{\Psi\ftimes\unencodefi{\frass{\encodefi{P_2}}}}\parop P_1\gt{a}(\nu\vec{c})\langle\vec{y}\rangle Q_1$, 
where $\vec{c}\subseteq\vec{y}$ and $\exists Q':Q\equiv(\unencodefi{\Psi}\parop Q')\wedge P_1'=\encodefi{Q'}$.
Then $\Psi\frames\encodefi{P_1}\parop\encodefi{P_2}\gt{a\vec{x}}\equiv(\nu\vec{c})(\pass{\{\vec{x}=\vec{y}\}}\parop P_1')\parop\encodefi{P_2}\equiv(\nu\vec{c})(\pass{\{\vec{x}=\vec{y}\}}\parop P_1'\parop\encodefi{P_2})$ where $\vec{c}\freshin\encodefi{P_2}$. 
In pi-F, we have $\unencodefi{\Psi}\parop\unencodefi{\frass{\encodefi{P_2}}}\parop P_1\parop P_2\gt{a}(\nu\vec{c})\langle\vec{y}\rangle(Q_1\parop P_2)$.
W.l.o.g. (see~\cite[p613]{gardner.wischik:explicit-fusions}) we can assume that $P_2$ is on the form $\phi\parop P''$ where $\phi$ is a fusion and $P''$ has no top-level fusions. Thus $\frnames{\encodefi{P_2}}=\epsilon$, and 
by idempotence of fusions, equally
$\unencodefi{\Psi}\parop\parop P_1\parop P_2\gt{a}(\nu\vec{c})\langle\vec{y}\rangle(Q_1\parop P_2)$.

\item 
Base case: $\Psi\frames\encodefi{P}\gtt Q$  by the \textsc{Com} rule. Then
 $\encodefi{P}=\encodefi{P_1}\parop\encodefi{P_2}$ and
 $\Psi\ftimes\frass{\encodefi{P_2}}\frames \encodefi{P_1}\gt{\op{a}(\nu \vec{c})\vec{x}}P_1'$,
 $\Psi\ftimes\frass{\encodefi{P_1}}\frames \encodefi{P_2}\gt{b\vec{x}}(\nu\vec{e})(\pass{\{\vec{x}=\vec{y}\}}\parop P_2'')$,
 where $\vec{c}\freshin P_2$, $\vec{e}\freshin P_1$, $\vec{e}\subseteq\vec{y}$, and
 $\Psi\ftimes\frass{\encodefi{P_1}}\ftimes\frass{\encodefi{P_2}}\vdash a\sch b$,
as well as
\[\quad\enspace\Psi\frames\encodefi{P_1}\parop\encodefi{P_2}\gt{\tau}(\nu\vec{c})(P_1'\parop(\nu\vec{e})(\pass{\{\vec{x}=\vec{y}\}}\parop P_2''))\equiv(\nu\vec{c}\vec{e})(\pass{\{\vec{x}=\vec{y}\}}\parop P_1'\parop P_2'').\]
Application of (\ref{prop:fi-out}) yields $\unencodefi{\Psi\ftimes\frass{\encodefi{P_2}}}\parop P_1\gt{\op{a}}(\nu\vec{c})\langle\vec{x}\rangle P_1'\parop\unencodefi{\Psi\ftimes\frass{\encodefi{P_2}}}$,
and by (\ref{prop:fi-in}) we obtain
$\unencodefi{\Psi\ftimes\frass{\encodefi{P_1}}}\parop P_2\gt{a}(\nu\vec{e})\langle\vec{y}\rangle P_2''\parop\unencodefi{\Psi\ftimes\frass{\encodefi{P_2}}}$. 

Then %
\[
\begin{array}{rl}
\multicolumn{2}{l}{%
\unencodefi{\Psi\ftimes\frass{\encodefi{P_1}}}\parop\unencodefi{\Psi\ftimes\frass{\encodefi{P_2}}}\parop P_1\parop P_2}\\
\gt{?a=b} & (\nu\vec{c}\vec{e})(\fusion{\vec{x}}{\vec{y}}\parop P_1'\parop P_2''\parop \unencodefi{\Psi\ftimes\frass{\encodefi{P_1}}}\parop\unencodefi{\Psi\ftimes\frass{\encodefi{P_2}}})\\
\equiv & (\nu\vec{c}\vec{e})(\fusion{\vec{x}}{\vec{y}}\parop P_1'\parop P_2''\parop \unencodefi{\Psi\ftimes\frass{\encodefi{P_1}}\ftimes\frass{\encodefi{P_2}}})
\end{array}
\] 
and since 
$\Psi\ftimes\frass{\encodefi{P_1}}\ftimes\frass{\encodefi{P_2}}\vdash a\sch b$ %
the label can be rewritten to $?a=a$ and then %
further rewritten to $\tau$.
As above we can assume that $P$ is on the form $\phi\parop P''$ where $\phi$ is a fusion and $P''$ has no top-level fusions. Thus $\frnames{P_1}=\frnames{P_2}=\epsilon$, and 
by idempotence of fusions, equally
$\unencodefi{\Psi}\parop P_1\parop P_2\gt{\tau}(\nu\vec{c}\vec{e})(\fusion{\vec{x}}{\vec{y}}\parop P_1'\parop P_2'')\parop \unencodefi{\Psi}$.\medskip

Induction: straight-forward matching of transitions rules.\qed
\end{enumerate}
\fi %
\end{proposition}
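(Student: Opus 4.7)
The plan is to prove all three clauses simultaneously by induction on the derivation of the psi-calculus transition on the left, because the \textsc{Com} case in clause~(3) will appeal to both clause~(1) and clause~(2) on its two premises. The matching pi-F transition in each case is produced by reading off which psi rule was applied and invoking the corresponding operational rule (or combination) in the structured semantics of pi-F; two bookkeeping points must be tracked throughout. First, any assertion context $\Psi$ on the psi side reappears as $\unencodefi{\Psi}$ in parallel on the pi-F side, and the frames of spectator sub-agents must be absorbed into such parallel fusions by structural congruence. Second, every use of channel equivalence $\Psi' \vdash M \sch N$ in psi is discharged on the pi-F side by rewriting the subject of the transition label, exploiting that the structured semantics promotes $P\gt{\alpha}P'$ to $P\gt{\beta}P'$ whenever the agent entails $\alpha=\beta$.

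The base cases are the output and input prefix rules. For \textsc{Out}, $\encodefi{P}=\op{b}\vec{x}\sdot\encodefi{P_0}$ with $\Psi\vdash a\sch b$, and the pi-F step $\unencodefi{\Psi}\parop\op{b}\sdot\langle\vec{x}\rangle P_0\gt{b}\langle\vec{x}\rangle(\unencodefi{\Psi}\parop P_0)$ exists immediately; its subject is then rewritten from $b$ to $a$ using the fusion content of $\unencodefi{\Psi}$. The \textsc{In} case is more delicate because the encoded prefix is $a(\vec{c})\sdot(\pass{\{\vec{b}=\vec{c}\}}\parop\encodefi{P_0})$, so the psi residual has exactly the shape $\pass{\{\vec{x}=\vec{y}\}}\parop\encodefi{P_0}$ required by clause~(2); this is matched by $\unencodefi{\Psi}\parop b\sdot\langle\vec{y}\rangle P_0\gt{b}\langle\vec{y}\rangle(\unencodefi{\Psi}\parop P_0)$ followed again by subject rewriting.

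For the inductive step the cases \textsc{Scope} and \textsc{Open} mirror restriction and scope extrusion in pi-F; \textsc{Par} invokes the induction hypothesis on the active sub-derivation and carries the frame $\frass{\encodefi{P_2}}$ of the spectator component along, which corresponds to its top-level fusion in pi-F and can be moved in and out of the parallel composition by structural congruence. The critical case is \textsc{Com}: applying clause~(1) to the output premise and clause~(2) to the input premise yields a pi-F output and a pi-F input, and the side condition $\Psi\ftimes\frass{\encodefi{P_1}}\ftimes\frass{\encodefi{P_2}}\vdash a\sch b$ is precisely what licenses rewriting the two pi-F labels to a common subject, after which the pi-F communication rule fires and produces the required $\tau$.

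The hard part will be the bookkeeping of frames and bound names across \textsc{Com}. One must verify that the bound output names $\vec{c}$ are fresh for the input component on the pi-F side so that their scope can be extruded around the input, and simultaneously that the frames $\frass{\encodefi{P_i}}$ contribute only top-level fusions and no fresh binders. I would use the standard normal form $P_i\equiv\phi_i\parop P_i''$ with $P_i''$ free of top-level fusions, so that $\frnames{\encodefi{P_i}}=\epsilon$; once this normalisation is in place, idempotence of fusions and scope extension in pi-F reduce the remaining obligations to a mechanical rule-by-rule correspondence.
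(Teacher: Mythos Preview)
Your proposal is correct and follows essentially the same approach as the paper's proof: induction on the psi-calculus derivation, base cases handled via the pi-F label-rewriting rule using the fusion $\unencodefi{\Psi}$, \textsc{Par} handled by the normal form $P_i\equiv\phi_i\parop P_i''$ to force $\frnames{\encodefi{P_i}}=\epsilon$ and idempotence of fusions, and \textsc{Com} discharged by applying clauses~(1) and~(2) to the premises and then rewriting the pi-F communication label to $\tau$. The bookkeeping points you flag (freshness of $\vec{c}$, absorption of spectator frames) are exactly the ones the paper tracks.
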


\subsubsection{Concurrent constraints}

Process calculi which integrate communication and mobility with concurrent constraint (CC) programming have appeared 
e.g.~in~\cite{smolka:foundation-higher-order,niehren.muller:constraints-free,diaz.rueda.ea:pi-calculus,buscemi.montanari:open-bisimulation}.
Here, the \textbf{ask} and \textbf{tell} operations interact with a constraint store. The $\textbf{ask }\varphi\sdot P$ operation checks whether a constraint $\varphi$ is satisfied by the current store and only then proceeds as $P$, corresponding to $\ifthen{\varphi}{\tau\sdot P}$ in psi-calculi.
The $\textbf{tell }\Psi\sdot P$ operation adds a constraint $\Psi$ to the current store before proceeding as $P$. 
Two variants of tell have been identified and used: one which can only proceed if the resulting store is consistent is known as \emph{atomic} tell, and one which allows an inconsistent store and is called \emph{eventual} tell~\cite{saraswat:concurrent-constraint-programming}. 
The eventual tell operation is used in earlier process calculi which integrate constraints and communication, e.g.{} the $\pi^+$-calculus~\cite{diaz.rueda.ea:pi-calculus} and the $\rho$-calculus~\cite{niehren.muller:constraints-free}. 
The atomic tell operation is used in the CC-Pi calculus~\cite{buscemi.montanari:open-bisimulation}.

We here present a psi-calculus with concurrent constraints.  
Similarly to CC-Pi we
extend a basic pi-F-like calculus %
with ask and tell operations and
 use a \emph{named c-semiring} \cite{buscemi.montanari:open-bisimulation} as the constraint system parameter.
Such a constraint system contains names, name fusion/equality constraints and a name hiding operator $\nu$, and supports general
constraint semirings, e.g.{} Herbrand constraints.

Our psi-calculus, call it \textbf{Ci}, with associated named c-semiring $\mathcal{C}=\langle A,\oplus,\otimes,0,1\rangle$ and induced preorder $\preceq$ is:
\begin{mathpar}
\begin{array}{rcl}
\terms &\defn & \nameset\\
\assertions\defn\conditions & \defn & A \\
\sch &\defn& =\\
\ftimes & \defn & \mbox{The similarly notated operator $\otimes$ in $\mathcal{C}$} \\
\emptyframe & \defn &1 \\
\vdash&\defn & \preceq\\
\end{array}
\end{mathpar}

\noindent Thus terms are names, while conditions and assertions are defined by the carrier $A$
of the named c-semiring, which by definition includes names and name fusions, and implicitly name equality conditions.
The properties of named c-semirings guarantee the requirements of psi-calculi, assuming that substitution on the named c-semiring satisfies our requisites.
Abelian monoid properties follow directly, compositionality from $\Psi_1\sequivalent \Psi_2 \Rightarrow \Psi_1=\Psi_2$, and the channel equivalence properties from the fact that = is an equivalence.
We extend the encoding of (monadic) pi-F processes and
represent $\textbf{ask } \varphi\sdot P$ as $\ifthen{\varphi}{\tau\sdot P}$.
An eventual tell operation
$\textbf{tell$_e$ }\Psi\sdot P$ can be represented as $\tau\sdot(\pass{\Psi} \parop P)$.
The atomic \textbf{tell$_a$} operation can be handled by adding a condition $\sym{\consistent}{\Psi}$ to $\conditions{}$ with $\Psi\vdash\sym{\consistent}{\Psi'}$ if $\Psi\ftimes\Psi'$ is consistent, and representing $\textbf{tell$_a$ }\Psi\sdot P$ as $\ifthen{\sym{\consistent}{\Psi}}{\tau\sdot(\pass{\Psi}\parop P)}$.

The most prominent difference from the CC-Pi calculus is that
there, name fusions resulting from communication 
are required to be consistent with the store, otherwise the communication cannot happen. 
In contrast our semantics will allow communication transitions that lead to an inconsistent store. 
This difference is illustrated below:
\[
\begin{array}{rcl}
\multicolumn{3}{@{}l}{\text{In CC-Pi:}}\\
P&=&\Psi\parop\op{a}{b}\sdot Q\parop {c}{d}\sdot R
 \gtt
 \Psi\ftimes (b=d)\parop Q\parop R \\
&&\multicolumn{1}{r}{
\text{if $\Psi\preceq a=c$ and $\Psi\ftimes (b=d)$ consistent} }
\\
\multicolumn{3}{@{}l}{\text{In \textbf{Ci}:}}\\
P &=&
\pass{\Psi}\parop \out{a}{b}\sdot Q\parop {c}{(x)}\sdot(\pass{x=d}\parop R)\\
\multicolumn{2}{r}{\gtt}&
 \pass{\Psi}\parop \pass{b=d}\parop Q\parop R
\qquad
\text{if $\Psi\vdash a=c$}
\end{array}
\]
While it appears not possible to integrate an atomic consistency check in a psi-calculus communication without changing our \textsc{Com} rule,
explicit consistency checks (like $\sym{cons}{\Psi}$) can be used to handle interesting applications in practice.
The semantics of CC-Pi is given by a structural congruence and a reduction relation. There is also a labelled operational semantics, but it is in fact not compositional. Consider the CC-Pi agents
\[P=\res{b,x}(x=b\parop \out{a}{x}\sdot b\sdot c) \qquad
Q=\res{b,x}(x=b\parop \out{a}{x})\]
 (where insignificant objects are omitted). They have the same constraint store and the same transitions in all constraint contexts.
However, they do not have the same transitions in all process contexts:
a parallel context
$R=a(y).\op{y}$ tells the difference:
\[P\parop R\gtt\gtt \res{b}(x=b\parop x=y\parop c)
\gt{c}\]
while $Q\parop R$ of course has no such $c$ transition. Thus Theorem~1
of~\cite{buscemi.montanari:open-bisimulation} is incorrect: open
bisimilarity is not a congruence (see also~\cite{buscemi.montanari:ccpi-errata}).  
\label{sect:ccpianomaly}

The labelled semantics of CC-Pi has a curious asymmetry between the
rule for prefixes and the rule for communication: in the first case,
the constraint store cannot affect the label induced by the prefix,
while in the communication case, the constraint store judges whether
the subjects should be considered the same, enabling the
communication.  The psi-calculi have no such asymmetry: the assertions (corresponding to
the store) allow the subject to be rewritten in the prefix rules and
the subjects in \textsc{Com} are compared using the assertions (see 
Section~\ref{sec:channel-equivalence-everywhere} for a discussion).  A possible fix for
CC-Pi would involve allowing the store to rewrite terms, thus also subjects in
prefixes~\cite{buscemi:personal}. 

\bigskip
Psi-calculi go beyond most concurrent constraint systems in two
ways. Firstly, we allow arbitrary logics, even higher-order
ones. Secondly, we allow constraints and conditions to be data terms, which means 
an agent can transmit and receive these. 
For example, assume that \texttt{c} is a constraint and that
\texttt{f} is a function from assertions to assertions. Then in the agent
$\overline{a}\,{\tt c} \sdot P \parop a(z) \sdot (\pass{{\tt f}(z)} \parop Q)\gtt P\parop \pass{\mathtt{f}(\mathtt{c})}\parop Q$
the left hand agent sends the constraint {\tt c} to the right, and {\tt f} is applied to it. 
Similarly, if \texttt{p} is a unary predicate, in the agent
$\out{a}\,\mathtt{p}\sdot P \parop a(z)\sdot \ifthen{z(x)}{Q}\gtt P\parop \ifthen{\mathtt{p}(x)}{Q}$
the left hand agent sends the predicate to the right, which applies it to $x$.

\section{Applications}
\label{sec:applications}
In this section we will look at a few applications of \psicalculi{}, some of which have
been described before in other formalisms, and some which are novel.
\subsection{Structured terms as channels}
Calculi with channels that can carry complex data are common, but in most cases
the terms that represent channels are very simple,
usually only a single name. We here give some examples where they have
structure, and thus may contain more than one name.
\subsubsection{Frequency hopping spread spectrum}
\label{sect:frequency-hopping}
Wireless communication over a constant radio frequency has a number of
drawbacks. In a hostile environment a radio can be tuned in to the correct
frequency and
monitor the communication which is also vulnerable to
jamming.
A solution to these
problems is to jump
quickly between different frequencies in a scheme called frequency hopping
spread spectrum (FHSS), first patented in 1942 \cite{PATENT:US2292387}. 
To eavesdrop it would then be necessary to match both the order of the
frequencies and the pace of switching. Jamming is also made more difficult since
the available power would have
to be distributed over many frequencies. 

We will here show how this is modelled
in a psi-calculus. It is assumed that the initiator of the
communication and the
receiver share an algorithm used to calculate the next frequency.
The procedure starts by the initiator sending a communication request over some
predetermined frequency. The receiver then sends a seed back to the initiator
and both use it to calculate the sequence of frequencies to be used. Then the
initiator synchronises over the first calculated frequency to verify that it got
the right sequence. The communication then proceeds and
both parties change frequencies accordingly.

We will now look at the \psicalculus{} used to model this frequency hopping
algorithm. We let terms represent radio frequencies and use the unary
function $\sym{nextFreq}{M}$ to represent the
algorithm for calculating the next frequency, given the previous frequency $M$.
This \psicalculus{} has no assertions other than unit.
\begin{mathpar}
\begin{array}{rcl}
\terms & \defn & \nameset \cup \{\sym{nextFreq}{M}: M\in\terms\}\\
\conditions & \defn & \{M \sch N : M, N \in \terms\}\\
\assertions & \defn & \{\emptyframe\}\\
\ftimes & \defn & \lambda \Psi_1, \Psi_2.\; \emptyframe \\
\multicolumn{3}{l}{\emptyframe\vdash M\sch M\quad}\\
\end{array}
\end{mathpar}
We define $\top$ to be $a\sch a$ in order to be able to use non-deterministic
choice as noted in Section~\ref{sec:agents}.

Let $X_{\mathit{in},\mathit{out}}$ be an arbitrary agent
that communicates with the environment via the channels $\mathit{in}$ and
$\mathit{out}$. This
agent will be wrapped in contexts that will let it do FHSS in a transparent way:
from the agent's point of view it will
only communicate over the local channels $in$ and $\mathit{out}$. The
agent
$\mathit{FHSS}$
that implements frequency hopping looks like:
\[
 \mathit{FHSS} =\; \bang \inprefixempty{\mathit{fh}}(\mathit{freq})\sdot\left(
\begin{array}{@{}l@{}}
 \inprefixempty{\mathit{out}}(y)\sdot\opa{\mathit{freq}}{y}
\sdot\opa{\mathit
{fh}}{\sym{nextFreq}{\mathit{freq}}} \\
 +\\
 \inprefixempty{\mathit{freq}}(y)\sdot\opa{in}{
y}\sdot\opa{\mathit{fh}}{\sym{nextFreq}{\mathit{freq}}}
\end{array}
\right)
\]
This agent can be thought of as a function $\mathit{fh}$ that will take a
frequency
and then either wait for something to be received from the local channel
$\mathit{out}$
to
send over this frequency, or to receive something over this frequency and
forward it to the local channel $in$. It will then calculate the next frequency
and
start over.

The behaviour when the agent $X_{\mathit{in},\mathit{out}}$ acts
as initiator is
modelled as a context where the initiating sequence starts by
sending a synchronisation message $\mathit{sync}$ over a predetermined control
channel $ctl$, and
then waits for a seed from that channel. It then starts the frequency
hopping algorithm with the seed and sends a synchronisation message over the
first
frequency, and behaves as $X_{\mathit{in},\mathit{out}}$. It is
assumed that
$\mathit{seed} \freshin X_{\mathit{in},\mathit{out}}$.
\[
 I[X_{\mathit{in},\mathit{out}}] =
\opa{\mathit{ctl}}{\mathit{sync}}\sdot\inprefixempty{
\mathit{ctl}}
(\mathit{seed})\sdot\opa{\mathit{fh}}{
\mathit{seed}}\sdot\opa{\mathit{out}}{
\mathit{sync}}\sdot
X_{\mathit{in},\mathit{out}} \parop \mathit{FHSS} 
\]

The behaviour when the agent $X_{\mathit{in},\mathit{out}}$ acts as a
receiver is
modelled similarly: the receiver listens to the control channel
$\mathit{ctl}$ and sends back a seed. Then it starts the frequency hopping
algorithm with this seed and waits for a
synchronisation message. The receiver then behaves as
$X_{\mathit{in},\mathit{out}}$. It is
assumed that $x,\mathit{seed},s \freshin X_{\mathit{in},\mathit{out}}$.
\[
 R[X_{\mathit{in},\mathit{out}}] = 
\inprefixempty{\mathit{ctl}}(s)\sdot
(\nu \mathit{seed})\opa{\mathit{ctl}}{
\mathit{seed}}\sdot\opa{\mathit{fh}}{\mathit{seed}
}\sdot\inprefixempty{in}(x)\sdot X_{\mathit{in},\mathit{out}} 
\parop \mathit{FHSS}
\]

The full system where $X_{\mathit{in},\mathit{out}}$ may behave as either a
receiver
or initiator is then modelled as
\[
 \mathit{FH}[X_{\mathit{in},\mathit{out}}] = (\nu
\mathit{fh},\mathit{in},\mathit{out})\left( I[X_{\mathit{in},\mathit{out}}] +
R[X_{\mathit{in},\mathit{out}}] \right)
\]
where it is assumed that $\mathit{fh} \freshin X_{\mathit{in},\mathit{out}}$.

Let us look at a few transitions of the receiver. First the receiver gets a
request to do frequency hopping over the control channel:
\[\begin{array}{l}
 \mathit{FH}[X_{\mathit{in},\mathit{out}}] 
\gt{\inn{\mathit{ctl}}\mathit{sync}}\\
\qquad(\nu
\mathit{fh},\mathit{in},\mathit{out})\left((\nu
\mathit{seed})\opa{\mathit{ctl}}{
\mathit{seed}
}\sdot\opa{\mathit{fh}}{\mathit{seed}}\sdot\inprefixempty{in}(x)\sdot
X_{\mathit{in},\mathit{out}} 
\parop \mathit{FHSS} \right)
\end{array}
\]
It then sends the seed to the initiator and starts the frequency hopping using
this seed:
\[\begin{array}{l}
\trans{}{\bout{\mathit{seed}}{\mathit{ctl}}{\langle
\mathit{seed}
\rangle}}{(\nu
\mathit{fh}, \mathit{in},\mathit{out})\left(\opa{\mathit{fh}}{\mathit{seed}
}\sdot\inprefixempty{in}(x)\sdot X_{\mathit{in},\mathit{out}} 
\parop \mathit{FHSS} \right)}\\
\trans{}{\tau}{(\nu
\mathit{fh}, \mathit{in},\mathit{out})
\left(\begin{array}{@{}l@{\hspace{1ex}}l@{}}
&\inprefixempty{in}(x)\sdot X_{\mathit{in},\mathit{out}} \\[0.6em]
\big|&
\left(
\begin{array}{@{}l@{}}
 \inprefixempty{\mathit{out}}(y)\sdot\opa{\mathit{seed}}{
y}\sdot\opa{\mathit{fh}}{\sym{nextFreq}{\mathit{seed}}} \\
 +\\
 \inprefixempty{\mathit{seed}}(y)\sdot\opa{\mathit{in}}{
y}\sdot\opa{\mathit{fh}}{\sym{nextFreq}{\mathit{seed}}}
\end{array}\right) \pll \mathit{FHSS}\end{array}\right)}
  \end{array}
\]
At this point the initiator will send the $\mathit{sync}$ message:
\[\begin{array}{rl}
\multicolumn{2}{l}{\gt{\inn{\mathit{seed}}{
\mathit{sync}}}}\\
&(\nu
\mathit{fh}, \mathit{in},\mathit{out})\left(\inprefixempty{\mathit{in}}(x)\sdot
X_{\mathit{in},\mathit{out}} \pll 
 \opa{\mathit{in}}{\mathit{sync}}\sdot\opa{\mathit{fh}}{
\sym{nextFreq}{\mathit{seed}}}
 \pll \mathit{FHSS}\right)\\
\gt{\tau} & (\nu
\mathit{fh}, \mathit{in},\mathit{out})\left(X_{\mathit{in},\mathit{out}} \pll
\opa{\mathit{fh}}{
\sym{nextFreq}{\mathit{seed}}} \pll \mathit{FHSS}\right)
  \end{array}
\]
After another $\tau$-transition the agent is ready to communicate over the
next frequency:
\[
 \begin{array}{@{}rl@{}}
\trans{}{\tau}{(\nu
\mathit{fh}, \mathit{in},\mathit{out})
\left(\begin{array}{@{}l@{\hspace{1ex}}l@{}}
&X_{\mathit{in},\mathit{out}} \\[0.6em]
\big|&
\left(
\begin{array}{@{}l@{}}
 \inprefixempty{\mathit{out}}(y)\sdot\opa{\sym{nextFreq}{\mathit{seed}}}{
y}\sdot\opa{\mathit{fh}}{\sym{nextFreq}{\sym{nextFreq}{\mathit{seed}}}} \\
 +\\
 \inprefixempty{\sym{nextFreq}{\mathit{seed}}}(y)\sdot\opa{\mathit{in}}{
y}\sdot\opa{\mathit{fh}}{\sym{nextFreq}{\sym{nextFreq}{\mathit{seed}}}}
\end{array}\right) \pll \mathit{FHSS}\end{array}\right)}
  \end{array}
\]

This example could easily be made more complex by adding relevant error checking
(e.g. the receiver could check that the synchronisation message is correct), but
even in this form it illustrates the use of structured channels.

\subsubsection{Local services}
A common scenario is that different servers implement the same kind of
functionality known under some globally known name. HTTP servers are
examples of this where the service provided is normally available on IP port 80.
Here the name of the service (port 80) is shared among the different servers.
The general problem is that there is a service known under
a global name, but available from servers with different
names. This problem is treated in depth in
\cite{chothia.stark:distributed-calculus} where the authors invent a new
calculus for this purpose. Here we show how the same problem can be solved
using an instance of \psicalculi{}.

The instance used is basically the same as for polyadic pi-calculus as
presented in
Section~\ref{sec:polyadic-pi-instance} augmented with
terms of form $\locchan{M}{N}$ and the entailment $\unit \vdash
\locchan{M}{N} \sch \locchan{M}{N}$, where $M$ and $N$ are terms. This gives the
possibility to
scope a part of a channel term, e.g $(\nu b)(\out{\locchan{a}{b}}{c}.P)$, as
in~\cite{carbone.maffeis:expressive-power}.
\begin{mathpar}
\begin{array}{rcl}
\terms & \defn & \{\locchan{M}{N} : M,N \in \terms\} \cup\\
       & & \{\mathrm{t_2}(M,N) : M,N \in \terms\} \cup \nameset\\
\conditions & \defn & \{M \sch N : M, N \in \terms\}\\
\assertions & \defn & \{\emptyframe\}\\
\ftimes & \defn & \lambda \Psi_1, \Psi_2.\; \emptyframe \\
\multicolumn{3}{l}{\emptyframe\vdash a\sch a\quad\forall a\in\nameset}\\
\multicolumn{3}{l}{\emptyframe\vdash \locchan{M}{N} \sch
\locchan{M}{N}\quad\forall M,N\in\terms}
\end{array}
\end{mathpar}

The following example is adapted from
\cite{chothia.stark:distributed-calculus}. Assume there are globally
known names
$\mathit{finger}$ and $\mathit{daytime}$ which refer to resources
located at some server. Different servers have different local information, but
this information is accessed through the same globally known names. This can
be modelled as
\[
\begin{array}{rcl}
 \mathit{Server} &=&\bang
\inprefixempty{\mathit{server}}(\mathrm{t_2}(\mathit{service},\mathit{replyc}))
\sdot(\nu a)\left(
   \begin{array}{rl}
        & \opa{\locchan{\mathit{service}}{a}}{\mathit{replyc}}\sdot\nil\\
 \parop & \mathit{Finger}(a)\\
 \parop & \mathit{Daytime}(a)
   \end{array} \right)\\
 \mathit{Finger}(a) &=&
\inprefixempty{\locchan{\mathit{finger}}{a}}(\mathit{replyc})
\sdot\opa{\mathit{replyc}}{\mathit{UserList}}\sdot\nil\\
 \mathit{Daytime}(a) &=&
\inprefixempty{\locchan{\mathit{daytime}}{a}}(\mathit{replyc})
\sdot\opa{\mathit{replyc}}{\mathit{Date}}\sdot\nil\\
\end{array}
\]
where $\mathit{UserList}$ and $\mathit{Date}$ are some terms containing the
requested information. The exact nature of these terms is unimportant for this
example.

The server listens to incoming requests on channel $server$ and receives two
names. The first name is the requested service, and the second is the
reply channel. It will then do an internal communication with the particular
service daemon. There is no risk of interference since a locally scoped name is
part of the service channel. The result of the request is then forwarded along
the reply channel. 

\[\begin{array}{rl}
\mathit{Server}&\trans{}{\inn{\mathit{server}}{\mathrm{t_2}(
\mathit { finger } , c ) } } { \mathit{Server} \pll (\nu a)\left(
   \begin{array}{rl}
        & \opa{\locchan{\mathit{finger}}{a}}{c}\sdot\nil\\
 \parop & \mathit{Finger}(a)\\
 \parop & \mathit{Daytime}(a)
   \end{array} \right)}\\
               &\trans{}{\tau}{\mathit{Server} \pll (\nu a)\left(
   \begin{array}{rl}
         \nil \parop  \opa{\mathit{c}}{\mathit{UserList}
}\sdot\nil  \parop  \mathit{Daytime}(a)
   \end{array} \right)}\\
               &\trans{}{\opa{c}{
\mathit{UserList}}}{\mathit{Server} \pll (\nu a)\left(
   \begin{array}{rl}
        \nil \parop \nil \parop \mathit{Daytime}(a)
   \end{array} \right)}
\end{array}
\]
Since any transitions from $\mathit{Daytime}(a)$ are prevented by the
restriction, the final derivative will behave like $\mathit{Server}$.

\subsection{Cryptography}\label{sec:crypto-example}
In this section we give a sequence of examples from cryptography, culminating
with a model of the Diffie-Hellman key agreement protocol. Our exposition is quite
similar to the applied pi-calculus as presented
in~\cite{abadi.fournet:mobile-values}, and we will use a psi-calculus that mimics
this closely. The main point is that psi-calculi can express these cryptographic
examples in an equally concise way, and within a leaner and more symmetric
formalism. 

The psi-calculus instance we use for the examples below can be seen as a
simplification of \apiinstance{} in Section~\ref{sect:api}
in that we do not distinguish between different kinds of names, and we do not use inequality.
To construct this psi-calculus we assume an inductively defined set of terms
using a signature $\Sigma$, and an equational theory $\vdash_\Sigma$
which let us infer equations $M=N$ where $M$ and $N$ are terms. 
Exactly how this theory works is unimportant for this presentation.
Substitution is defined in the expected way.
\begin{mathpar}
\begin{array}{rcl}
\terms & \defn & \nameset \cup \{\mathsf{f}(M_1,\dots,M_n) :
\mathsf{f} \in \Sigma \wedge M_i \in \terms\}\\
\conditions & \defn & \{M=N: M,N\in\terms\}\\
\assertions & \defn & \mathcal{P}_{\rm fin}(\{M=N: M,N\in\terms\})\} \\
\sch & \defn & =\\
\ftimes & \defn & \cup
\\
\emptyframe & \defn &\emptyset\\
\multicolumn{3}{l}{\Psi\vdash M=N\text{ if }\vdash_{\Sigma\cup\Psi} M =
N}\\
\end{array}
\end{mathpar}
An assertion is a finite set of equations between terms. We often elide the
set brackets in agents, e.g. writing $\pass{M=N}$ instead of $\pass{\{M=N\}}$.
The conditions are just equations $M=N$.
Entailment is defined such that $\Psi \vdash M=N$ holds if  $M=N$ can be
inferred from
the equational theory  $\vdash_\Sigma$ extended by the equations in $\Psi$.
This instance satisfies the requirements by the same reasoning as for
\apiinstance{}.

We start by looking at how one-way hashing is modelled.
In addition to symbols for tupling and projection, and their associated
equations, the signature contains the unary symbol $\sym{hash}{x}$ which has no
equations.
The only equation on $\symempty{hash}$ that is true is
$\sym{hash}{M} = \sym{hash}{M}$, and this means that the hash function is
collision free.
The following example shows
one agent sending a message $M$ together with a hashing $x$ of the message
and
a secret name $s$ to another agent. The second agent will only forward
$M$ if it is properly hashed.
\[
 (\nu s)(\equat{\sym{hash}{\pair{s}{M}}}{x} \parop \opa{a}{
\pair{M}{x}
} \parop \inprefixempty{a}(y)\sdot\ifthen{\sym{hash}{\pair{s}{\fst{y}}}
= \snd{y}}{\opa{b}{\fst{y}}})
\]

To model symmetric cryptography, the signature is extended as in Section~\ref{sect:api}: we add
the binary symbols $\sym{enc}{x,y}$ and $\sym{dec}{x,y}$ 
together with the
equation $\sym{dec}{\sym{enc}{x,y},y} = x$.
 The following agent sends a
message $M$ encrypted with the secret key $k$, without revealing the plaintext or key.
\[
(\nu k,x)(\equat{\sym{enc}{M,k}}{x} \parop \out{a}{x})\gt{\out{a}{(\nu x)x}}
\dots
\]

Asymmetric encryption is modelled by adding two new unary symbols
$\sym{pk}{s}$ and $\sym{sk}{s}$ which generate the public and secret keys from a common seed value,
and the equation
$\sym{dec}{\sym{enc}{x,\sym{pk}{k}},\sym{sk}{k}}=x$. The following
agent
sends the public key on channel $a$, receives a message along channel $b$,
decrypts it with the secret key, and sends the decrypted message along channel
$c$:
\[
 (\nu s,x)(\equat{\sym{pk}{s}}{x} \parop \out{a}{x} \parop
\inprefixempty{b}(y).(\equat{\sym{dec}{y, \sym{sk}{s}}}{z}) \parop \out{c}{z})
\]
Non-deterministic crypto is modelled by using
a ternary version of the symbol $\sym{enc}{x,y,z}$ with some salt in the last
argument, together with the equation
$\sym{dec}{\sym{enc}{x,\sym{pk}{k},z},\sym{sk}{k}}=x$. Consider the following
agent:
\[
 \inprefixempty{a}(x).\big((\nu m,y)(\equat{\sym{enc}{M,x,m}}{y} \parop \out{b}{y})
\parop (\nu n,z)(\equat{\sym{enc}{M,x,n}}{z}) \parop \out{c}{z}\big)
\]
An observer of this agent cannot tell whether $y$ and $z$ are encryptions
of the same message or not, because of the unique salt.

Digital
signatures are modelled by adding the binary symbol $\sym{sign}{x,y}$, the ternary symbol
$\sym{check}{x,y,z}$,
the constant symbol $\symempty{ok}$, and the equation
$\sym{check}{x,\sym{sign}{x,\sym{sk}{k}},\sym{pk}{k}}=\symempty{ok}$. The
following agent sends a signed message along
$a$, then the parallel component receives it and checks the signature. If it is
ok it is then forwarded.
\[
\begin{array}{l@{}l}
(\nu s,z)(&\equat{\sym{pk}{s}}{y} \parop \equat{\sym{sign}{M,
\sym{sk}{s}}}{z} \parop \out{a}{\pair{M}{z}}) \\
&\parop
\inprefixempty{a}(x).\ifthen{\sym{check}{\fst{x},\snd{x},y} =
\symempty{ok}}{\out{b}{\fst{x}}}
\end{array}
\]

The Diffie-Hellman protocol \cite{diffie.hellman:new-directions-in-cryptography}
is used to establish a shared secret between two
principals who do not necessarily share any secrets beforehand. This is done by
exchanging messages over a public channel.

We let $\Sigma$ include $\sym{f}{x,y}$ and $\sym{g}{x}$, and the equation
system includes $\sym{f}{x,\sym{g}{y}} = \sym{f}{y, \sym{g}{x}}$, but no other
equations on $\mathsf{f}$ and $\mathsf{g}$. The first principal $P$ creates a
secret $n_P$ and sends an alias $x_P$ of $\sym{g}{n_P}$ to the other principal $Q$,
and $Q$ does likewise. Then $P$ can create the term 
$\sym{f}{n_P, x_Q}$ and $Q$ can create the term 
$\sym{f}{n_Q, x_P}$. 
Using the equations above these two terms are equivalent and
the shared secret has been established. 
Concretely $\mathsf{f}$ and $\mathsf{g}$
are functions in a multiplicative group modulo a large prime,
 but here we ignore the number theory.

Let $P_{k_P}$ and $Q_{k_Q}$ be two agents that will share a secret key and
will use the names $k_P$ and $k_Q$, respectively, to refer to it. The
Diffie-Hellman key agreement is modelled as two symmetric contexts
$\mathit{DH_{01}}[\cdot]$ and $\mathit{DH_{10}}[\cdot]$
in which the agents are placed. The context $\mathit{DH_{01}}[X_k]$ is defined
as
\[\mathit{DH_{01}}[X_k] = (\nu
n,x,a_{01},a_{10})(\equat{\sym{g}{n}}{x} \parop \out{a_{01}}{x} \parop
\inprefixempty{a_{10}}(z).(\nu k)(\equat{\sym{f}{n,z}}{k}
\parop X_k))\]
where $n,x,a_{01},a_{10} \freshin X_k$ and $k$ occurs in $X_k$ as a a name
that refers to a key. The context $\mathit{DH_{10}}[X_k]$ is defined in the
same way but with $a_{10}$ and $a_{01}$ swapped.

The agents $P_{k_P}$ and $Q_{k_Q}$ agree on the secret by placing them in the
contexts: $\mathit{DH_{01}}[P_{k_P}]$ and
$\mathit{DH_{10}}[Q_{k_Q}]$. The key agreement will
then do two internal transitions:
\[
\trans{\mathit{DH_{01}}[P_{k_P}] \parop
\mathit{DH_{10}}[Q_{k_Q}]}{\tau}{\trans{}{\tau}{(\nu x_P,x_Q)(P'
\parop Q')}}
\]
where
\[
\begin{array}{rcl}
P' &=& (\nu n_P,a_{01},a_{10})(\equat{\sym{g}{n_P}}{x_P} \parop (\nu
k_P)(\equat{\sym{f}{n_P,x_Q}}{k_P} \parop P_{k_P}))\\
Q' &=& (\nu n_Q,a_{01},a_{10})(\equat{\sym{g}{n_Q}}{x_Q} \parop (\nu
k_Q)(\equat{\sym{f}{n_Q,x_P}}{k_Q} \parop Q_{k_Q}))
\end{array}
\]
The $x$ and $n$ from the context have been alpha-converted to the
variants with subscripts to avoid clashes.

Since the agents are communicating over a public channel
the messages may be intercepted by a passive attacker which then forwards them
unmodified. In presence of such an attacker the agents evolve to $P'
\parop Q'$ where the lack of binders for $x_P$ and
$x_Q$ represent that
the hostile environment now has access to these values. We show that this does not break the protocol.

As a specification for this protocol we put $P_{k_P} \parop Q_{k_Q}$ in a
context
where they already share a secret, here represented by the name $k'$: $S = (\nu
k_P, k_Q, k')(\equat{k'}{k_P}
\parop \equat{k'}{k_Q} \parop P_{k_P} \parop Q_{k_Q})$. We
then show that $P' \parop Q'$ and $S$ behave the same, denoted $P'
\parop Q' \sim S$. The precise meaning of $\sim$ is
given in Section~\ref{sec:bisimilarity}, but for this particular example it is
sufficient to think of $\sim$ as equivalence of the frames of $S$ and $P' \pll
Q'$ according to Definition~\ref{def:frame-equivalence}. This equivalence is
closed under parallel composition (if $P$ and $Q$ behave the same, then so will
$P\parop R$ and $Q \parop R$ for any agent $R$) and restriction (if $P$ and $Q$
behave the same, then so will $(\nu a)P$ and $(\nu a)Q$, for any $a$). 

We have that
\[
\begin{array}{@{}r@{}l@{}}
&\begin{array}[t]{@{}l@{}l@{}}
 &(\nu n_P,a_P)(\equat{\sym{g}{n_P}}{x_P} \parop \equat{\sym{f}{n_P,x_Q}}{k_P})\\
\parop& (\nu n_Q,a_Q)(\equat{\sym{g}{n_Q}}{x_Q} \parop \equat{\sym{f}{n_Q,x_P}}{k_Q})\\
\end{array}\\
\sim \\
& (\nu k')(\equat{k'}{k_P} \parop \equat{k'}{k_Q})
\end{array}
\]
The reason is that the only condition entailed on both sides is $k_P = k_Q$, no
equalities can
be entailed on $x_P$ and $x_Q$. Since $\sim$ is closed under parallel
composition we can add the agents: %
\[
\begin{array}{@{}r@{}l@{}}
&\begin{array}[t]{@{}l@{}l@{}}
 &(\nu n_P,a_P)(\equat{\sym{g}{n_P}}{x_P} \parop \equat{\sym{f}{n_P,x_Q}}{k_P})\\
\parop& (\nu n_Q,a_Q)(\equat{\sym{g}{n_Q}}{x_Q} \parop \equat{\sym{f}{n_Q,x_P}}{k_Q})\\
\parop& P_{k_P} \parop Q_{k_Q}\\
\end{array}\\
\sim \\
&\begin{array}[t]{@{}l@{}l@{}}
& (\nu k')(\equat{k'}{k_P} \parop \equat{k'}{k_Q})\\
\parop& P_{k_P}\parop Q_{k_Q}
 \end{array}
\end{array}
\]
Since $\sim$ is closed under the restriction operator: %
\[
\begin{array}{@{}l@{}l@{}}
&(\nu k_P,k_Q)\\
&\quad (\begin{array}[t]{@{}l@{}l@{}}
& (\nu n_P,a_P)(\equat{\sym{g}{n_P}}{x_P} \parop \equat{\sym{f}{n_P,x_Q}}{k_P})\\
\parop& (\nu n_Q,a_Q)(\equat{\sym{g}{n_Q}}{x_Q} \parop \equat{\sym{f}{n_Q,x_P}}{k_Q})\\
\parop& P_{k_P} \parop Q_{k_Q})\\
       \end{array}\\
\sim \\
&(\nu k_P,k_Q)\\
&\quad (\begin{array}[t]{@{}l@{}l@{}}
	&(\nu k')(\equat{k'}{k_P} \parop \equat{k'}{k_Q}) \\
   \parop& P_{k_P}\parop Q_{k_Q})\\
	\end{array}
\end{array}
\]
Finally, by the structural laws of Theorem~\ref{thm:struct} in
Section~{\ref{sec:properties}}:
\[
 P' \parop Q' \sim S.
\]

\section{Bisimilarity}
\label{sec:bisimilarity}
In this section we define a notion of strong bisimilarity on agents and prove that it satisfies the expected algebraic laws and substitutive properties. The results hold for any psi-calculus and give us confidence in the semantic definitions.
\subsection{Definition}

In the standard pi-calculus the notion of strong bisimulation is used to formalise the intuition that two agents ``behave in the same way''; it is defined as a symmetric binary relation $\mathcal{R}$ satisfying the simulation property: $\mathcal{R}(P,Q)$ implies that for $\alpha$ such that $\bn{\alpha} \freshin Q$, 
\[\text{if }\trans{P}{\alpha}{P'}\text{ then } 
\trans{Q}{\alpha}{Q'}
\land  {\mathcal R}( P', Q')\]
For a psi-calculus we additionally need to take the assertions into
consideration. The behaviour of an agent is always taken with respect to an
environmental assertion. We define bisimulation as a ternary relation
$\mathcal{R}(\Psi,P,Q)$, saying that $P$ and $Q$ behave in the same way when the
environment asserts $\Psi$.
Because of this two additional issues arise. The first is that the agents can
affect their environment through their frames (and not only by performing
actions), and this must be represented in the definition of bisimulation. The
second is that the environment (represented by $\Psi$ in
$\mathcal{R}(\Psi,P,Q)$) can change, and for $P$ and $Q$ to be bisimilar they
must continue to be related after such changes. This leads to the following
definition of strong bisimulation.

\begin{definition}[Bisimulation]
A {\em bisimulation}
 $\mathcal{R}$ is a ternary relation between assertions and pairs of agents such that
 ${\mathcal{R}}(\Psi,P,Q)$ implies all of
 \begin{enumerate}[(1)]%
 \item Static equivalence:
   \label{def:case:bisimStatEq}
  $\Psi \ftimes \fr{P} \sequivalent \Psi \ftimes \fr{Q}$
 \item
   Symmetry: ${\mathcal{R}}(\Psi,Q,P)$
 \item
   \label{def:case:bisimExtAss}
 Extension of arbitrary assertion:
 $\forall \Psi'.\ {\mathcal{R}}(\Psi \ftimes \Psi',P,Q)$
 \item   Simulation:
for all $\alpha, P'$ such that
$\bn{\alpha}\freshin \Psi,Q$ there exists a $Q'$ such
that
\[\text{if }\framedtransempty{\Psi}{P}{\alpha}{P'}\text{ then }
\framedtransempty{\Psi}{Q}{\alpha}{Q'}
\land  {\mathcal R}(\Psi , P', Q')\]
\end{enumerate}
 \label{def:bisim}
We define $P \bisim_\Psi Q$ to mean that there exists a bisimulation ${\mathcal{R}}$ such that
${\mathcal{R}}(\Psi,P,Q)$, and write $\bisim$ for $\bisim_\emptyframe$.
\end{definition}
Clauses 2 and 4 are familiar from the pi-calculus. Clause 1 captures the fact
that the related agents have exactly the same influence on the environment
through their frames, namely that when they add to the existing environment
($\Psi$) then exactly the same conditions are entailed. Clause 3 means that when
the environment changes (by adding a new assertion $\Psi'$) the agents are still
related.
An example may clarify the role of this clause. Let  $\beta$ be a prefix and let $\varphi$ be any non-trivial condition, and consider
\[ \begin{array}{l}
P = \beta.\beta.\nil + \beta.\nil + \beta . \;\ifthen{\varphi}{\beta.\nil} \\
Q = \beta.\beta.\nil + \beta.\nil
\end{array}\]
$P$ can non-deterministically choose between three branches and $Q$ between the two first of them. Here
$P$ and $Q$ are not bisimilar. If $P$ performs an action corresponding to its
third case, reaching the agent $P' = \mbox{\bf if}\;\varphi\;\mbox{\bf then}
\;\beta.\nil$, there is no way that $Q$ can simulate since neither $Q' = \nil$ nor
$Q'= \beta.\nil$ is equivalent to $P'$ in all environments. In fact, any reasonable variant
of bisimulation that equates $P$ and $Q$ will not be preserved by
parallel. To see this, let $T$ be $\gamma . \pass{\Psi}$, where $\gamma$ is any
prefix and $\Psi$ an assertion that entails $\varphi$. Then the transition
$\trans{P\parop T}{\beta}{P'\parop T}$ cannot be simulated by $Q|T$, since $P'|T$ can only do
an action $\gamma$ followed by an action $\beta$, whereas $\beta.\nil|T$ can do
$\beta$ immediately, and $\nil|T$ can do no $\beta$ at all. This demonstrates why
clause 3, extension of arbitrary assertion, is necessary: it says that after
each step all possible extensions of the assertion must be considered. If we
would  merely require this at top level, i.e. remove clause 3 and instead
require $\forall \Psi .{\mathcal{R}}(\Psi,P,Q)$ in the definition of $P \bisim Q$,
the extensions would not recur; as a consequence $P$ and $Q$ in the example
would be equivalent, and the  equivalence would not be preserved by parallel.

For another example, consider
\[ \begin{array}{l}
R = \ifthen{\varphi}{\beta\sdot\ifthen{\varphi}{\beta.\nil}} \qquad
S = \ifthen{\varphi}{\beta.\beta.\nil}
\end{array}\]
In $R$ the condition $\varphi$ is checked twice. In general $R$ and $S$ are not equivalent. To see this, let $\Psi$ and $\Psi'$ be such that $\Psi \vdash \varphi$ and $\Psi \ftimes \Psi' \not\vdash \varphi$. We then have that
$\framedtransempty{\Psi}{R}{\beta}{\ifthen{\varphi}{\beta.\nil}}$ and it cannot be simulated by
$\framedtransempty{\Psi}{S}{\beta}{\beta.\nil}$ because of the recurring clause of extension of arbitrary assertion: $\ifthen{\varphi}{\beta.\nil}$ has no transition in the environment $\Psi \ftimes \Psi'$. However,
if the entailment relation satisfies weakening, i.e. $\Psi \vdash \varphi \Rightarrow \Psi \ftimes \Psi' \vdash \varphi$, we get the intuitive result that $R$ and $S$ are bisimilar.  
This also demonstrates the inadequacy of the smaller and simpler definition of $\bisim$ as  the largest relation satisfying
\[\text{if }\forall \Psi. \Psi \frames \trans{P}{\beta}{P'} \text{ then } \Psi \frames \trans{Q}{\beta}{Q'}\; \land\; P'\bisim Q'\]
The difference is that here bisimulation recurringly requires to hold for {\em all} assertions, not only for those that are extensions of the ones passed so far.
This would have the unintuitive effect of making $R$ and $S$ in the example above non-bisimilar, even if weakening holds.

If there are inconsistent assertions, i.e. assertions
that entail all conditions, the effect of Clause~3 is very strong: Bisimilar
agents are required to behave the same even if the environment is inconsistent.
For example, in this situation the agent $(\nu a)\overline{a}\sdot \nil$ is not
equivalent to $\nil$, since an inconsistent assertion can make all names channel
equivalent, and therefore $(\nu a)\overline{a}\sdot \nil$ has actions with all
names except $a$ as subject.
The algebraic properties to follow hold for all psi-calculi, including those with inconsistent assertions. It remains to be seen if and how bisimulation in such psi-calculi is useful to model applications.

Interestingly, there is an alternative way to define bisimulation as a {\em binary} relation preserved by parallel contexts.
\begin{definition}[Context bisimulation]
A {\em context bisimulation}
 $\mathcal{R}$ is a binary relation on agents such that
 ${\mathcal{R}}(P,Q)$ implies all of
 \begin{enumerate}[(1)]%
 \item Static equivalence:
  $\fr{P} \sequivalent \fr{Q}$
 \item
   Symmetry: ${\mathcal{R}}(Q,P)$
 \item
 Extension of contextual assertion:
 $\forall \Psi.\ \mathcal{R}(\pass\Psi\parop P,\;\pass\Psi\parop Q)$
 \item   Simulation:
for all $\alpha, P'$ such that
$\bn{\alpha}\freshin Q$ there exists a $Q'$ such
that
\[\text{if }\framedtransempty{\emptyframe}{P}{\alpha}{P'}\quad \text{ then } \quad 
\framedtransempty{\emptyframe}{Q}{\alpha}{Q'}
\;\land\;  {\mathcal R}(P', Q')\]
\end{enumerate}
We define $P \bisim^c Q$ to mean that there exists a context bisimulation ${\mathcal{R}}$ such that
${\mathcal{R}}(P,Q)$.

 \label{def:contextbisim}
\end{definition}
 Such a definition
is more in line with standard contextual bisimulations, and also the way bisimulation is defined in the applied pi-calculus. The drawback is that it
relies on an operator in the calculus (parallel) for its definition. For
conducting proofs our experience is that Definition~\ref{def:bisim} is
preferable. We have shown that these bisimilarities coincide, i.e., the
definitions result in the same bisimulation equivalence:
\begin{theorem}[Bisimilarity and context bisimilarity coincide]
\label{thm:contextbisim}
$\bisim \;= \; \bisim^c$
\end{theorem}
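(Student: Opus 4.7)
The plan is to prove both inclusions by exhibiting candidate relations. The pivotal operational fact used throughout is that a transition from $\pass\Psi \parop P$ in outer environment $\Psi'$ must factor through $P$: by rule \textsc{Par} in the one direction, and by the observation that $\pass\Psi$ has no prefix and so can neither act alone nor participate in \textsc{Com} in the other direction, we have $\framedtransempty{\Psi'}{\pass\Psi \parop P}{\alpha}{R}$ iff $R = \pass\Psi \parop P'$ with $\framedtransempty{\Psi \ftimes \Psi'}{P}{\alpha}{P'}$ (modulo $\sequivalent$-identity with the unit).

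For $\bisim^c \subseteq \bisim$, I would take
\[\mathcal{R}_1 = \{(\Psi,P,Q) : \pass\Psi \parop P \bisim^c \pass\Psi \parop Q\}\]
and verify the four clauses of Definition~\ref{def:bisim}. Static equivalence is immediate from that of $\bisim^c$ together with $\fr{\pass\Psi \parop P} = \Psi \ftimes \fr P$; symmetry is inherited. Extension by $\Psi'$ is clause~(3) of context bisimulation, after rearranging $\pass{\Psi'}\parop\pass\Psi\parop P$ into $\pass{\Psi\ftimes\Psi'}\parop P$ via the structural properties of frames. For simulation, lift $\framedtransempty{\Psi}{P}{\alpha}{P'}$ to $\framedtransempty{\emptyframe}{\pass\Psi \parop P}{\alpha}{\pass\Psi \parop P'}$, match it with $\bisim^c$ to obtain a transition $\framedtransempty{\emptyframe}{\pass\Psi \parop Q}{\alpha}{R}$, and invert by the operational fact above to recover $R = \pass\Psi \parop Q'$ with $\framedtransempty{\Psi}{Q}{\alpha}{Q'}$ and $(\Psi,P',Q') \in \mathcal{R}_1$.

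For $\bisim \subseteq \bisim^c$, I would take $\mathcal{R}_2 = \{(P,Q) : P \bisim Q\}$. Clauses (1), (2), and (4) of Definition~\ref{def:contextbisim} follow directly from the corresponding clauses of $\bisim_\emptyframe$. The non-trivial clause is (3): given $P \bisim Q$, show $\pass\Psi \parop P \bisim \pass\Psi \parop Q$. I would prove this as a separate lemma by exhibiting
\[\mathcal{S} = \{(\Psi',\; \pass\Psi \parop P,\; \pass\Psi \parop Q) : P \bisim_{\Psi \ftimes \Psi'} Q\}\]
closed under symmetry, and checking its four clauses: static equivalence from that of $\bisim_{\Psi\ftimes\Psi'}$ prefixed by $\Psi'$; extension by $\Psi''$ from associativity of $\ftimes$ together with clause~(3) of $\bisim$; simulation from the operational correspondence above combined with the simulation clause of $\bisim_{\Psi\ftimes\Psi'}$. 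Instantiating at $\Psi' = \emptyframe$ and using $P \bisim Q \Rightarrow P \bisim_\Psi Q$ (again clause~(3) of $\bisim$) yields the desired preservation.

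The hard part is exactly this clause~(3) step of the second inclusion: preservation of $\bisim$ under an assertion wrapper. The subtlety lies in the choice of $\mathcal{S}$: the outer index $\Psi'$ must be combined with the wrapped $\Psi$ to reconstruct the correct environment for the witnessing inner $\bisim$, and a naive candidate that simply asserts $P \bisim Q$ at the inner level is too weak to carry through the simulation step. Once this relation is fixed, the remainder is routine operational inversion plus freshness bookkeeping, which is unproblematic because $\pass\Psi$ contributes no binders of its own and $\bn{\alpha}$ can be alpha-converted freely away from $\Psi$.
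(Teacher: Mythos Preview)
Your proposal is correct in outline. The paper does not spell out a proof of this theorem, remarking only that it was formalised in Isabelle after Theorems~\ref{thm:congr}--\ref{thm:struct} and ``follows a similar pattern.'' Your approach is what one would expect and matches the paper's implicit route: the key lemma for the $\bisim\subseteq\bisim^c$ direction---that $P\bisim Q$ implies $\pass\Psi\parop P\bisim\pass\Psi\parop Q$---is precisely the instance $R=\pass\Psi$ of Theorem~\ref{thm:congr}(\ref{parl}), and your candidate $\mathcal{S}$ is exactly the specialisation (with $\ve a=\epsilon$ and $\frass R=\Psi$) of the candidate relation the paper uses there. So rather than re-deriving it you could simply invoke Theorem~\ref{thm:congr}(\ref{parl}) directly, which is presumably what the formalisation does.

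One step to tighten: in the $\bisim^c\subseteq\bisim$ direction, the ``rearranging'' of $\pass{\Psi'}\parop\pass\Psi\parop P$ into $\pass{\Psi\ftimes\Psi'}\parop P$ is not a frame identity but a $\bisim^c$-equivalence of agents, and needs its own (small) argument. Either exhibit the auxiliary context bisimulation
\[
\{(\pass{\Psi_n}\parop\cdots\parop\pass{\Psi_1}\parop X,\ \pass{\Psi'_m}\parop\cdots\parop\pass{\Psi'_1}\parop X)\ :\ \Psi_1\ftimes\cdots\ftimes\Psi_n\sequivalent\Psi'_1\ftimes\cdots\ftimes\Psi'_m\}
\]
(which checks out using your operational fact together with closure of transitions under $\sequivalent$ of the environmental assertion, established inside the proof of Lemma~\ref{lemma:statEqBisim}), or enlarge $\mathcal{R}_1$ from the start to allow iterated assertion wrappers, making the extension clause immediate.
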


We now show that
the usual strong early bisimilarity for the \pic{}, denoted $\bisim_\pi$, and
bisimilarity in the instance \piinstance{}
coincide.
\begin{theorem}[\pic{} bisimilarity and \piinstance{}
bisimilarity coincide]
\label{thm:picoincide}
 \[\bisimpi{P}{Q} \Leftrightarrow \encodepi{P} \bisim \encodepi{Q}\]
\end{theorem}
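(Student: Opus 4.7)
The plan is to prove both directions of the equivalence by exhibiting explicit witness relations and verifying the bisimulation clauses using Lemma~\ref{lemma:correspondence-pi-encoding}. The critical simplification is that in \piinstance{} the set of assertions is the singleton $\{1\}$. Hence every agent has a frame equivalent to $\emptyframe$, $\Psi \ftimes \Psi' = 1$ for all $\Psi, \Psi'$, and every environment-indexed transition $\framedtransempty{\Psi}{R}{\alpha}{R'}$ coincides with $\trans{R}{\alpha}{R'}$. This collapses clauses~(1) and~(3) of Definition~\ref{def:bisim} to triviality.

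For the forward direction, I would take $\mathcal{R} = \{(1, \encodepi{P}, \encodepi{Q}) : P \bisim_\pi Q\}$ and verify it is a psi-calculus bisimulation. Static equivalence and extension of arbitrary assertion hold by the observation above; symmetry is inherited from $\bisim_\pi$. For simulation, suppose $\framedtransempty{1}{\encodepi{P}}{\alpha}{R}$ with $\bn{\alpha} \freshin 1, \encodepi{Q}$. By the second part of Lemma~\ref{lemma:correspondence-pi-encoding} there exist $\alpha_\pi$ and $P'$ with $\encodepi{\alpha_\pi} = \alpha$, $\encodepi{P'} = R$, and $\transpi{P}{\alpha_\pi}{P'}$. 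Using Lemma~\ref{lemma:encoding-has-same-support} to transfer the freshness condition to $\bn{\alpha_\pi} \freshin Q$, strong early bisimilarity $P \bisim_\pi Q$ supplies $Q'$ with $\transpi{Q}{\alpha_\pi}{Q'}$ and $P' \bisim_\pi Q'$. The first part of Lemma~\ref{lemma:correspondence-pi-encoding} then lifts this to $\trans{\encodepi{Q}}{\alpha}{\encodepi{Q'}}$, and $(1, \encodepi{P'}, \encodepi{Q'}) \in \mathcal{R}$ by construction.

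For the backward direction, I would dually take $\mathcal{R}_\pi = \{(P, Q) : \encodepi{P} \bisim \encodepi{Q}\}$ and show it is a pi-calculus strong early bisimulation. Symmetry is immediate. For simulation of $\transpi{P}{\alpha_\pi}{P'}$ with $\bn{\alpha_\pi} \freshin Q$, I would first lift via Lemma~\ref{lemma:correspondence-pi-encoding} to $\trans{\encodepi{P}}{\encodepi{\alpha_\pi}}{\encodepi{P'}}$, then apply the simulation clause of $\encodepi{P} \bisim \encodepi{Q}$ at $\Psi = 1$ to obtain $R$ with $\trans{\encodepi{Q}}{\encodepi{\alpha_\pi}}{R}$ and $\encodepi{P'} \bisim R$, and finally use the second part of Lemma~\ref{lemma:correspondence-pi-encoding} to recover $Q'$ with $\transpi{Q}{\alpha_\pi}{Q'}$ and $\encodepi{Q'} = R$, giving $(P', Q') \in \mathcal{R}_\pi$.

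The main obstacle, modest though it is, lies in aligning the freshness side conditions between the two semantics and in handling input actions correctly. The psi clause requires $\bn{\alpha} \freshin \Psi, Q$ whereas the pi clause requires $\bn{\alpha} \freshin Q$; equivalence follows since $\n(1) = \emptyset$ and since $\names{R} = \names{\encodepi{R}}$ by Lemma~\ref{lemma:encoding-has-same-support}. The input case also requires that early bisimulation on both sides quantifies over the received datum in the same way; this is built into Lemma~\ref{lemma:correspondence-pi-encoding}, because an encoded early pi-input $\transpi{P}{\inlabelpi{a}{b}}{P'}$ maps to the early psi-input $\trans{\encodepi{P}}{\inlabel{a}{b}}{\encodepi{P'}}$, so a psi-$\bisim$ match for the latter is literally a pi-$\bisim_\pi$ match for the former, up to the encoding.
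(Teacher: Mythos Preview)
Your proposal is correct and follows essentially the same approach as the paper: both directions hinge on Lemma~\ref{lemma:correspondence-pi-encoding} for the simulation clause, with static equivalence and extension of arbitrary assertion collapsing because $\assertions=\{1\}$. Your write-up is simply a more explicit version of the paper's very terse proof, spelling out the candidate relations and the freshness bookkeeping that the paper leaves implicit.
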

\begin{proof}
 $(\Rightarrow)$: Static equivalence and extension of arbitrary
assertions hold trivially since the only assertion is $\unit$. Symmetry follows
directly, and simulation follows from
Lemma~\ref{lemma:correspondence-pi-encoding}.\\
  $(\Leftarrow)$: Symmetry follows directly, and simulation follows from
Lemma~\ref{lemma:correspondence-pi-encoding}.
\end{proof}

In addition, we conjecture that Inside-outside bisimilarity for the pi-F
calculus~\cite[Definition 17]{wischik:thesis} coincides with bisimilarity for
the
psi-calculus \textbf{Fi} (see Section~\ref{sec:ccpi}).

\subsection{Algebraic properties}
\label{sec:properties}
Our results are that
bisimilarity is preserved by the operators in the expected way, and also satisfies the expected structural algebraic laws.
\begin{theorem}
\label{thm:congr} For all $\Psi$:
 \begin{enumerate}[\em(1)]
\setlength{\itemsep}{0pt}
 \item $P \bisim_\Psi Q  \Longrightarrow P \pll R \bisim_\Psi Q \pll R$. \label{parl}
 \item $P \bisim_\Psi Q \Longrightarrow \res{a}P \bisim_\Psi \res{a}Q$ \label{res} 
 \qquad \mbox{if $a \freshin \Psi$.}
 \item $P \bisim_\Psi Q \Longrightarrow \,!P \bisim_\Psi\, !Q$. \label{bang}
 \item  \label{case}{\rm $\forall i.P_i \bisim_\Psi Q_i  \Longrightarrow
        \caseonly{\ci{\ve{\varphi}}{\ve{P}}} \;{\bisim_\Psi}\; \caseonly{\ci{\ve{\varphi}}{\ve{Q}}}$.}
\item $P \bisim_\Psi Q  \Longrightarrow \out{M}N . P \bisim_\Psi \out{M}N.Q$. \label{out}
\item $(\forall \ve{L}.\; P\lsubst{\ve{L}}{\ve{a}} \bisim_\Psi Q\lsubst{\ve{L}}{\ve{a}})$ $\;\Longrightarrow\;$ \\
$\lin{M}{\ve{a}}{N.P}  \bisim_\Psi \lin{M}{\ve{a}}{N.Q} $
\qquad \mbox{if $\ve{a} \freshin \Psi$.}
\end{enumerate}
\end{theorem}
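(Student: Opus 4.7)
The plan is to prove each clause by exhibiting a ternary candidate relation $\mathcal{R}$ and verifying the four requirements of Definition~\ref{def:bisim}: static equivalence, symmetry, extension of arbitrary assertion, and simulation. I would proceed from easiest to hardest. For the output clause~(\ref{out}), take
\[\mathcal{R}_{\mathrm{out}} = \{(\Psi,\ \out{M}N\sdot P,\ \out{M}N\sdot Q) : P \bisim_\Psi Q\} \cup \bisim,\]
and for the input clause
\[\mathcal{R}_{\mathrm{in}} = \{(\Psi,\ \lin{M}{\vec{a}}N\sdot P,\ \lin{M}{\vec{a}}N\sdot Q) : \forall \vec{L}.\; P\lsubst{\vec{L}}{\vec{a}} \bisim_\Psi Q\lsubst{\vec{L}}{\vec{a}}\} \cup \bisim.\]
Since prefixes have frame $\emptyframe$, static equivalence and extension of assertion are trivial; simulation invokes the only applicable rule (\textsc{Out} or \textsc{In}) and uses the hypothesis on the derivative, with the input case instantiated by the specific $\vec{L}$ on the transition label. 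Clause~(\ref{case}) uses
\[\mathcal{R}_{\mathrm{case}} = \{(\Psi,\ \caseonly{\ci{\vec{\varphi}}{\vec{P}}},\ \caseonly{\ci{\vec{\varphi}}{\vec{Q}}}) : \forall i.\, P_i \bisim_\Psi Q_i\} \cup \bisim;\]
each \textsc{Case} transition is matched by the same choice of index since the very same $\varphi_i$ is entailed on both sides. The restriction clause~(\ref{res}) uses $\mathcal{R}_{\mathrm{res}} = \{(\Psi,\ (\nu a)P,\ (\nu a)Q) : P \bisim_\Psi Q,\ a \freshin \Psi\}$, and simulation follows by inversion of \textsc{Scope} or \textsc{Open}, keeping $a$ fresh by alpha-conversion.

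For parallel composition, the candidate must be closed under the extrusions generated by \textsc{Com} and \textsc{Open}; I would take
\[\mathcal{R}_{\mathrm{par}} = \{(\Psi,\ (\nu \vec{c})(P \pll R),\ (\nu \vec{c})(Q \pll R)) : P \bisim_{\Psi \ftimes \frass{R}} Q,\ \vec{c} \freshin \Psi\},\]
choosing $\fr{R} = \framepair{\frnames{R}}{\frass{R}}$ with $\frnames{R}$ sufficiently fresh. Static equivalence reduces to $\Psi \ftimes \fr{R} \ftimes \fr{P} \sequivalent \Psi \ftimes \fr{R} \ftimes \fr{Q}$ by associativity and commutativity of $\ftimes$ together with compositionality, which follows from the static-equivalence clause for $P \bisim_{\Psi \ftimes \frass{R}} Q$. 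Extension of arbitrary assertion delegates to the corresponding clause inside the bisimulation hypothesis. Simulation proceeds by case analysis on the last rule in the transition derivation: \textsc{Par}-on-$P$ lifts directly; \textsc{Par}-on-$R$ produces the same transition on the right, with the environmental premise translated via static equivalence from $\Psi \ftimes \frass{P}$ to $\Psi \ftimes \frass{Q}$; the \textsc{Com} case is the most delicate, requiring us to transport $\Psi \ftimes \frass{P} \ftimes \frass{R} \vdash M \sch K$ across to $\Psi \ftimes \frass{Q} \ftimes \frass{R} \vdash M \sch K$ using static equivalence and to thread extruded names through the outer restrictions.

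The main obstacle is replication. Here $!P$ can unfold arbitrarily many copies of $P$ that may interact with one another, so the candidate must be closed under interleaved unfoldings and \textsc{Com} between distinct copies. My plan is to define
\[\mathcal{R}_{\mathrm{bang}} = \{(\Psi,\ S \pll !P,\ T \pll !Q) : P \bisim_\Psi Q \text{ and } (\Psi,S,T) \in \mathcal{R}^\star\},\]
where $\mathcal{R}^\star$ is the smallest relation containing $(\Psi, \nil, \nil)$ and closed under adding parallel copies of pairs bisimilar under $\bisim_\Psi$ and under restriction of names fresh for $\Psi$. Frame bookkeeping is mild because Definition~\ref{def:agents} forbids unguarded assertions in the body of $!P$, forcing $\fr{!P} = \emptyframe$, and similarly every unfolded copy contributes an empty frame. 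Simulation proceeds by induction on the transition derivation: each use of \textsc{Rep} peels off a fresh copy of $P$, and when two freshly exposed copies synchronise via \textsc{Com} they must be matched by the corresponding pairing on the $Q$ side. This step requires the parallel clause (\ref{parl}) as a lemma, so I would prove (\ref{parl}) first and invoke it together with (\ref{res}) inside the proof of (\ref{bang}). The subtle bookkeeping concerns freshness of the names introduced by successive unfoldings and by \textsc{Open}, which must be threaded through $\mathcal{R}^\star$ without colliding with existing assertions or action labels.
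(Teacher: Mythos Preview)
Your candidate relations and overall structure match the paper's exactly, including the choice
\[
\mathcal{R}_{\mathrm{par}} = \{(\Psi,\ (\nu \vec{c})(P \pll R),\ (\nu \vec{c})(Q \pll R)) : P \bisim_{\Psi \ftimes \frass{R}} Q,\ \vec{c} \freshin \Psi\}.
\]
However, you underestimate two points in the \textsc{Com} (and \textsc{Par}-on-$R$) cases that the paper isolates as the genuinely hard content of the whole theorem.

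First, the phrase ``transport $\Psi \ftimes \frass{P} \ftimes \frass{R} \vdash M \sch K$ across to $\Psi \ftimes \frass{Q} \ftimes \frass{R} \vdash M \sch K$ using static equivalence'' hides a real obstacle. Static equivalence is a statement about \emph{frames}, i.e.\ $(\Psi\ftimes\frass{R})\ftimes\fr{P}\sequivalent(\Psi\ftimes\frass{R})\ftimes\fr{Q}$, and only transfers entailments of conditions that are fresh for $\frnames{P}$ and $\frnames{Q}$. The subject $K$ arises inside the derivation of $R$'s transition under the environment $\Psi\ftimes\frass{P}$ and need not be fresh for those binders; moreover the \emph{transition} of $R$ itself is taken in the environment $\frass{P}$, not $\frass{Q}$, so even before the channel-equality premise you must re-derive $R$'s step. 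The paper handles this with a dedicated ``Switching'' lemma, proved by induction on $R$'s derivation, which produces a possibly different subject $K'$: at the \textsc{Out}/\textsc{In} leaf it falls back to the syntactic prefix subject (which \emph{is} fresh for all frame binders), and it relies on auxiliary lemmas that (i) find a channel-equivalent subject fresh for any given finite name set, and (ii) rewrite the subject of a transition along channel equivalence. Simply invoking static equivalence does not suffice.

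Second, you do not address what happens to the environment after $R$ moves. The candidate relation requires $P'\bisim_{\Psi\ftimes\frass{R'}}Q'$, but simulation only delivers $P'\bisim_{\Psi\ftimes\frass{R}}Q'$. The paper closes this gap with a separate lemma showing that whenever $\framedtransempty{\Psi}{R}{\alpha}{R'}$ there exists $\Psi''$ with $\frass{R}\ftimes\Psi''\sequivalent\frass{R'}$; one then applies clause~(3) (extension of arbitrary assertion) followed by a further lemma that $\bisim$ is preserved when the indexing assertion is replaced by an equivalent one. This step is needed in both the \textsc{Par}-on-$R$ and \textsc{Com} cases, and your proposal omits it entirely.
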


\begin{definition}\label{def:congruence}
$P\sim_\Psi Q$ means that for all sequences $\sigma$ of substitutions 
it holds that
$P\sigma \bisim_\Psi Q\sigma$, and we write $P \sim Q$ for $P \sim_\emptyframe Q$.
\end{definition}

Our requirements on the substitution function are very weak. 
For example, we do not require that $P\lsubst{\epsilon}{\epsilon}$ (the substitution of length 0) is $P$,
nor that sequences of substitutions $\lsubst{\vec{M}}{\vec{x}}\lsubst{\vec{N}}{\vec{y}}$ can be combined into one.
For this reason, $\sim_\Psi$ is defined by closure under \emph{sequences} of substitutions rather than single substitutions $\lsubst{\vec{M}}{\vec{x}}$.

\begin{theorem}\label{thm:fullcong}
$\sim_\Psi$ is a congruence for all $\Psi$.
\end{theorem}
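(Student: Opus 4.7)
The plan is to prove Theorem~\ref{thm:fullcong} by structural induction over the agent operators, treating each operator separately and reducing the statement, in each case, to the corresponding clause of Theorem~\ref{thm:congr}. Unfolding Definition~\ref{def:congruence}, to show $\mathcal{C}[P] \sim_\Psi \mathcal{C}[Q]$ (for an operator $\mathcal{C}$) from $P \sim_\Psi Q$ we must establish that for every substitution sequence $\sigma$,
\[\mathcal{C}[P]\sigma \;\bisim_\Psi\; \mathcal{C}[Q]\sigma.\]
The strategy is to push $\sigma$ inside $\mathcal{C}$, alpha-converting any binders away from the names touched by $\sigma$, so that $\mathcal{C}[P]\sigma$ becomes of the form $\mathcal{C}'[P\sigma]$ (or $\mathcal{C}'[P\sigma']$ for a slight variant of $\sigma$ obtained by alpha-renaming). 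Once in this form, the hypothesis $P\sigma \bisim_\Psi Q\sigma$ (which is a consequence of $P \sim_\Psi Q$) feeds directly into the appropriate clause of Theorem~\ref{thm:congr}.

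First I would dispatch the operators without binders — parallel composition, output prefix, and \textbf{case}. For these, substitution distributes homomorphically up to alpha-conversion of any nested binders; for instance $(\out{M}N\sdot P)\sigma = \out{M\sigma}{N\sigma}\sdot P\sigma$, and the result follows from Theorem~\ref{thm:congr}.\ref{out} applied to the bisimilarity $P\sigma \bisim_\Psi Q\sigma$. For \textbf{case} one uses Theorem~\ref{thm:congr}.\ref{case} componentwise. The restriction and replication cases are similarly handled via Theorem~\ref{thm:congr}.\ref{res} and Theorem~\ref{thm:congr}.\ref{bang}, after alpha-converting the bound name of $\nu$ to be fresh for $\sigma$ and $\Psi$.

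The delicate case, and the main obstacle, is the input prefix $\lin{M}{\ve{a}}N\sdot P$. After choosing $\ve{a}$ fresh for $\sigma$ and $\Psi$, we have
\[(\lin{M}{\ve{a}}N\sdot P)\sigma \;=\; \lin{M\sigma}{\ve{a}}(N\sigma)\sdot (P\sigma),\]
and to apply Theorem~\ref{thm:congr}.\ref{case} we must verify that for every $\ve{L}$,
\[(P\sigma)\lsubst{\ve{L}}{\ve{a}} \;\bisim_\Psi\; (Q\sigma)\lsubst{\ve{L}}{\ve{a}}.\]
Here it is crucial that Definition~\ref{def:congruence} closes $\sim_\Psi$ under \emph{sequences} of substitutions rather than single substitutions: the extended sequence $\sigma\cdot\lsubst{\ve{L}}{\ve{a}}$ is again a substitution sequence, and applying the hypothesis $P \sim_\Psi Q$ to it gives the required bisimilarity. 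Without closure under sequences this step would fail, since in general one cannot fuse $\sigma$ and $\lsubst{\ve{L}}{\ve{a}}$ into a single substitution. With the freshness choice of $\ve{a}$ ensuring that no captures interfere, the argument goes through, completing the induction.
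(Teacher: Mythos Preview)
Your approach is correct and is exactly the intended one: the paper does not spell out this proof in the text (it is delegated to the Isabelle formalisation and the accompanying thesis), but deriving congruence of $\sim_\Psi$ operator-by-operator from the preservation results of Theorem~\ref{thm:congr}, pushing the substitution sequence through each constructor, is the standard route. Your identification of the input case as the crux, and of closure under \emph{sequences} of substitutions as precisely what makes it work, matches the paper's own remark following Definition~\ref{def:congruence}.

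Two minor points. First, in the input case you cite ``Theorem~\ref{thm:congr}.\ref{case}'', but you mean the sixth clause (the input-prefix clause), not the \textbf{case} clause. Second, a complete congruence proof also requires that $\sim_\Psi$ is an equivalence relation; this follows immediately from $\bisim_\Psi$ being one, but is worth stating.
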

\begin{theorem} $\sim$ satisfies the following structural laws:
\label{thm:struct}
\begin{mathpar}
 \begin{array}{rcll}
  P & \sim& P \pll \nil \\
  P \pll (Q \pll R) & \sim& (P \pll Q) \pll R\\
  P \pll Q & \sim& Q \pll P\\

  (\nu a)\nil & \sim& \nil\\
  P \pll (\nu a)Q & \sim& (\nu a)(P \pll Q) & \text{if }a \freshin P\\
  \out{M}{N}.\res{a}P & \sim& \res{a}\out{M}{N}.P &\text{if }a \freshin M,N\\
  \lin{M}{\ve{x}}{N}.\res{a}P & \sim& \res{a}\lin{M}{\ve{x}}{(N)}.P&
      \text{if }a \freshin \ve{x},M,N\\
  \caseonly{\ci{\ve{\varphi}}{\ve{(\nu a)P}}} & \sim& 
      (\nu a)\caseonly{\ci{\ve{\varphi}}{\ve{P}}}&\text{if }a \freshin \ve{\varphi}\\
  \res{a}\res{b}P & \sim& \res{b}\res{a}P\\
  !P & \sim& P \pll !P \\
 
\end{array}
\end{mathpar}

\end{theorem}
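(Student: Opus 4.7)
The plan is to prove each law in the stronger form $P \bisim_\Psi Q$ for every $\Psi$ and then upgrade to $P \sim Q$ via Definition~\ref{def:congruence}. The closure under sequences of substitutions demanded by $\sim$ is unproblematic, since in each law every substitution instance of the left-hand side is, up to alpha-conversion of any restricted name mentioned in a freshness side condition, again an instance of the same law. So the real work lies in verifying the four clauses of Definition~\ref{def:bisim} for $\bisim_\Psi$.

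For each law I would exhibit a candidate relation $\mathcal{R}$ consisting of all triples $(\Psi,P,Q)$ of the claimed shape, enlarged by its symmetric counterpart and by whatever auxiliary pairs are needed to be closed under the reducts arising in the simulation clause. Clauses~(1)--(3) of Definition~\ref{def:bisim} reduce to straightforward frame computations together with the abelian monoid laws of Definition~\ref{def:entailmentrelation}: for the right-identity law $\fr{P \pll \nil} = \fr{P} \ftimes \emptyframe \sequivalent \fr{P}$; associativity, commutativity, and scope extension of the parallel operator yield equivalent frames by the corresponding laws on $\ftimes$, using $a \freshin P \Rightarrow a \freshin \fr{P}$ for scope extension. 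Closure under $\Psi \ftimes \Psi'$ is immediate because the candidate relation is already parametric in $\Psi$. The simulation clause is then proved by induction on the transition derivation, matching each inference rule with its reassociated, commuted, or re-scoped counterpart.

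The restriction laws (6)--(9) are comparatively painless: by alpha-conversion one may take $a \freshin \alpha$ in every derivation, so \textsc{Scope} simply permutes past \textsc{Out}, \textsc{In}, \textsc{Case} (using $a \freshin \ve{\varphi}$ for the case law) and past itself, and frames on both sides coincide after an application of $\fr{(\nu a)P} = (\nu a)\fr{P}$. The replication law $!P \sim P \pll !P$ falls out directly from the shape of \textsc{Rep}, whose premise is a transition of $P \pll !P$; the candidate relation only needs to contain pairs of the stated form together with the reflexive diagonal to absorb derivatives.

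The main obstacle will be the parallel laws, in particular associativity and scope extension. The difficulty stems from the interplay between \textsc{Com}, \textsc{Par}, \textsc{Scope}, and \textsc{Open} and the frames they manipulate: in \textsc{Com} each component's transition is derived in an environment enriched by the frame of the other component, with freshness conditions on the frame's bound names relative to $\Psi$, the other component, the transition label, and the action. Re-deriving the transition after reassociating or commuting a parallel, or after pushing a restriction across it via \textsc{Scope}/\textsc{Open}, forces one to decompose the combined frame along the new structure and re-choose its bound names so that all side conditions remain satisfied, while appealing to Associativity, Commutativity and Identity for the resulting assertion equivalences. Conceptually this is just equivariance, alpha-equivalence of frames, and the abelian monoid laws, but it is the source of essentially all of the effort, and is precisely where the machine-checked Isabelle proofs described in Section~\ref{sect:formalisation} earn their keep.
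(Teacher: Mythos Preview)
Your proposal is correct and matches the paper's approach. The paper itself gives almost no detail for Theorem~\ref{thm:struct}, stating only that the proofs ``follow a similar pattern [to Theorem~\ref{thm:congr}(\ref{parl})], using induction over the lengths of the derivations of the transitions'' and deferring the details to the Isabelle formalisation and~\cite{magnus.johansson:thesis}; your outline is in fact more explicit than what the paper records, and correctly identifies the reassociation of frames in \textsc{Com}/\textsc{Par} and the bookkeeping for \textsc{Scope}/\textsc{Open} as the places where the abelian monoid laws and freshness reasoning carry the load.
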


\label{section:proof_examples}
The most awkward part of the proofs is for Theorem~\ref{thm:congr}(\ref{parl}),
and historically this is the proof that most often fails in calculi of this
complexity; the intricate correspondences between parallel processes and their
assertions are hard to get completely right. We give an outline of the proof
and cover in detail the simulation case where the parallel processes
communicate with each other. In the following we tacitly assume $\fr{P} =
\framepair{\frnames{P}}{\frass{P}}$, where $\frnames{P} \freshin P$, for any
agent $P$, unless otherwise noted.

We pick the candidate relation $\R =
\{(\Psi, (\nu \ve{a})(P \pll R),(\nu \ve{a})(Q \pll
R)) : P\bisim_{\Psi \ftimes \frass{R}} Q\}$ where $\ve{a}
\freshin \Psi$, and prove that $\R$ is a bisimulation. Moreover we assume that $\frnames{P}
\freshin \frnames{Q}, Q, \frnames{R}, R, \Psi$, and
$\frnames{R} \freshin P, Q, \Psi$, or, in other
words, that bound names are distinct from all free names and other bound names.
 Formally the proof is conducted by an
induction on the length of $\ve{a}$. The induction step is straightforward,
so we focus on the base case. The agent $P\parop R$ can operate either by $P$ or
$R$ doing individual actions, or by $P$ and $R$ communicating, where we cover
the latter case, as it is the most involved.

In this case we have, by the {\sc Com} rule, that $P$ does an input transition
($\framedtransempty{\Psi \ftimes \frass{R}}{P}{\inn{M}{N}}{P'}$), $R$ does an
output transition ($\framedtransempty{\Psi \ftimes
\frass{P}}{R}{\bout{\ve{a}}{K}{N}}{R'}$), and that the subjects of the
transitions are channel equivalent ($\Psi \ftimes \frass{P} \ftimes \frass{R}
\vdash M \sch K$). The resulting communication between $P$ and $R$ is thus
$\framedtransempty{\Psi}{P\parop R}{\tau}{(\nu\vec{a})(P'\parop R')}$.

To complete this step of the proof we need to find a $Q'$ such that
$\framedtransempty{\Psi}{Q\parop R}{\tau}{(\nu\vec{a})(Q'\parop R')}$, and
$(\Psi,\ (\nu\vec{a})(P'\parop R'),\ (\nu\vec{a})(Q'\parop R'))\in\R$.

The presence of assertions in the transitions complicates the proof. We know
that $P\bisim_\Psi Q$, and hence by
Definition~\ref{def:bisim}(\ref{def:case:bisimExtAss}) that
$P\bisim_{\Psi\ftimes\frass{R}} Q$. Since $\framedtransempty{\Psi \ftimes
\frass{R}}{P}{\inn{M}{N}}{P'}$, we can obtain a $Q'$ such that
$\framedtransempty{\Psi \ftimes \frass{R}}{Q}{\inn{M}{N}}{Q'}$ and
$P'\bisim_{\Psi\ftimes\frass{R}} Q'$. However, this transition cannot
communicate with $\framedtransempty{\Psi \ftimes
\frass{P}}{R}{\bout{\ve{a}}{K}{N}}{R'}$, since that transition is derived by the
assertion $\Psi \ftimes \frass{P}$, and not $\Psi \ftimes \frass{Q}$. Moreover,
$M$ and $K$ are channel equivalent by the assertion $\Psi \ftimes \frass{P}
\ftimes \frass{R}$, and not $\Psi \ftimes \frass{Q} \ftimes \frass{R}$, which
would be needed to derive the desired communication. In order to complete the
proof, we need a lemma which switches the occurrences of $\frass{P}$ to
$\frass{Q}$ in the transition of $R$, as well as in the channel equality.

Once a communication has been derived, we must prove that the corresponding
derivatives $(\nu\vec{a})(P'\parop R')$, and $(\nu\vec{a})(Q'\parop R')$ are in
the candidate relation $\R$. From the definition of $\R$ we get that this holds
if $P'\bisim_{\Psi\ftimes\frass{R'}} Q'$, but we only know that
$P'\bisim_{\Psi\ftimes\frass{R}} Q'$. In order to complete the proof, $P'$ and
$Q'$ must be bisimilar in the assertion $\Psi\ftimes\frass{R'}$, and not only in
$\Psi\ftimes\frass{R}$.

We provide lemmas which will address both of these obstacles in turn, after
which this proof will be concluded. Lemma \ref{lemma:simulate-COM}
simultaneously changes the assertion deriving the transition for $R$, and the
channel equality, and Lemmas \ref{lemma:transition-compensation} and
\ref{lemma:statEqBisim} ensure that the derivatives of the communicating agents
are in the candidate relation $\R$. Lemmas \ref{lemma:find-subject} and
\ref{lemma:subject-rewriting} are two generally applicable lemmas used to prove
Lemma \ref{lemma:simulate-COM}. We define $\subj{\bout{\ve{a}}{M}{N}} = M$ and
similarly for input actions.

The first lemma shows that given a finite set of names $B$ that are fresh for
$P$ we can
find a term $M$ channel equivalent to the subject of an action from $P$ whose
names are fresh for $B$.

\begin{lemma}[Find equivalent term]
\label{lemma:find-subject}
 \[\begin{array}{rl}
         & B \subseteq \mathcal{N} \land B\text{ finite } \land B \freshin P\\
   \land & \framedtransempty{\Psi}{P}{\alpha}{P'}\text{ where }\alpha\neq\tau\\
   \land & \frnames{P} \freshin \Psi, P, \subj{\alpha}, B \\
   \Longrightarrow & \exists M\quad . \quad B \freshin M\\
   & \quad \quad \quad \land \; \Psi \ftimes \frass{P}
\vdash M \sch \subj{\alpha}
  \end{array}\]
\end{lemma}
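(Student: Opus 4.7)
The plan is to proceed by induction on the derivation of $\framedtransempty{\Psi}{P}{\alpha}{P'}$. Since $\alpha \neq \tau$ the rule \textsc{Com} is excluded, so only the single-sided rules \textsc{In}, \textsc{Out}, \textsc{Case}, \textsc{Par}, \textsc{Scope}, \textsc{Open}, \textsc{Rep} need to be considered. In every inductive case the strategy is to apply the induction hypothesis to a subderivation, enlarging the set $B$ with any auxiliary names (bound names of frames, restricted names) that must additionally avoid the witness term.

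For the base cases \textsc{In} and \textsc{Out}, the agent has a prefix with subject $M$ satisfying $\Psi \vdash M \sch \subj{\alpha}$. The prefix subject $M$ is a subterm of $P$, so $B \freshin P$ immediately gives $B \freshin M$. Since the frame of a prefixed agent is $\emptyframe$, identity yields $\Psi \ftimes \frass{P} \sequivalent \Psi$, so $\Psi \ftimes \frass{P} \vdash M \sch \subj{\alpha}$ and $M$ is the required witness. For \textsc{Case} the transition is derived from some $P_i$; by the guardedness condition both $\frass{P}$ and $\frass{P_i}$ are equivalent to $\emptyframe$, and $B \freshin P_i$ since $P_i$ is a subterm of $P$, so the induction hypothesis applied to $P_i$ suffices directly. \textsc{Rep} is analogous: apply the hypothesis to $P_0 \pll !P_0$, whose frame is again trivial.

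For \textsc{Scope}, where $P = (\nu b)P_0$ and $b \freshin \alpha,\Psi$ (hence $b \freshin \subj{\alpha}$), apply the hypothesis to the subderivation on $P_0$ with $B' := B \cup \{b\}$. This yields $M$ with $b \freshin M$ and $\Psi \ftimes \frass{P_0} \vdash M \sch \subj{\alpha}$; since $b \freshin M,\subj{\alpha}$ the definition of frame entailment lifts this to $\Psi \ftimes (\nu b)\frass{P_0} = \Psi \ftimes \frass{P} \vdash M \sch \subj{\alpha}$. The \textsc{Open} case is identical, since opening only affects the object of the output, not its subject.

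The main obstacle is the \textsc{Par} case, because here the environmental assertion changes between conclusion and premise and the frames of both sides must be composed correctly. We have $P = P_1 \pll P_2$ with premise $\framedtransempty{\frass{P_2} \ftimes \Psi}{P_1}{\alpha}{P_1'}$ and freshness conventions ensuring $\frnames{P_2} \freshin P_1,\Psi,\subj{\alpha},B$ (enlarging $B$ and re-alpha-converting $\fr{P_2}$ if necessary). Apply the induction hypothesis to the subderivation with the enlarged set $B' := B \cup \frnames{P_2}$; this is admissible since $B' \freshin P_1$ and $\frnames{P_1} \freshin \frass{P_2} \ftimes \Psi,\subj{\alpha},B'$ by the bound-name conventions for Par. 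We obtain $M$ with $B' \freshin M$ and $(\frass{P_2} \ftimes \Psi) \ftimes \frass{P_1} \vdash M \sch \subj{\alpha}$. By commutativity, associativity and compositionality of $\ftimes$ this assertion is equivalent to $\Psi \ftimes (\frass{P_1} \ftimes \frass{P_2})$, and since $\frnames{P_1}\frnames{P_2} \freshin M$ (by choice of $B'$) and $\frnames{P_1}\frnames{P_2} \freshin \subj{\alpha}$ (by the hypothesis $\frnames{P} \freshin \subj{\alpha}$ and the freshness conventions), the frame entailment $\Psi \ftimes \fr{P} \vdash M \sch \subj{\alpha}$ holds, and $B \freshin M$ since $B \subseteq B'$.
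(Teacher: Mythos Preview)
Your approach is the same as the paper's: induction on the derivation, taking the prefix subject as the witness in the base cases. The paper's proof is literally one sentence to this effect.

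There is one small slip in your \textsc{Scope} case. You enlarge $B$ to $B' = B \cup \{b\}$ and then invoke the induction hypothesis on $P_0$; but the hypothesis requires $B' \freshin P_0$, and in general $b$ is free in $P_0$ (that is the whole point of the binder), so this step fails. The enlargement is in fact unnecessary: for $P = (\nu b)P_0$ one has $\fr{P} = \framepair{b\,\frnames{P_0}}{\frass{P_0}}$, so $\frass{P} = \frass{P_0}$ as \emph{assertions} and there is nothing to ``lift through'' a binder. Applying the induction hypothesis with the original $B$ already gives $\Psi \ftimes \frass{P_0} \vdash M \sch \subj{\alpha}$, which is verbatim the required $\Psi \ftimes \frass{P} \vdash M \sch \subj{\alpha}$. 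The same remark applies to your \textsc{Par} case: enlarging $B$ by $\frnames{P_2}$ is harmless there (since $\frnames{P_2} \freshin P_1$ by the rule's side condition) but also unnecessary, because the conclusion is an assertion entailment, not a frame entailment.
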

\begin{proof}
 A straightforward induction on the length of the derivation of the
transition. In the base case we
choose $M$ as the prefix in the agent.
\end{proof}

The next lemma shows that given a transition we can find another transition
whose subject is channel equivalent to the subject of the original transition
and that leads to the same derivative as the original transition.

\begin{lemma}[Rewrite subject]
\label{lemma:subject-rewriting}
 \[\begin{array}{rl}
      & \framedtransempty{\Psi}{P}{\bout{\ve{a}}{M}{N}}{P'}\\
   \land & \Psi \ftimes \frass{P} \vdash K \sch M\\
   \land & \frnames{P} \freshin \Psi, P, K, M\\
   \Longrightarrow & \framedtransempty{\Psi}{P}{\bout{\ve{a}}{K}{N}}{P'}
   \end{array}\]
The symmetric lemma where $P$ does an input is omitted.
\end{lemma}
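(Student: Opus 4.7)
The plan is to proceed by induction on the derivation of $\framedtransempty{\Psi}{P}{\bout{\ve{a}}{M}{N}}{P'}$. The intuition is that the action subject $M$ enters the derivation at a single \textsc{Out} leaf, where it is constrained to be channel-equivalent to a syntactic prefix, and is propagated unchanged through every other rule. We therefore only need to substitute $K$ for $M$ at that leaf (using Channel Symmetry and Channel Transitivity to rewrite the constraint) and re-apply each structural rule along the original derivation. Note that \textsc{Com} produces a $\tau$ action, so it cannot be the final rule of an output derivation and need not be considered.

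In the base case \textsc{Out}, $P = \out{L}{N}\sdot P'$ with $\ve{a}=\emptyset$, and the rule's side condition gives $\Psi \vdash L \sch M$. Since $\fr{P} = \emptyframe$, the Identity law reduces the hypothesis $\Psi \ftimes \frass{P} \vdash K \sch M$ to $\Psi \vdash K \sch M$. Channel Symmetry and Channel Transitivity of Definition~\ref{def:entailmentrelation} then yield $\Psi \vdash L \sch K$, so \textsc{Out} concludes $\framedtransempty{\Psi}{P}{\bout{\emptyset}{K}{N}}{P'}$.

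The inductive steps are routine. For \textsc{Case} and \textsc{Rep} the top-level frame is $\emptyframe$, so the induction hypothesis applies directly to the sub-derivation and the same rule is re-applied. For \textsc{Scope} with $P = (\nu b)P_1$, we have $\frnames{P_1} \subseteq \frnames{P}$ after alpha-conversion arranging $b \in \frnames{P}$, so $\frnames{P} \freshin \Psi, K, M$ supplies both the IH's freshness condition on $\frnames{P_1}$ and the side condition $b \freshin K$ needed to re-apply \textsc{Scope}; \textsc{Open} is analogous. The central case is \textsc{Par}, with $P = P_1 \parop P_2$ and premise $\framedtransempty{\frass{P_2} \ftimes \Psi}{P_1}{\bout{\ve{a}}{M}{N}}{P_1'}$. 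From $\Psi \ftimes \frass{P_1} \ftimes \frass{P_2} \vdash K \sch M$, the abelian monoid laws and compositionality of $\ftimes$ give $(\frass{P_2} \ftimes \Psi) \ftimes \frass{P_1} \vdash K \sch M$, which is the hypothesis needed for the IH on $P_1$; re-applying \textsc{Par} then delivers the conclusion, and the symmetric variant is identical.

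The main obstacle is purely bookkeeping: maintaining the disjointness and freshness invariants on $\frnames{P_1}$ and $\frnames{P_2}$ through the descent into \textsc{Par}, and ensuring that the bound name $b$ chosen in \textsc{Scope}/\textsc{Open} can be taken to lie in $\frnames{P}$ so that $b \freshin K$ follows from the assumption. This is mechanical but error-prone on paper, which is precisely the sort of argument that benefits from the authors' Nominal Isabelle development. The symmetric input statement is proved identically with \textsc{In} in place of \textsc{Out}.
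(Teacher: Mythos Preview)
Your proposal is correct and follows exactly the approach the paper takes: the paper's entire proof is the single sentence ``A straightforward induction on the length of the derivation of the transition.'' Your case analysis (with \textsc{Out} as base case using Channel Symmetry/Transitivity, and \textsc{Case}, \textsc{Rep}, \textsc{Scope}, \textsc{Open}, \textsc{Par} as inductive steps, noting that \textsc{Com} cannot arise) is precisely the unpacking of that induction, and your remarks on the freshness bookkeeping accurately identify where the routine but fiddly work lies.
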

\begin{proof}
 A straightforward induction on the length of the derivation of the
transition.
\end{proof}

We can now prove the lemma which allows us to simultaneously switch the assertions deriving a transition, as well as channel equality in a communication. This lemma looks a bit intimidating and the proof details can safely be skipped at a first
reading. It says that if $P$ and $Q$ are bisimilar and $P$ can communicate with
$R$ via the channel $K$, then there exists a channel $K'$ such that $Q$ can
communicate with $R$ via $K'$.

\begin{lemma}[Switching]
  \label{lemma:simulate-COM}
\[
 \begin{array}{rl} 
\ & P\bisim_{\Psi\ftimes\frass{R}} Q \\
  \wedge & \framedtransempty{\Psi \ftimes \frass{R}}{P}{\inn{M}{N}}{P'}\\
  \land & \framedtransempty{\Psi \ftimes \frass{P}}{R}{\bout{\ve{a}}{K}{N}}{R'}\\
  \land & \Psi \ftimes \frass{P} \ftimes \frass{R} \vdash K \sch M\\

  \land & \frnames{R} \freshin \frnames{P}, \frnames{Q}, \Psi, P, Q, R, K \\
  \land & \frnames{Q} \freshin \Psi, R, P, Q, M\\
  \land & \frnames{P} \freshin R,M,\Psi\\
  \Longrightarrow & \exists K' . \framedtransempty{\Psi \ftimes
\Psi_Q}{R}{\bout{\ve{a}}{K'}{N}}{R'}\\
  \land & \Psi \ftimes \Psi_Q \ftimes \Psi_R \vdash K' \sch M\\
  \land & \frnames{R} \freshin K'
 \end{array}
\]
There is also a symmetric lemma where $R$ does an input.
\end{lemma}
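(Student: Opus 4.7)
The plan combines Lemmas \ref{lemma:find-subject} and \ref{lemma:subject-rewriting} with the static equivalence provided by the bisimulation $P \bisim_{\Psi \ftimes \frass{R}} Q$. First I apply Lemma \ref{lemma:find-subject} to $R$'s output transition with freshness set $B = \frnames{P} \cup \frnames{Q}$. All of its hypotheses are discharged from those of the present lemma: $B \freshin R$ follows from $\frnames{P}, \frnames{Q} \freshin R$, and $\frnames{R}$ is fresh for $\Psi, \frass{P}, R, K, B$ by the stated assumptions. This produces a term $K'$ with $\frnames{P}, \frnames{Q} \freshin K'$ and $\Psi \ftimes \frass{P} \ftimes \frass{R} \vdash K' \sch K$. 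Since frames are identified up to alpha equivalence I may further take $\frnames{R}$ fresh for $K'$, securing the conclusion $\frnames{R} \freshin K'$.

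Channel transitivity together with the assumed $\Psi \ftimes \frass{P} \ftimes \frass{R} \vdash K \sch M$ gives $\Psi \ftimes \frass{P} \ftimes \frass{R} \vdash K' \sch M$. The bisimulation supplies the static equivalence $(\Psi \ftimes \frass{R}) \ftimes \fr{P} \sequivalent (\Psi \ftimes \frass{R}) \ftimes \fr{Q}$, and since $\frnames{P}, \frnames{Q} \freshin K' \sch M$ this entailment transfers to $\Psi \ftimes \frass{Q} \ftimes \frass{R} \vdash K' \sch M$, which is the second required conclusion. To produce the output transition, I first apply Lemma \ref{lemma:subject-rewriting} to $R$'s original transition, using the entailment $\Psi \ftimes \frass{P} \ftimes \frass{R} \vdash K' \sch K$, to obtain $\framedtransempty{\Psi \ftimes \frass{P}}{R}{\bout{\ve{a}}{K'}{N}}{R'}$. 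It then remains to swap the environment from $\Psi \ftimes \frass{P}$ to $\Psi \ftimes \frass{Q}$. For this I invoke an auxiliary environment-switching principle: if $\framedtransempty{\Psi_1}{R}{\alpha}{R'}$ and the frames $\Psi_1 \ftimes \fr{R}$ and $\Psi_2 \ftimes \fr{R}$ are statically equivalent (with frame binders suitably fresh), then $\framedtransempty{\Psi_2}{R}{\alpha}{R'}$. Instantiated at $\Psi_1 = \Psi \ftimes \frass{P}$ and $\Psi_2 = \Psi \ftimes \frass{Q}$, the premise is exactly the bisimulation's static equivalence, and the principle delivers the desired transition.

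The main obstacle is the inductive proof of this environment-switching principle. At Par and Com nodes the induction hypothesis must be reinvoked on sub-derivations whose accumulated environment has grown by the frame of an adjacent sub-agent, which forces us to transport the equivalence through $\ftimes$ using compositionality. At Scope and Open nodes, binders on $R$ pass through the derivation and freshness bookkeeping on $\frnames{R}$ and on the action-binders $\ve{a}$ has to be maintained so that the eventual entailment transfer at a leaf is legitimate. Only at the In, Out, and Case leaves is the equivalence used directly, to justify a single entailment check. The intricate interaction between compositionality, binder management, and the bisimulation's static equivalence is exactly what makes this proof, as remarked in the text, the most awkward in the paper.
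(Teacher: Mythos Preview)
Your decomposition into ``rewrite the subject once, then switch environments by a general principle'' has a genuine gap in the second phase.

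The environment-switching principle you invoke has as premise the frame equivalence $(\Psi\ftimes\frass{P})\ftimes\fr{R}\sequivalent(\Psi\ftimes\frass{Q})\ftimes\fr{R}$, i.e.\ an equivalence with binders $\frnames{R}$. But the bisimulation's static equivalence is $(\Psi\ftimes\frass{R})\ftimes\fr{P}\sequivalent(\Psi\ftimes\frass{R})\ftimes\fr{Q}$, with binders $\frnames{P}$ on one side and $\frnames{Q}$ on the other. These are \emph{not} the same statement, and since frame compositionality fails in general (Section~\ref{sec:frames}) you cannot pass from one to the other. Worse, even granting your premise, the induction breaks at \textsc{Scope}: when $R=(\nu b)R_0$ and the prefix subject $M_s$ inside $R_0$ mentions $b\in\frnames{R}$, the leaf entailment $M_s\sch K'$ is not fresh for $\frnames{R}$, so a frame equivalence with $\frnames{R}$ bound cannot transfer it. Your principle is therefore not provable as stated. (Relatedly, the claim that $\frnames{R}\freshin K'$ can be arranged ``by alpha-equivalence of frames'' is also not right: once Lemma~\ref{lemma:find-subject} has produced $K'$, alpha-converting $\frnames{R}$ would change $\frass{R}$, on which everything else depends.)

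The paper's proof proceeds instead by induction directly on the derivation of $R$'s transition, carrying the full bisimulation hypothesis. The key manoeuvre is in the \textsc{Scope} case: after the induction hypothesis yields a transition with some subject $K'$, the paper applies Lemma~\ref{lemma:find-subject} to \emph{$Q$'s input transition} (obtained from the bisimulation) with the freshness set $B=\{b\}\cup\frnames{R'}$, producing a new subject $K''$ fresh for the newly exposed binder $b$. Because $\frnames{P},\frnames{Q}\freshin R$ and hence fresh for every condition arising from $R$, the bisimulation's static equivalence (with \emph{its} binders $\frnames{P},\frnames{Q}$) can legitimately transfer the needed channel equivalences at each step; Lemma~\ref{lemma:subject-rewriting} then replaces $K'$ by $K''$ in $R$'s transition. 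Applying find-subject to $Q$ rather than to $R$ is what makes the binder bookkeeping close.
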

\proof
 By induction on the length of the derivation of the transition from $R$. We only look
at one base case and one induction step here. The other cases are similar.
\begin{enumerate}[\hbox to8 pt{\hfill}]
 \item\noindent{\hskip-12 pt\textsc{Out}:}\ In this case $R = \out{K_s}{N}.R'$ for some term 
$K_s$, and the transition is derived like this:
\[
\inferrule*[left=\textsc{Out}]
    {\Psi \ftimes \frass{P} \vdash K_s \sch K }
    {\framedtransempty{\Psi \ftimes
\frass{P}}{\out{K_s}{N}.R'}{\out{K}{N}}{R'}}
\]
Since $\frnames{P} \freshin \Psi, R$ we get that $\Psi \ftimes \fr{P}
\vdash
K_s \sch K_s$. This in turn gives us that $\Psi \ftimes \fr{Q} \vdash K_s \sch
K_s$, which means that $\Psi \ftimes \frass{Q}
\vdash K_s \sch K_s$. We then establish the first conjunct by:
\[
\inferrule*[left=\textsc{Out}]
    {\Psi \ftimes \frass{Q} \vdash K_s \sch K_s }
    {\framedtransempty{\Psi \ftimes
\frass{Q}}{\out{K_s}{N}.R'}{\out{K_s}{N}}{R'}}
\]
For the second conjunct, we have that $\Psi
\ftimes \frass{P} \vdash K_s \sch K$ and that $\Psi \ftimes \frass{P}
\ftimes \emptyframe \vdash K \sch M$ (since in this case $\frass{R}$ is
$\emptyframe$).
Identity and transitivity then give us that $\Psi \ftimes \frass{P}
\vdash K_s
\sch M$. Since $\frnames{P} \freshin R, M$ we have that $\Psi \ftimes \fr{P}
\vdash K_s \sch M$ and since $P$ and $Q$ are bisimilar we also have that $\Psi
\ftimes \fr{Q} \vdash K_s \sch M$. We finally get $\Psi \ftimes \frass{Q}
\vdash K_s \sch M$. The third conjunct is trivial since  $\frnames{R}$ is empty. 

 \item\noindent{\hskip-12 pt\textsc{Scope}:}\ In this case $R = (\nu b)R'$ for some name $b$ and the
transition is derived like this:
\[
\inferrule*[Left=\textsc{Scope}, right={$b \freshin \bout{\ve{a}}{K}{N},\Psi$}]
    {\framedtransempty{\Psi \ftimes \frass{P}}{R'}{\bout{\ve{a}}{K}{N}}{R''}}
    {\framedtransempty{\Psi \ftimes \frass{P}}{(\nu
b)R'}{\bout{\ve{a}}{K}{N}}{(\nu b)R''}}
\]
Let $b \freshin \frnames{P}, \frnames{Q}, P, Q$.
Note that by definition we have $\frass{(\nu b)R'} = \frass{R'}$. We also
have that \vfill\eject

\noindent$\frnames{(\nu b)R'} \freshin \frnames{P}, \frnames{Q}, \Psi, P, Q,
(\nu b)R', K
\Longrightarrow \frnames{R'} \freshin \frnames{P}, \frnames{Q}, \Psi, P,
Q, R', K$ and that $\frnames{(\nu b)R'} \freshin \frnames{P}, \frnames{Q} \land
\frnames{P}, \frnames{Q} \freshin (\nu b)R' \land b \freshin
\frnames{P}, \frnames{Q} \Longrightarrow \frnames{P},
\frnames{Q} \freshin R'$. From the induction hypothesis we then get that
$\framedtransempty{\Psi \ftimes \frass{Q}}{R'}{\bout{\ve{a}}{K'}{N}}{R''}$,
$\Psi \ftimes \frass{Q} \ftimes \frass{R'} \vdash M \sch K'$, and that
$\frnames{R'} \freshin K'$.

From the fact that $P$ and $Q$ are bisimilar we get that
$\framedtransempty{\Psi \ftimes \frass{(\nu b)R'}}{Q}{\inn{M}{N}}{Q'}$. Let $B
= \{b\} \cup \frnames{R'}$. By
Lemma \ref{lemma:find-subject} we learn that there exists a term $K''$ such
that $\Psi \ftimes \frass{(\nu b)R'} \ftimes \frass{Q} \vdash K'' \sch M$,
fulfilling the second obligation, and
that $B \freshin K''$. This gives us that
$\frnames{R'}, b \freshin K''$. By transitivity we then get that $\Psi \ftimes
\frass{(\nu b)R'} \ftimes \frass{Q} \vdash K' \sch K''$. We now use Lemma
\ref{lemma:subject-rewriting} to get that $\framedtransempty{\Psi \ftimes
\frass{Q}}{R'}{\bout{\ve{a}}{K''}{N}}{R''}$. Finally we do the following
derivation:
\[
\inferrule*[Left=\textsc{Scope}, right={$b \freshin
\bout{\ve{a}}{K''}{N},\Psi$}]
    {\framedtransempty{\Psi \ftimes \frass{Q}}{R'}{\bout{\ve{a}}{K''}{N}}{R''}}
    {\framedtransempty{\Psi \ftimes \frass{Q}}{(\nu
b)R'}{\bout{\ve{a}}{K''}{N}}{(\nu b)R''}}
\]
That $\frnames{(\nu b)R'} \freshin K''$ follows from $B \freshin K''$.\qed
\end{enumerate}

\noindent The following lemma proves that when an agent performs a transition, its
frame is extended with a new assertion ($\Psi'$ below):

\begin{lemma}
\label{lemma:transition-compensation}
If $\framedtransempty{\Psi}{R}{\inn{M}{N}}{R'}$ and $\frnames{R} \freshin
R,N,C$ where $C$ is a set of names, then $\exists \Psi',
\frnames{R'},\frass{R'}$ such that $\fr{R'}
= \framepair{\frnames{R'}}{\frass{R'}}\;\land\;
\frass{R} \ftimes \Psi' \sequivalent \frass{R'}\;\land\; \frnames{R'} \freshin
C,R'$.
\end{lemma}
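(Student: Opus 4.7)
The plan is to prove the lemma by rule induction on the derivation of $\framedtransempty{\Psi}{R}{\inn{M}{N}}{R'}$. The intuition is that the witness $\Psi'$ records precisely those assertions which were guarded in $R$ (hence absent from $\frass{R}$) but have become unguarded in $R'$; concretely these are the top-level assertions exposed by firing the input prefix in the base case, and their contribution is propagated outwards through the other rules using the monoid laws of Definition~\ref{def:entailmentrelation}.

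In the base case \textsc{In}, $R = \lin{K}{\ve{y}}{N'}\sdot P$, and the frame of any input prefix is $\emptyframe$ by definition, so $\frass{R} \sequivalent \unit$. We take $\Psi'$ to be the assertion part of $\fr{R'}$ itself; the identity law then yields $\frass{R} \ftimes \Psi' \sequivalent \frass{R'}$. Since frames are identified up to alpha-equivalence and have finite support, $\frnames{R'}$ can be chosen fresh for the finite set $C \cup \n(R')$. The \textsc{Case} rule is handled identically, since its frame is also $\emptyframe$ by definition.

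For the inductive cases, the pattern is to invoke the IH on the premise and carry the resulting $\Psi'$ through the outer constructor. For \textsc{Par} with $R = R_1 \parop R_2$ and $R' = R_1' \parop R_2$, the IH applied to the premise transition of $R_1$ (with the environmental assertion $\frass{R_2} \ftimes \Psi$, and exclusion set enlarged to $C \cup \n(R_2)$) yields $\Psi'_1$ with $\frass{R_1} \ftimes \Psi'_1 \sequivalent \frass{R_1'}$; compositionality and commutativity then give $\frass{R_1} \ftimes \frass{R_2} \ftimes \Psi'_1 \sequivalent \frass{R_1'} \ftimes \frass{R_2}$, so $\Psi' = \Psi'_1$ works. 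For \textsc{Scope} with $R = (\nu b)R_1$, the frame assertion is unchanged by the outer binder, so the IH witness transports directly. For \textsc{Rep}, since the syntactic restrictions on replication make both $\fr{\,!P}$ and $\fr{P \pll\, !P}$ equal to $\emptyframe$, the IH applied to the premise for $P \pll\, !P$ delivers the witness immediately.

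The main obstacle is not any single algebraic step but the systematic bookkeeping of freshness. At each inductive step the frame binders of $R'$ must be fresh not only for $R'$ itself but for the whole surrounding parallel or restricted context, which requires strengthening $C$ before invoking the IH to include the free names of sibling components and any binder introduced by the enclosing operator. This is always achievable because agents, terms, and assertions are finitely supported, so alpha-conversion of frames can push the binders off any finite exclusion set. Once this bookkeeping is in place, each inductive case reduces to a short calculation using the abelian monoid laws.
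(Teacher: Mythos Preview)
Your proposal is correct and matches the paper's approach exactly: the paper's proof is simply ``a straightforward induction on the length of the derivation of the transition,'' which is precisely the rule induction you carry out, and your case analysis (\textsc{In}, \textsc{Case}, \textsc{Par}, \textsc{Scope}, \textsc{Rep}) covers all rules that can derive an input transition. Your identification of the freshness bookkeeping as the main source of tedium, handled by enlarging $C$ before applying the induction hypothesis and alpha-converting frame binders off finite exclusion sets, is exactly right.
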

\proof
 A straightforward induction on the length of the derivation of the
 transition.\qed

Finally, we need a lemma which allows us to switch the environment for a bisimulation for an equivalent one.
\begin{lemma}
If $\Psi\frames P\bisim Q$ and $\Psi\sequivalent \Psi'$ then $\Psi'\frames P\bisim Q$
\label{lemma:statEqBisim}
\end{lemma}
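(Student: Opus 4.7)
The plan is to construct an explicit bisimulation witnessing $\Psi'\frames P\bisim Q$. Starting from some bisimulation $\mathcal{S}$ that witnesses $\Psi\frames P\bisim Q$, I define
$$\mathcal{R} = \{(\Psi_2, P, Q) : \exists\, \Psi_1.\; \Psi_1 \sequivalent \Psi_2 \;\wedge\; \mathcal{S}(\Psi_1, P, Q)\}$$
and aim to show that $\mathcal{R}$ is a bisimulation in the sense of Definition~\ref{def:bisim}. Membership of the target triple $(\Psi', P, Q) \in \mathcal{R}$ is then immediate by taking $\Psi_1 = \Psi$ and using the hypothesis $\Psi \sequivalent \Psi'$.

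The main obstacle, and the key auxiliary lemma, is transition invariance under assertion equivalence: if $\Psi_1 \sequivalent \Psi_2$ and $\framedtransempty{\Psi_1}{P}{\alpha}{P'}$ then $\framedtransempty{\Psi_2}{P}{\alpha}{P'}$. I would prove this by induction on the derivation in Table~\ref{table:struct-free-labeled-operational-semantics}. The rules \textsc{In}, \textsc{Out} and \textsc{Case} probe the environment only through a single entailment (of a channel equivalence or of $\varphi_i$), so the side condition transfers directly from $\Psi_1$ to $\Psi_2$ via the definition of $\sequivalent$. The inductive rules \textsc{Par}, \textsc{Com}, \textsc{Scope}, \textsc{Open} and \textsc{Rep} recurse with environments of the form $\Psi_1 \ftimes \frass{R}$ for some sub-agent $R$; here the compositionality requirement of Definition~\ref{def:entailmentrelation} lifts $\Psi_1 \sequivalent \Psi_2$ to $\Psi_1 \ftimes \frass{R} \sequivalent \Psi_2 \ftimes \frass{R}$, after which the induction hypothesis applies. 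The extra \textsc{Com} side-condition $\Psi \ftimes \frass{P} \ftimes \frass{Q} \vdash M \sch K$ is handled by two successive applications of compositionality.

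With this lemma in hand, the four clauses of Definition~\ref{def:bisim} are routine for $\mathcal{R}$. \emph{Static equivalence} of $\Psi_2 \ftimes \fr{P}$ and $\Psi_2 \ftimes \fr{Q}$ follows from $\Psi_1 \ftimes \fr{P} \sequivalent \Psi_1 \ftimes \fr{Q}$ (given by $\mathcal{S}$) and compositionality applied to $\Psi_1 \sequivalent \Psi_2$ under the frame binders (which we may choose fresh for both $\Psi_1$ and $\Psi_2$). \emph{Symmetry} is inherited from $\mathcal{S}$. For \emph{extension by an arbitrary $\Psi'''$}, compositionality gives $\Psi_1 \ftimes \Psi''' \sequivalent \Psi_2 \ftimes \Psi'''$, while the extension clause of $\mathcal{S}$ provides $\mathcal{S}(\Psi_1 \ftimes \Psi''', P, Q)$; together these witness $(\Psi_2 \ftimes \Psi''', P, Q) \in \mathcal{R}$. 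For \emph{simulation}, given $\framedtransempty{\Psi_2}{P}{\alpha}{P'}$ with $\bn{\alpha} \freshin \Psi_2, Q$, I first alpha-convert $\alpha$ so that in addition $\bn{\alpha} \freshin \Psi_1$ (possible since $\Psi_1$ has finite support). The auxiliary lemma transports the transition to $\framedtransempty{\Psi_1}{P}{\alpha}{P'}$; the simulation clause of $\mathcal{S}$ then yields $\framedtransempty{\Psi_1}{Q}{\alpha}{Q'}$ with $\mathcal{S}(\Psi_1, P', Q')$; one further use of the lemma transports this response back to $\framedtransempty{\Psi_2}{Q}{\alpha}{Q'}$; and $(\Psi_2, P', Q') \in \mathcal{R}$ with the same witness $\Psi_1$.
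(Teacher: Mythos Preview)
Your proposal is correct and follows essentially the same approach as the paper: define a candidate relation that closes bisimilarity under $\sequivalent$ on the environmental assertion, then verify the four clauses using compositionality of $\ftimes$ and the fact (proved by induction on the derivation) that transitions are invariant under replacing the environment by an $\sequivalent$-equivalent one. Your write-up is in fact more careful than the paper's sketch---you make the transition-invariance lemma explicit, and you handle the freshness side-condition $\bn{\alpha}\freshin\Psi_1$ by alpha-converting the residual, a step the paper leaves implicit.
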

\proof
The candidate relation for the bisimulation is $\mathcal{R} = \{(\Psi',\ P,\ Q) : \Psi\frames P\bisim Q\wedge\Psi\sequivalent\Psi'\}$.
The four cases are proved separately.
\begin{enumerate}[\hbox to8 pt{\hfill}]
\item\noindent{\hskip-12 pt\bf Case 1:}\ Follows from the fact that $\ftimes$ is compositional, where the bound names of the frames of $P$ and $Q$ are alpha-converted not to clash with $\Psi'$.
\item\noindent{\hskip-12 pt\bf Case 2:}\ $\mathcal{S}$ is trivially symmetric, since $\bisim$ and $\sequivalent$ are symmetric.
\item\noindent{\hskip-12 pt\bf Case 3:}\ Follows from the fact that $\ftimes$ is compositional.
\item\noindent{\hskip-12 pt\bf Case 4:}\ From the definition of $\bisim$ and the transition $\framedtransempty{\Psi}{P}{\alpha}{P'}$, we obtain a $Q'$. s.t. $\framedtransempty{\Psi}{Q}{\alpha}{Q'}$ and $\Psi\frames P'\bisim Q'$. By induction on the derivation of this transition, and the fact that $\Psi\sequivalent \Psi'$, we get that $\framedtransempty{\Psi'}{Q}{\alpha}{Q'}$. Moreover, since $\Psi\frames P'\bisim Q'$ and $\Psi\sequivalent \Psi'$ we have that $(\Psi',\ P', \ Q')\in\mathcal{S}$.\qed
\end{enumerate}\smallskip

\noindent With these lemmas in place we complete the proof of
Theorem~\ref{thm:congr}(\ref{parl}) commenced at the beginning of this section.
The case we are proving is when $P\pll R$ performs a communication. We must find
a corresponding transition from $Q\parop R$ such that the derivatives remain in
the candidate relation $\R$. The agents $P$ and $R$ can communicate using the
following derivation.
\[
\inferrule*[left=\textsc{Com}, right={$\inferrule{}{\ve{a}
\freshin R}$}]
 {\framedtransempty{\Psi_R \ftimes \Psi}{P}{\bout{\ve{a}}{M}{N}}{P'} \\
  \framedtransempty{\Psi_P \ftimes \Psi}{R}{\inn{K}{N}}{R'} \\
  \Psi \ftimes \Psi_P \ftimes \Psi_R \vdash M \sch K
  }
       {\framedtransempty{\Psi}{P \pll R}{\tau}{(\nu \ve{a})(P' \pll R')}}
\]
Our goal is to replace $P$ with $Q$ in the premises so that we can derive
the simulating transition. Let $\fr{Q} =  \framepair{\frnames{Q}}{\frass{Q}}$ be
such that $\frnames{Q} \freshin P, \frnames{R}, R, \Psi, M$.

We  use Lemma \ref{lemma:simulate-COM}, to obtain
$\framedtransempty{\Psi_Q \ftimes \Psi}{R}{\inn{K'}{N}}{R'}$ and
$\Psi \ftimes \Psi_Q \ftimes \Psi_R \vdash M \sch
K'$. Since $P$ and $Q$ are bisimilar we
have that $\framedtransempty{\Psi \ftimes
\frass{R}}{Q}{\bout{\ve{a}}{M}{N}}{Q'}$.
We then derive the following:
\[
\inferrule*[left=\textsc{Com}, right={$\inferrule{}{\ve{a}
\freshin R}$}]
 {\framedtransempty{\Psi_R \ftimes \Psi}{Q}{\bout{\ve{a}}{M}{N}}{Q'} \\
  \framedtransempty{\Psi_Q \ftimes \Psi}{R}{\inn{K'}{N}}{R'} \\
  \Psi \ftimes \Psi_Q \ftimes \Psi_R \vdash M \sch K'
  }
       {\framedtransempty{\Psi}{Q \pll R}{\tau}{(\nu \ve{a})(Q' \pll R')}}
\]\smallskip

\noindent We know that $P'\bisim_{\Psi \ftimes \frass{R}} Q'$ and by clause \ref{def:case:bisimExtAss} in the definition of bisimulation (extension of arbitrary assertion) that $P'\bisim_{\Psi \ftimes \frass{R} \ftimes \Psi'} Q'$ for any $\Psi'$. By Lemma \ref{lemma:transition-compensation}
we know that there exists a $\Psi''$ such that $\frass{R} \ftimes \Psi''
\sequivalent \frass{R'}$, so in particular, using Lemma \ref{lemma:statEqBisim}, we have that $P'\bisim_{\Psi\ftimes\frass{R'}} Q'$
We then conclude that $(\Psi, (\nu \ve{a})(P' \pll R'),(\nu \ve{a})(Q' \pll R'))
\in\R$.\qed

The proofs of theorems~\ref{thm:contextbisim},\ref{thm:congr}--\ref{thm:struct}
follow a similar pattern, using induction over the lengths of
the derivations of the transitions. The part we have just shown is the
most challenging. Further proofs are found in \cite{magnus.johansson:thesis}.

\section{Formalisation in Isabelle}\label{sect:formalisation}

As the complexities of calculi increase, the proofs become more complicated and therefore more error prone. In Section~\ref{sec:expressiveness} we discussed how both the applied {\pic} and the concurrent constraint {\pic} have turned out to be non-compositional. 
This hints at the complexity of the proofs and the difficulty of getting them right. 
Our proofs for psi-calculi are also sometimes long and intricate. For example,
the proof sketch of Theorem~\ref{thm:congr}(\ref{parl}), described in the previous section, is substantially more complicated than its corresponding proof for the {\pic}. However, we emphasise that the proof is not substantially different in structure: it is just a set of properties of transitions, all established by induction over the the definition of the semantics. In this, psi-calculi are simpler than many other calculi that rely on stratified definitions of the semantics with devices such as a structural congruence.

In order to ensure that proofs are correct, automated and interactive proof assistants or theorem provers can be used to formally verify the proofs with the aid of a computer. 
 We have completely formalised all results in Section~\ref{sec:bisimilarity},
with the exception of Theorem~\ref{thm:picoincide}, in the interactive theorem
prover Isabelle. %
To the best of our knowledge, no
calculus of this complexity has previously been formalised in a theorem prover. We have earlier~\cite{bengtsonparrow2007} formalised a substantial part of the
{\pic} meta-theory in Isabelle. This section will cover the main extensions needed to
formalise the framework for psi-calculi. More in-depth expositions are found
in \cite{BengtsonParrowTPHOLs2009,jesper.bengtson:thesis}.

\subsection{Alpha-equivalence}

The main difficulty with formalising any process algebra in a theorem prover is to reason about alpha-equivalence in a convenient way. When conducting manual proofs on paper this notion is often glossed over, and statements such as ``we assume any bound name under consideration to be sufficiently fresh'' are commonplace. For machine checked proofs this poses a problem. Exactly what does it mean for a bound name to be sufficiently fresh?

We use Nominal Isabelle \cite{U07:NominalTechniquesInIsabelleHOL} to formalise
datatypes with binders, and to reason about them up to alpha-equivalence; in
other words, all our proofs deal with alpha equivalence classes rather than with
particular representatives. 
As usual
alpha variants of agents are identified , so e.g. $(\nu a)P = (\nu b)((a\;b)\cdot P)$, when $b\freshin P$, and similarly for names bound in the input construct. 
Formally, name swapping on agents distributes over all constructors, and substitution on agents avoids captures by binders through alpha-conversion as usual. In that way Nominal Isabelle provides an alpha-equivalence class of agents where
the support of $P$ is the union of
the supports of the components of $P$, removing the bound names. This
corresponds to the names with a free occurrence in $P$. %

Frames contain binders and we reason about their alpha equivalence classes in
the same way.  Also, transitions contain binders. Consider the output transition
$\Psi\frames\trans{P}{\out{M}{(\nu \tilde{a})N}}{P'}$. To be completely formal,
as described in \cite{bengtsonparrow2007}, $\tilde{a}$ is a binding occurrence
with a scope that contains both $N$ and $P'$. We accomplish this by creating a
datatype containing both an action and the derivative process as follows.

\begin{definition}[Residuals]

A {\em residual} with the action $\alpha$ and the derivative $P'$, is written $\alpha\prec P'$. 
\end{definition}
Thus we have the following three forms of residuals:

\[\begin{array}{lllll}

\out{M}(\nu\vec{a}){N} \prec P'& \mbox{Output} \\
\inn{M}{N}                  \prec P'& \mbox{Input}  \\
\tau                      \prec P'& \mbox{Silent}
\end{array}\]
In the Output residual, $\vec{a}$ binds into both $N$ and $P'$.
In this way we get a nominal datatype of residuals where name swapping just
distributes to its components and the support is the free names. A transition is
then simply a pair consisting of an agent and a residual. Again, Nominal
Isabelle allows us to reason about alpha equivalence classes of transitions.
Typically a property of transitions is established by induction, with one case
for each rule. This means that we assume the property of the premise of the
rule, and must establish it for the conclusion. Since we work with alpha
equivalence classes it is enough to establish the property for one
representative of the alpha equivalence class. This formalises the principle
that we may always pick bound names fresh.

Datatypes for agents, frames and transitions in Nominal Isabelle require
sequences of binders, e.g.  in the input prefix and in the output action. 
It is important to reason about arbitrarily long binding sequences as atomic objects, otherwise there would be a constant need for inductive proofs over the length of these sequences. 
Nominal Isabelle only supports single binders, and we have therefore created infrastructure to reason about arbitrarily long binding sequences. 
When alpha-converting a binding sequence, we generate a name permutation $p$ which when applied to the sequence makes it sufficiently fresh. 
The same permutation is then applied to everything under the scope of the binders, for example:
\[
\inprefix{M}{\vec x}{N}.P = \inprefix{M}{p\cdot\vec x}{(p\cdot N)}.(p \cdot P)\quad\quad\mbox{if}\; p \subseteq \vec x\times (p\cdot\vec x)\ \mbox{and}\ (p\cdot\vec x)\ \freshin\ (N,\ P)
\]

\noindent The side condition of this alpha-conversion looks a bit intimidating, but intuitively $p$ swaps members of the original binding sequence to other names such that the resulting binding sequence meets the desired freshness constraints.
This style of alpha-conversion was first introduced by Urban and Berghofer, although to the best of our knowledge it is still unpublished. We cover it more extensively in \cite{BengtsonParrowTPHOLs2009}.

\subsection{Formalising parametric calculi}

The framework for psi-calculi is a parametric formalism. A psi-calculus agent
consists of terms, assertions and conditions. This is modelled in Isabelle by
creating a polymorphic datatype with three type variables. A psi-calculus agent
will thus have the type $(\alpha,\ \beta,\ \gamma)\ \mathtt{psi}$, where
$\alpha$, $\beta$, and $\gamma$ represents terms, assertions, and conditions
respectively. All members of these types need to have finite support.

Isabelle has excellent facilities for a parametric style of reasoning through the use of locales \cite{ballarin:types03}. Locales allow us to specify which functions must exist for the parameters, and which assumptions must hold on them. The entire proof structure of the meta theory is then built using the provided locale parameters. When creating a psi-calculus instance, the functions must be provided and the assumptions must be proved. Once this is done, all meta-theoretical proofs will be guaranteed to hold for the new instance.

One requirement from Section~\ref{sec:nominal} is that there is a substitution
function which substitutes terms for names in assertions, conditions and terms.
To this end, a locale is created with a substitution function of type
$\delta\to\mbox{\tt name list}\to\alpha\ \mathtt{list}\to\delta$, where the type
$\alpha$ will be what we use for terms, and the type $\delta$ can be any of the
three nominal sets. The locale contains the following assumptions, which
implement the requirements of a substitution function mentioned in
Section~\ref{sec:nominal}

\begin{center}
\begin{tabular}{@{}ll}
Equivariance: & $p\cdot(X[\vec{x}:=\vec{T}]) = (p\cdot X)\big[(p\cdot\vec{x}):=(p\cdot\vec{T})\big]$\\
Freshness: 
\; &if $\; \vec{x}\subseteq\n(X)$ and $a\freshin X[\vec{x}:=\vec{T}]$ then $a\freshin\vec{T}$\\
Alpha-equivalence: & if $p\subseteq \vec{x}\times(p\cdot \vec{x})$ and $(p\cdot\vec{x})\freshin X$ then\\
\; & \qquad $X[\vec{x}:=\vec{T}] = (p\cdot X)[(p\cdot\vec{x}):=\vec{T}]$
\end{tabular}
\end{center}

The assumptions on this locale are straightforward. As all functions in any
nominal formalisation, substitution must be equivariant.  Freshness is a reformulation of requirement~1 in Section~\ref{sec:nominal}.  Similarly, Alpha-equivalence is requirement~2. Intuitively this means that the vector being substituted is
switched to one which is sufficiently fresh. As an example of its use, consider the {\sc Input} rule.
\[
\inferrule*[Left=\textsc{In}]
    {\Psi \vdash M \sch K }
    {\framedtransempty{\Psi}{\lin{M}{\ve{y}}{N}.P}{\inn{K}{N}\lsubst{\ve{L}}{\ve{y}}}{P\lsubst{\ve{L}}{\ve{y}}}}
\]

\noindent If a proof requires the input agent to be alpha-converted to $\lin{M}{p\cdot\ve{y}}{(p\cdot N)}.(p\cdot P)$ such that $p\cdot\vec{y}$ is sufficiently fresh, it is necessary to convert $N\lsubst{\ve{L}}{\ve{y}}$ to $(p\cdot N)\lsubst{\ve{L}}{(p\cdot\ve{y})}$, and $P\lsubst{\ve{L}}{\ve{y}}$ to $(p\cdot P)\lsubst{\ve{L}}{(p\cdot\ve{y})}$ to still be able to derive the input transition. The last constraint accomplishes this. This locale is then instantiated three times: for terms, assertions and conditions respectively.

The nominal morphisms in Definition~\ref{def:parameters} are modelled in a locale  which specifies their existence and equivariance properties. Inside this locale we also define equivalence for assertions and frames and provide an infrastructure for reasoning about equivalence.
This locale is then extended with the requisites  in Definition~\ref{def:entailmentrelation}.

Finally, the substitution locale is combined with the locale for equivalence to form an environment in which the rest of the theories can be proved. The locales offer a very intuitive way of reasoning about parametric systems, and without them this formalisation would have been very hard.

\subsection{Encoding partial operators}

In Definition~\ref{def:agents}, there is a well-formedness condition that all agents occurring under a Case or Replication operator must be guarded. Formally, this means that these operators are not total. For example, $\pass\Psi$ is an agent but $!\pass\Psi$ is not.

To represent this in Isabelle, we take the technically easiest approach to augment the {\textsc {Case}} and the {\textsc {Rep}}-rules of the operational semantic with a premise that the agents they operate on are guarded. In effect this allows non well-formed agents, but they have no transitions and are all bisimilar to $\nil$. All Isabelle proofs hold for all agents, so in particular they hold for all well-formed agents. Therefore the Isabelle formalisation establishes the theorems presented in this paper. A few lemmas, for example  that bisimilarity is preserved by Replication, need an extra premise that the agents are guarded, but in the vast majority of lemmas  the necessary properties follow from the operational semantics.

An alternative would be to constrain the datatype representing agents to well
formed agents and thus ensure that all inhabitants of that type meet the
required constraints. This more closely resembles Definition~\ref{def:agents},
and would be the method of choice for use with a theorem prover such as Coq that
supports dependent typing. There the well-formedness conditions can be
integrated into the psi-datatypes, i.e.\ for all proofs we can assume that we
are only dealing with well-formed agents. The downside of this approach is that
whenever an agent is constructed, a proof that it is well-formed must also be
supplied.

A third option to encode partial operators would be to decorate all lemmas which use the well formedness property with an assumption that the agents are well formed. We avoided this since it would clutter up a significant amount of lemmas with extra premises.
\subsection{Results and experiences}

Using Isabelle to formalise the proofs for psi-calculi in parallel to its development has turned out to be invaluable, and we would certainly not have finished successfully without it.
Throughout the development we have uncountable times stumbled over slightly incorrect definitions and not quite correct lemmas, prompting frequent changes in the framework. For example, our mistake in \cite{johansson.parrow.ea:extended-pi} mentioned in Section~\ref{sec:illustrativeexamples} was found during proof mechanisation and would probably not have been found at all without it; at that time we had completed a manual ``proof" that turned out incorrect. The Isabelle formalisation gives us a high degree of confidence in the proved theorems, and equally important, it gives us a repository of proofs and proof strategies that can be re-used when some detail needs to change. Finding out which ramifications a change has on the proofs is quick and straight forward. With manual proofs, changing a detail would mean the boring and dangerously error prone process of going over each proof by hand.

As just one example, in a previous version, the {\sc Case} rule looked as follows:
\[
\inferrule*[left={\textsc{Old-Case}}]
    {\Psi \vdash \varphi_i}
    {\framedtransempty{\Psi}{\caseonly{\ci{\ve{\varphi}}{\ve{P}}}}{\tau}{P_i}}
\]

\noindent
In this rule, the choice of which branch to take in a {\bf case} statement yields an internal action, after which the process $P$ evaluates as usual. An implication is that the requirement that $P$ is guarded can be omitted. We initially adopted this rule since it admits simpler induction proofs. At a quite late stage we decided to change it to the present rule, since this more closely resembles what is used in similar calculi. The change prompted a rework of the entire proof tree from the semantics and up. The total effort was approximately eight hours, and we now know that the new rule does not cause any problems.

Currently we have formally proved
theorems~\ref{thm:contextbisim},\ref{thm:congr}--\ref{thm:struct} using Isabelle,
including all supporting lemmas.
The entire implementation in Isabelle is about 18000 lines of code. It
includes infrastructure for smooth treatment of binding sequences, and it has
developed gradually over two years. The total effort for the present framework
is hard to assess, since it has followed us through many failed attempts and
false starts. Once in place the marginal effort of formalising more results is
manageable. As an example, the total effort in proving
Theorem~\ref{thm:contextbisim}, which was one of the last things we implemented,
was less than one day.

\section{Conclusion and Further Work}
\label{sec:further}
We have defined a framework for mobile process calculi, parametrised on nominal
types for data terms and for a logic to express assertions and conditions. The  expressiveness surpasses the most advanced competing calculi. The semantics is a single inductive definition, which means that proofs are comparatively easy. We have fully formalised the framework in the interactive theorem prover Isabelle, which gives us full confidence in our results on bisimulation and provides a readily available infrastructure for conducting proofs of many instances and variants.

In~\cite{johansson.victor.ea:fully-abstract} we develop a symbolic semantics and bisimulation equivalence, and prove full abstraction with regards to~$\sim$.  This kind of semantics is essential for reducing the state space explosion when exploring transitions and comparing for equivalence, making it ideal for use in automated tools. In~\cite{johansson.bengtson.ea:weak-equivalences} we explore weak bisimulation equivalence, where
$\tau$ actions are considered unobservable. Our results indicate
that the presence of assertions significantly complicates the definitions, in
contrast to the situation with strong bisimulation. Interestingly, for
psi-calculi that satisfy weakening (i.e. $\Psi \vdash \varphi \Longrightarrow
\Psi \ftimes \Psi' \vdash \varphi$) the definitions can be greatly simplified.
We also investigate a barbed equivalence and determine what kind of observations
are needed for full abstraction. The current development of psi-calculi is covered in~\cite{magnus.johansson:thesis} and the associated formalisation in Isabelle is accounted for in~\cite{jesper.bengtson:thesis}.

We intend to explore typed psi-calculi. One idea is to find out what properties the type system
must have in order for the usual theorems such as subject reduction to hold. We are also considering variants of psi-calculi with broadcast communication, where one sender may communicate directly with several receivers, and higher order communication, where agent definitions can be transmitted and executed by the recipient. It seems that both these variants can be accommodated with very small changes of the semantics and that large parts of our formal proofs carry over.

\bibliographystyle{alpha}
\bibliography{bibliography,pi}

\end{document}